\documentclass[10pt,a4paper]{article}

\usepackage{amsfonts}
\usepackage{amsmath}
\usepackage{amssymb}
\usepackage{amsthm}
\usepackage{bussproofs}
\usepackage{enumerate}
\usepackage{mdwlist}
\usepackage{mathrsfs}
\usepackage{tabularx}

\usepackage{fullpage}

\input xy
\xyoption{all}

\newtheorem{theorem}{Theorem}[section]
\newtheorem{proposition}[theorem]{Proposition}

\newtheorem{corollary}[theorem]{Corollary}

\theoremstyle{remark}
\newtheorem{definition}[theorem]{Definition}
\newtheorem{remark}[theorem]{Remark}
\newtheorem{example}[theorem]{Example}

\def\NN{\mathbb{N}}

\def\BB{\mathbb{B}}


\newcommand{\pair}[1]{\langle #1 \rangle}

\newcommand{\initSeg}[2]{[#1](#2)}

\newcommand{\ext}[1]{\widehat{#1}}

\newcommand{\Eloise}{\exists\mbox{loise}}
\newcommand{\Abelard}{\forall\mbox{belard}}

\newcommand{\ps}[1]{\bar{#1}}
\newcommand{\wideps}[1]{\overline{#1}}
\newcommand{\one}{{\bf 1}}


\newcommand{\modcont}{\mathscr{C}^\omega}



\newcommand{\T}{\mathcal{T}}
\newcommand{\N}{\NN}
\newcommand{\Bool}{\BB}
\newcommand{\B}{\BB}
\newcommand{\F}{\mathsf{F}}


\newcommand{\HAomega}{{\sf HA}^\omega}
\newcommand{\PAomega}{{\sf PA}^\omega}
\newcommand{\EHAomega}{{\sf E\mbox{-}HA}^\omega}
\newcommand{\EPAomega}{{\sf E\mbox{-}PA}^\omega}



\newcommand{\CONT}{{\sf Cont}}

\newcommand{\AC}{{\sf AC}}

\newcommand{\DC}{{\sf DC}}

\newcommand{\DNS}{{\sf DNS}}

\newcommand{\OI}{{\sf OI}}


\newcommand{\R}{{\sf R}}

\newcommand{\wR}{{\sf wR}}
\newcommand{\wI}{{\sf wI}}
\newcommand{\IBR}{{\sf BarR}}

\newcommand{\BR}{{\sf BkR}}

\newcommand{\SBR}{{\sf SBR}}
\newcommand{\BBC}{{\sf BBC}}

\newcommand{\MBR}{{\sf MBR}}
\newcommand{\IPS}{{\sf IPS}}

\newcommand{\UR}{{\sf UpR}}
\newcommand{\OR}{{\sf Open}}


\newcommand{\zero}{0}


\newcommand{\mr}{{\; mr \;}}
\newcommand{\least}{\mu}

\newcommand{\at}{{\; \oplus \;}}

\newcommand{\update}[3]{{#1}_{#2}^{#3}}

\newcommand{\backI}{{\sf BkI}}

\newcommand{\dom}{{\rm dom}}

\newcommand{\PS}{{\sf PS}}

\newcommand{\oDC}[1]{{#1}\mbox{-}{\sf DC}_{\scriptsize \sf seq}}
\newcommand{\suc}{{succ}}

\newcommand{\dc}[2]{{#1}^{#2}}
\newcommand{\initSegss}[2]{\{#1\}({#2})}
\newcommand{\initSegsss}[2]{\langle #1 \rangle({#2})}

\newcommand{\negt}{N}

\begin{document}

\title{Parametrised bar recursion: A unifying framework for realizability interpretations of classical dependent choice}

\author{Thomas Powell\\ Institute of Computer Science, University of Innsbruck\footnote{\textbf{Email:} Thomas.Powell@uibk.ac.at \; \; \; \textbf{Tel:} +43 512 507 53293}}
\date{\today}
\maketitle

\begin{abstract}During the last twenty years or so a wide range of realizability interpretations of classical analysis have been developed. In many cases, these are achieved by extending the base interpreting system of primitive recursive functionals with some form of bar recursion, which realizes the negative translation of either countable or countable dependent choice. In this work we present the many variants of bar recursion used in this context as instantiations of a general, parametrised recursor, and give a uniform proof that under certain conditions this recursor realizes a corresponding family of parametrised dependent choice principles. From this proof, the soundness of most of the existing bar recursive realizability interpretations of choice, including those based on the Berardi-Bezem-Coquand functional, modified realizability and the more recent products of selection functions of Escard\'o and Oliva, follows as a simple corollary. We achieve not only a uniform framework in which familiar realizability interpretations of choice can be compared, but show that these represent just simple instances of a large family of potential interpretations of dependent choice principles.

\

\noindent\textbf{Keywords:} Dependent choice, Modified realizability, Open induction, Bar recursion, Continuous functionals.\end{abstract}

\section{Introduction}
\label{sec-introduction}

One of the central problems of mathematical logic and computer science is to understand the constructive meaning of non-constructive proofs. From the early 20th century onwards a rich variety of constructive interpretations of classical logic have been developed, together with techniques for extracting computational content from proofs. The majority of these techniques initially deal with proofs in formal systems of classical predicate logic or weak subsystems of mathematics such as Peano arithmetic. In order to interpret stronger non-constructive proofs from subsystems of mathematical analysis which contain principles such as the axiom of dependent choice,
\begin{equation*}\DC \; \colon \; \forall n,x^\rho\exists y^\rho A_n(x,y)\to\exists f^{\NN\to \rho}\forall n A_n(f(n),f(n+1)),\end{equation*}
these techniques must typically be adapted and extended - a process which tends to be highly non-trivial.

In this paper we focus on just one method of giving a computational interpretation to proofs: namely classical modified realizability. A long established way of extending this technique to deal with choice principles is to introduce some kind of bar recursion to the usual interpreting system of primitive recursive functionals. Bar recursive interpretations of choice principles are among the most widely known and studied methods for giving a computational interpretation to classical analysis, and numerous instances of this method can be found in the literature. However, both the particular variant of bar recursion used and the form of choice that it realizes differs from case to case, and it is not well-understood how these variants compare as realizers. This is compounded by the fact that each variant was typically devised in a distinct setting, and both their existence and correctness often established using a slightly different method of proof, obscuring the inherent similarities between them all.

The main contribution of this paper is to construct a general bar recursive term which contains several free parameters. We show that whenever these parameters obey certain conditions, the term can be used to realize the negative translation of a correponding variant of countable dependent choice, and that moreover essentially all of the known variants of bar recursion used to extend modified realizability to classical analysis appear as simple instantiations of the parameters.

Our motivation for this is twofold. Firstly, we obtain a clear, unifying framework in which the differences between existing variants of bar recursion and their correctness proofs are essentially reduced to simple structures on the natural numbers, and thus their behaviour as realizers is more easily compared. Secondly, in doing this we greatly generalise the interpretation of classical analysis by giving not one but a whole class of realizers of choice principles, which can be freely chosen between to suit the problem at hand. Having a such a wide range of options in front of us means that in concrete instances of program extraction we are given the potential to tailor our realizer to the situation at hand and extract computational content which is more efficient and semantically meaningful.

\subsection{Realizability interpretations of classical analysis: A brief overview}
\label{sec-introduction-history}

This work is designed to be as self-contained as possible, and in particular no prior knowledge of bar recursion is assumed. However, because we are partly motivated by the desire to unify existing interpretations, we provide here a very short summary of some of the best known variants of bar recursion that we aim to bring together. We simply state without explanation their defining equations: the purpose here is not to treat in any detail the exact meaning of these objects, but to allow the reader to see certain key features of the recursion that will be dealt with in a more general setting later.

The idea of interpreting countable choice with bar recursion was first established not for realizability interpretations, but for G{\"o}del's slightly more intricate Dialectica interpretation, in a fundamental paper of Spector in 1962 \cite{Spector(1962.0)}. Here, bar recursion was given as the schema
\begin{equation}\label{eqn-intro-SBR}\SBR(Y,G,H,s^{X^\ast})=\begin{cases}G(s) & \mbox{if $Y(\ext{s})<|s|$}\\ H(s,\lambda x\; . \; \SBR(Y,G,H,s\ast x)) & \mbox{otherwise}.\end{cases}\end{equation}
The key details here are that $s\colon X^\ast$ is a finite sequence, $s\ast x$ denotes the extension of $s$ with the object $x$, $\ext{s}$ its canonical extension as an infinite sequence, and the output type of $Y$ is a natural number. Spector's bar recursion is therefore a form of `backward recursion' in which recursive calls are made on extensions $s\ast x$ of the argument $s$. The parameter $Y$ is responsible for terminating the recursion, based on the assumption that the underlying tree barred by sequences satisfying $Y(\ext{t})<|t|$ is well-founded - a fact that is true in all continuous models and also in non-continuous structures such as the strongly majorizable functionals \cite{Bezem(1985.0)}. What this means is that for any sequence $s\ast x_1,s\ast\pair{x_1,x_2},\ldots$ of recursive calls made in the computation of $\SBR(Y,G,H,s)$, there is always some $N$ such that $Y(\widehat{s\ast\pair{x_1,\ldots,x_N}})<|s|+N$, and thus $\SBR(Y,G,H,s)$ is a well-defined value.

An early adaptation of this idea to realizability was given by Berardi et al. \cite{BBC(1998.0)}, in which an interpretation for dependent choice broadly similar but somewhat simpler than Spector's variant of bar recursion was established. This realizer was later reformulated in the more standard framework of modified realizability in \cite{BergOli(2005.0)}, where it was given its now familiar name \emph{modified bar recursion} and was defined as
\begin{equation}\label{eqn-intro-MBR}\MBR(Y,H,s)=_\NN Y(s\at\lambda n^\NN.H(s,\lambda x\; . \; \MBR(Y,H,s\ast x))).\end{equation}
Here $s$ is again a finite sequence, $s\at\alpha$ denotes the infinite sequence $\alpha$ overwritten by $s$, and the outcome type of $Y$ is a natural number. This restriction on the type of $Y$ ensures that in continuous models $Y$ only looks at a finite amount of information from its input sequence. In other words, given a sequence of recursive calls $s\ast x_1,s\ast\pair{x_1,x_2},\ldots$ in the computation of $\MBR(Y,H,s)$, there is some point $N$ such that the value of $q(s\ast \pair{x_1,\ldots,x_N}\oplus\ldots)$ is determined based only on $s\ast \pair{x_1,\ldots,x_N}$, and so no further recursive calls are necessary. Thus, like Spector's variant of bar recursion (\ref{eqn-intro-SBR}), modified bar recursion is carried out over some underlying well-founded tree, although unlike Spector's bar recursion, this tree is not computable but it \emph{implicitly} well-founded by some kind of continuity argument.\footnote{Indeed, a fact closely related to this observation, proved in \cite{BergOli(2006.0)}, is that (\ref{eqn-intro-MBR}) is not S1-S9 computable in the total continuous functionals and is therefore strictly stronger than (\ref{eqn-intro-SBR}).}

In addition to what later became modified bar recursion, \cite{BBC(1998.0)} contains a striking realizer for (the negative translation of) countable choice, which took from Spector only the basic idea of backward recursion, replacing the sequential bar recursive calls $s\mapsto s\ast x$ of (\ref{eqn-intro-SBR}) and (\ref{eqn-intro-MBR}) with a symmetric updating $u\mapsto\update{u}{n}{x}$ of finite partial functions. This realizer was again put into a standard realizability framework in \cite{Berger(2002.0),Berger(2004.0)}, and will be referred to here as the Berardi-Bezem-Coquand, or BBC-functional. Its defining equation (following \cite{Berger(2002.0)}) is given by
\begin{equation}\label{eqn-intro-BBC}\BBC(Y,H,u)=_\NN Y(u\at\lambda n.H(n,\lambda x\; . \; \BBC(\update{u}{n}{x}))).\end{equation}
Here, $u$ is a finite partial function and $\update{u}{n}{x}$ is the domain-theoretic extension of $u$ with value $x$ at input $n$. As with modified bar recursion, $\BBC$ terminates by continuity of $Y$, although proving this is somewhat technical and as shown in \cite{Berger(2004.0)} is most elegantly done with Zorn's lemma in the form of open induction. It has been argued that (\ref{eqn-intro-BBC}) provides a computational interpretation of countable choice that is superior to standard bar recursion in that it is `demand driven', in the sense that to compute an approximation for the $n$th point in a choice sequence it is not automatically necessary to compute approximations for $1,\ldots,n-1$ first \cite{BBC(1998.0)}.

More recently, a family of new variants of bar recursion known as \emph{products of selection functions} were developed and explored by Escard{\'o} and Oliva, beginning in \cite{EscOli(2010.0)}. One of these, the so-called \emph{implicit product of selection functions}, was shown in \cite{EscOli(2012.0)} to realize not only the negative translation of countable dependent choice but more generally a dependent version of the so called $J$-shift arising from the Pierce translation. The product of selection functions is distinguished by the fact that it incorporates course-of-values recursion into its bar recursive calls. It can be formulated as
\begin{equation}\label{eqn-intro-IPS}\IPS(Y,H,s)=_\NN Y(s\at\lambda n. H(t_n,\lambda x\; . \; \IPS(Y,H,t_n\ast x)))\end{equation}
where $t_n$ is the sequence of length $n$ primitive recursively defined by 
\begin{equation*}(t_n)_i=s_i\mbox{ if $i<|s|$, else $H(t_i,\lambda x\; . \; \IPS(Y,H,t_i\ast x))$ for $i<n$}.\end{equation*}
In addition to solving the modified realizability interpretation of choice, this form of bar recursion has deep links to game theory as a functional that computes optimal strategies in a class of higher-type, continuously well-founded sequential games \cite{EscOli(2011.0)}. This provides a highly illuminating bridge between the computational content of the axiom of choice and the world of game theory, and is exploited in e.g. \cite{OliPow(2012.1)} to give a game-theoretic constructive interpretation of Ramsey's theorem.

Of course, this list is by no means exhaustive, and several further variants of (\ref{eqn-intro-MBR})-(\ref{eqn-intro-IPS}) have been devised for realizing choice principles. 
For example, a realizer of the refined $A$-translation of a seqential variant of dependent choice is given in \cite{Seisenberger(2008.0)}, and is used in \cite{Seisenberger(2003.0)} to extract a realizer for Higman's lemma. Similarly, forms of bar recursion closely related to (\ref{eqn-intro-MBR}) have been used for realizing dependent choice in a range of settings, including Parigot's $\lambda\mu$-calculus \cite{BloRib(2013.0)} and in the context of realizability toposes \cite{Streicher(2014.0)}.

Nevertheless, all of these recursors share two important features in common:

\begin{enumerate}

\item they take as input some partial object $t$, and make recursive calls over extensions of this object;

\item they terminate because the value of $Y(\alpha)$ only depends on a finite portion of the input $\alpha$.

\end{enumerate}

In fact, one could go as far as to say that (\ref{eqn-intro-MBR})-(\ref{eqn-intro-IPS}) along with their many other variants are essentially the same realizer, with the single exception that recursive calls are made using slightly different patterns. In this paper we make this idea precise, and show that the basic recipe used to form the realizers mentioned above can be used to construct, in a completely uniform way, an infinite class of bar recursive functionals, each one of which can be used to interpret a particular set of variants of the axiom of dependent choice.

Note that we restrict our attention to interpretations of analysis based on modified realizability, and in particular bar recursors of the form (\ref{eqn-intro-SBR}) which arise from the Dialectica interpretation do not fit into our framework. Nevertheless, we believe that the basic ideas behind this work could be readily lifted to the Dialectica interpretation (see for example the author's recent article with P. Oliva \cite{OliPow(2014.0)} on a symmetric form of (\ref{eqn-intro-SBR}) which updates partial funtions similarly to the BBC-functional), and may also be helpful in linking proof interpretations to more direct computational interpretations of analysis, such as the learning-based realizabilities of e.g. \cite{Aschieri(2011.1),Avigad(2002.0)}.

\subsection{Outline of the paper}
\label{sec-introduction-outline}

After setting up our basic formal systems in Section \ref{sec-formal}, we define in Section \ref{sec-open} a general principle of backward induction which will be used throughout the paper in order to prove the correctness of our realizers. Backward induction replaces principles such as bar induction or dependent choice which are more typically used on the meta-level to prove correctness of realizers, and is preferred here due to its far greater flexibility. We also define an analogous schema of backward recursion which will be used to construct our parametrised realizers. This section is strongly influenced by Berger \cite{Berger(2004.0)}, and our formulation of backward induction is similar to but slightly more general than his update induction.

Sections \ref{sec-countable} and \ref{sec-dependent} form the core of the paper, in which we define our parametrised realizer and use it to realize dependent choice. In Section \ref{sec-countable} we restrict ourselves to the simpler double negation shift: While this section's main result, Theorem \ref{thm-main-AC}, is eventually subsumed by the later Theorem \ref{thm-main-DC}, it is instructive to first focus on the double negation shift so that the main ideas of Section \ref{sec-dependent} can be appreciated. In the sequel we introduce a family of dependent choice principles parametrised by a well-founded ordering $\lhd$ on the natural numbers, and show that the negative translation of each of these priniciples can be realized by our parametrised bar recursor under certain conditions. Theorem \ref{thm-main-DC} gives essentially all the results listed in Section \ref{sec-introduction-history} as a Corollary, and vastly generalises all of them.

We conclude on a more informal level, and in Section \ref{sec-semantics} discuss a potential semantic interpretation of our parametrised realizer, inspired by the game-theoretic semantic reading of the Berardi-Bezem-Coquand functional in \cite{BBC(1998.0)}.

\section{Preliminaries}
\label{sec-formal}

Throughout this paper we work in variant of extensional Heyting/Peano arithmetic in all finite types, which will be essentially the $\EHAomega$/$\EPAomega$ as defined in e.g. \cite{Kohlenbach(2008.0),Troelstra(1973.0)}, but based on a slightly richer type system.

\subsection{The finite types}
\label{sec-formal-types}

For us, the finite types consist of base types $\N$ and $\B$ for natural numbers and booleans, and are build from the formation of function types $\rho\to\tau$, product types $\rho\times\tau$, and finite sequence types $\rho^\ast$. We sometimes use the abbreviation $\tau^\rho$ for $\rho\to\tau$.

A \emph{discrete} type is any type $\tau$ which can be encoded in $\N$, and so in all standard models $\B$, $\N$, $\B\times \N$, $\N^\ast$ etc. are discrete types but $\N\to \N$ is not. 

\subsection{The theory $\EHAomega$}
\label{sec-formal-theory}

Terms of $\EHAomega$ are typed lambda terms formed by application and abstraction, and include variables for each type, the usual constructors and deconstructors for product and sequence types, the arithmetic constants $0\colon \N$ and $s\colon \N\to \N$, and finally recursors $\R_\rho\colon\rho\to (\N\to\rho\to\rho)\to\N\to\rho$ of each type.

Equations in $\EHAomega$ are formed using basic symbols $=_\B$ and $=_\N$ for equality of type $\B$ and $\N$, while formulas of $\EHAomega$ are built from $=_\N$ and $=_\B$ together with the usual logical connectives and quantifiers for all types. Equality at arbitrary types is defined inductively, so for example $f=_{\rho\to\tau} g$ abbreviates $\forall x^\rho(fx=_\tau gx)$. 

The axioms and rules of $\EHAomega$ are the standard axioms and rules of classical logic in all finite types, along with those of the typed lambda calculus, defining equations for all the constants, induction for arbitrary formulas, and finally full extensionality:
\begin{equation*}\forall f^{\rho\to\tau}\forall x^\rho,y^\rho(x=_\rho y\to fx=_\tau fy).\end{equation*}
In what follows we often consider extensions of $\EHAomega$ with new recursively defined functionals $\F$, in which case by $\EHAomega+\F$ we mean the extension of $\EHAomega$ with new  constants $\F$ and their associated defining axioms.

We use, throughout, the following notation and abbreviations.
\begin{itemize}

\item $0_\rho$ is the inductively defined $0$ term of type $\rho$.

\item $\pi_i$ and $\pair{\;,\;}$ denote the projection and pairing operations. For a sequence $\alpha\colon (\rho_0\times\rho_1)^\N$ we sometimes write $\alpha_i\colon\rho^\N$ for $\lambda n.\pi_i(\alpha(n))$.

\item For a finite sequence $s\colon\rho^\ast$, $|s|$ denotes the length of $s$, while $s\ast t:=\pair{s_0,\ldots,s_{m-1},t_0,\ldots,t_{n-1}}$ denotes the concatenation $s$ and $t$. We also use $s\ast x$ to denote $s\ast\pair{x}$, and $s\ast\alpha$ for the concatenation of $s$ with the infinite sequence $\alpha$.

\item $s\at\alpha\colon\rho^\N$ denotes the overwriting of $\alpha$ with the finite sequence $s$ i.e. $$(s\at\alpha)(n):=\begin{cases}s_n & \mbox{if $n<|s|$}\\ \alpha(n) & \mbox{if $n\geq |s|$}\end{cases}.$$

\item $\initSeg{\alpha}{n}:=\pair{\alpha(0),\ldots,\alpha(n-1)}$ denotes the finite initial segment of length $n$ of $\alpha\colon\rho^\N$, while $\ext{s}$ denotes the canonical extension $s\oplus \zero_{\rho^\N}$ of the finite sequence $s\colon\rho^\ast$.

\item For a decidable predicate $P(x)$, the term $`y^\rho\mbox{ if $P(x)$}'$ of type $\rho$ is shorthand for $$\begin{cases}y & \mbox{if $P(x)$}\\ 0_\rho & \mbox{otherwise}\end{cases}.$$ 

\item For a decidable predicate $P(n)$ on $\N$, the term $\least i\leq n.P(i)\colon\N$ is the least $i\leq n$ satisfying $P(i)$, and just $n$ if no such $i$ exists. 

\end{itemize}
In addition to all this, throughout the article we work with a type of partial sequences with potentially infinite domain. These can be encoded, for example, by total sequences of type $\N\to\Bool\times\rho$, where defined values are represented as $(1,x)$, and undefined values as $(0,0_\rho)$. Accordingly, for an object $u$ of this type we say that $n$ is in the domain of $u$, or $n\in\dom(u)$, if $\pi_0 u(n)=1$, and $n\notin\dom(u)$ otherwise. Membership of $\dom(u)$ is a decidable property. 

We imagine $\Bool\times\rho$ as simulating a type $\ps{\rho}\equiv\rho+\one$, and we write $u(n)=_{\ps{\rho}}x$ instead of $u(n)=_{\Bool\times\rho}(1,x)$, and $u(n)=_{\ps{\rho}}\bot$ whenever $n$ is not in the domain of $u$. Similarly, $u=_{\ps{\rho}} v$ if for all $n$ we have either $\pi_0u(n)=\pi_1u(n)=0$ or $\pi_0u(n),\pi_0v(n)=1$ and $\pi_1u(n)=\pi_1v(n)$.
\begin{itemize}

\item We use the symbol $\emptyset$ to denote the partial function with empty domain.

\item We extend the overwrite operation $\at$ given for finite sequences above to partial functions by defining $u\at v\colon\ps{\rho}^\N$ by
\begin{equation*}\label{eqn-overwrite}(u\at v)(n)=\begin{cases}u(n) & \mbox{if $n\in\dom(u)$}\\ v(n) & \mbox{if $n\notin\dom(u)$}.\end{cases}\end{equation*}
We define $u\at\alpha\colon\rho^\N$ for total sequence $\alpha^\N$ analogously. It will always be clear from the context which types the operator $\at$ takes as input.

\item We isolate as a special case the addition of a single value to $u$: For $n^\N$ and $x^\rho$ we define $\update{u}{n}{x}=u\at (n,x)$ where $(n,x)$ is the partial function taking defined value $x$ at point $n$, and undefined elsewhere. When $n\notin\dom(u)$ we say that $\update{u}{n}{x}$ is an \emph{update} of $u$.

\item Finally, we write $u\sqsupseteq v$ whenever $\forall i\in\dom(v)(u(i)=v(i))$. Note that we have $u\at v\sqsupseteq u$ for any $v$, and $u\at v\sqsupset u$ whenever there is some $n\notin\dom(u)$ such that $n\in\dom(v)$. We also have $\update{u}{n}{x}\sqsupset u$ when $n\notin\dom(u)$.

\end{itemize}
For us, the system $\EHAomega$ together with its classical variant $\EPAomega$ acts as a standard lambda calculus equipped with a robust meta-theory for reasoning about terms. However, the exact details of our formal system are not particularly important as everything which follows can be easily lifted to alternative settings. For example, a slightly different approach would be to work a weaker, quantifier-free term calculus $\sf T$ and to do all the reasoning in an unspecified meta-theory, as in \cite{BBC(1998.0)}. Alternatively we could work in a theory of partial continuous functionals as in \cite{Berger(2002.0)}, taking our base type to represent the flat domain $\N_{\bot}$.

\subsection{Models of $\EHAomega$}
\label{sec-formal-models}

In order to prove both the existence and correctness of bar recursive realizers, it is typically necessary to work in a constructive interpretation $T:\equiv (T_\rho)$ of $\EHAomega$ which satisfies some form of the following two properties:
\begin{enumerate}

\item $(T_\rho)$ contains \emph{arbitrary} choice sequences, in other words $T_{\N\to\rho}$ contains all sequences $\NN\to T_\rho$ and so in particular $(T_\rho)$ validates dependent choice;

\item Whenever $\tau$ is a discrete type, functionals of type $F\colon\rho^\N\to \tau$ satisfy the following continuity principle:
\begin{equation*}\CONT \; \colon \; \forall\alpha^{\rho^\N}\exists n\forall\beta(\initSeg{\alpha}{n}=_{\rho^\ast}\initSeg{\beta}{n}\to F\alpha=_\tau F\beta). \end{equation*}

\end{enumerate}
Both of these principles are satisfied automatically by the Kleene-Kreisel continuous functionals $\modcont$ \cite{Kleene(1959.0),Kreisel(1959.0)}, whereas for term models such as the theory $\mathcal{P}$ of \cite{BBC(1998.0)}, (1) is obtained by adding infinite choice sequences explicitly. Here we do not choose any particular intepretation of $\EHAomega$, rather we simply add principles such as $\CONT$ and dependent choice to our meta-theory whenever they are required.

\section{Backward induction and recursion}
\label{sec-open}

We now develop some of the crucial background theory that will be required in order to prove our main results. In particular, we formulate a general principle of backward induction and define an associated backward recursor, both of which will be used to construct and verify the correctness of our parametrised realizer. In simple terms, backward induction is induction over domain-theoretic extensions of partial sequences, and should be seen as a generalisation of bar induction. Analogously, backward recursion is a generalisation of the implicit forms of bar recursion used to give realizability interpretations to countable choice. This section is largely inspired by \cite{Berger(2004.0)} in that we formulate backward induction as an instance of the still more general principle of open induction. 

\subsection{Open induction}
\label{sec-open-open}

Open induction, first considered by Raoult in \cite{Raoult(1988.0)}, is an extension of well-founded (or Noetherian) induction to chain-complete partial orders. Recall that a partial order $(X,\leq)$ is chain-complete if every non-empty chain $\gamma$ in $X$ has a least upper bound $\bigvee \gamma$. A predicate $B$ on $X$ is \emph{open} if it satisfies the property
\begin{equation*}B(\bigvee \gamma)\to\exists x\in\gamma B(x)\end{equation*}
for every non-empty chain $\gamma$ in $X$, and the principle of open induction over $X$ is given by
\begin{equation*}\OI_{(X,\leq)} \; \colon \; \forall x(\forall y>x B(y)\to B(x))\to\forall x B(x) \end{equation*}
where $B$ ranges over open predicates. Note that open induction implies well-founded induction since whenever $>$ is well-founded $(X,\leq)$ is trivially chain-complete and all predicates are automatically open. However, in general $>$ need not be well-founded, in which case openness becomes a non-trivial property.

\begin{theorem}[Raoult \cite{Raoult(1988.0)}]\label{thm-Zorn-OI}Any chain-complete partial order satisfies open induction.\end{theorem}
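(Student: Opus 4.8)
The plan is to argue by contradiction via Zorn's lemma, in the manner standard for results of this type (and essentially Raoult's original argument). Fix a chain-complete partial order $(X,\leq)$ and an open predicate $B$, and assume the progressiveness hypothesis $\forall x(\forall y>x\,B(y)\to B(x))$; the goal is to deduce $\forall x\,B(x)$. Suppose, towards a contradiction, that the set $Z:=\{x\in X:\neg B(x)\}$ is non-empty.

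The first step is to verify that $(Z,\leq)$ satisfies the hypothesis of Zorn's lemma, i.e. that every non-empty chain $\gamma\subseteq Z$ has an upper bound lying in $Z$. By chain-completeness, $\gamma$ has a least upper bound $\bigvee\gamma$ computed in the ambient order $X$. If $B(\bigvee\gamma)$ held, then openness of $B$ would produce some $x\in\gamma$ with $B(x)$, contradicting $\gamma\subseteq Z$; hence $\neg B(\bigvee\gamma)$, that is $\bigvee\gamma\in Z$, and this element is an upper bound of $\gamma$ inside $Z$. Since $Z$ is moreover non-empty, Zorn's lemma applies and yields a $\leq$-maximal element $m\in Z$.

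The final step uses progressiveness to eliminate $m$: maximality of $m$ in $Z$ means that every $y>m$ lies outside $Z$, i.e. satisfies $B(y)$, so $\forall y>m\,B(y)$ holds; applying the progressiveness hypothesis at $m$ then gives $B(m)$, contradicting $m\in Z$. Therefore $Z=\emptyset$, which is precisely $\forall x\,B(x)$.

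I do not anticipate any real obstacle here, since the whole argument is only a few lines. The two points deserving a moment's care are that the least upper bound of a chain of counterexamples is taken in $X$ but, by openness, is shown to remain a counterexample (this is exactly where openness is needed, at the limit stages), and that the standing assumption $Z\neq\emptyset$ is what licenses the appeal to Zorn's lemma. The proof is, of course, non-constructive — it is essentially equivalent to the axiom of choice — but this is unproblematic, as open induction is being used here solely as a principle of the informal meta-theory.
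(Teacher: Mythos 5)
Your proof is correct and follows essentially the same route as the paper's own argument: assume the set of counterexamples is non-empty, use openness to show the supremum of a chain of counterexamples is again a counterexample, invoke Zorn's lemma to get a maximal counterexample, and contradict progressiveness. The only cosmetic difference is that the paper works with the contrapositive form of the induction hypothesis and notes the empty-chain case explicitly, whereas you derive $B(m)$ directly from maximality; these are the same argument.
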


\begin{proof}This is a direct consequence of Zorn's lemma. Suppose that the open predicate $B$ satisfies the premise of open induction, which is classically equivalent to
\begin{equation}\label{eqn-Zorn-OI}\forall x(\neg B(x)\to \exists y>x\neg B(y)),\end{equation}
and suppose for contradiction that the set
\begin{equation*}S=\{x\in X \; | \; \neg B(x)\}\end{equation*}
is non-empty. We show that every chain in $S$ has an upper bound in $S$. For the empty chain this is trivial since $S$ contains at least one element. On the other hand, if $\gamma$ is non-empty, then it has an upper bound $\bigvee\gamma$ in $X$ by completeness, and moreover $\bigvee\gamma\in S$ since $\forall x\in\gamma\neg B(x)\to\neg B(\bigvee\gamma)$ by openness of $B$. Therefore by Zorn's lemma $S$ contains a maximal element, which contradicts (\ref{eqn-Zorn-OI}). Thus $S$ must be empty.\end{proof}

\subsection{Backward induction}
\label{sec-open-backward}

We are now ready to define backward induction, which we simply take to be open induction over the partial order $(\N\to\ps{\rho},\sqsubseteq)$, where $\sqsubseteq$ is the extension relation on partial sequences defined in Section \ref{sec-formal-theory}. In other words, backward induction is the schema
\begin{equation*}\backI \ \colon \ \forall u^{\N\to\ps{\rho}}(\forall v\sqsupset u B(v)\to B(u))\to\forall u B(u) \end{equation*}
where $B$ ranges over open predicates. This formulation of backward induction makes sense and is valid in any interpretation $T$ of $\EPAomega$ that admits arbitrary sequences, since for any chain $\gamma$ in $T_{\N\to\ps{\rho}}$,
\begin{equation*}(\bigvee\gamma)(n):=\begin{cases}u(n) & \mbox{if $n\in\dom(u)$ for some $u\in\gamma$}\\ \bot & \mbox{otherwise}.\end{cases}\end{equation*}
is a perfectly well-defined object of $\NN\to T_{\ps{\rho}}\equiv T_{\N\to\ps{\rho}}$ and is the least upper bound of $\gamma$ with respect to $\sqsubseteq$. It will be convenient to isolate the following syntactic notion of an open formula, which will be sufficient for everything that follows.

\begin{proposition}\label{prop-openp}Suppose that the formula $B(u)$ on partial sequences is of the form 
\begin{equation*}B(u):\equiv \forall n[n\in\dom(u)\to A(n,u)]\to \exists nP(\initSeg{u}{n})\end{equation*}
where $P$ is an arbitrary predicate on $\ps{\rho}^\ast$, and $A$ is monotone in the following sense
\begin{equation}\label{eqn-openp-mon}n\in\dom(u)\wedge u\sqsubseteq v\wedge A(n,u)\to A(n,v).\end{equation}
Then $B(u)$ is open with respect to $\sqsubseteq$.\end{proposition}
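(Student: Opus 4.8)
The plan is to unfold the definition of openness directly: given a non-empty chain $\gamma$ in $(\N\to\ps{\rho},\sqsubseteq)$ with least upper bound $\bigvee\gamma$, I must show that $B(\bigvee\gamma)$ implies $B(u)$ for some $u\in\gamma$. So suppose $B(\bigvee\gamma)$ holds, i.e. if $\forall n[n\in\dom(\bigvee\gamma)\to A(n,\bigvee\gamma)]$ then $\exists n\, P(\initSeg{\bigvee\gamma}{n})$. The argument splits on whether the antecedent $\forall n[n\in\dom(\bigvee\gamma)\to A(n,\bigvee\gamma)]$ holds. The key structural fact I would use throughout is that $\dom(\bigvee\gamma)=\bigcup_{u\in\gamma}\dom(u)$ and that whenever $n\in\dom(u)$ for $u\in\gamma$, then $(\bigvee\gamma)(n)=u(n)$ and $\initSeg{\bigvee\gamma}{n}$ agrees with $\initSeg{u}{n}$ on all indices in $\dom(u)$ below $n$ — indeed, once $u\in\gamma$ has $\initSeg{u}{n}$ totally defined (all of $0,\dots,n-1$ in $\dom(u)$), then $\initSeg{u}{n}=\initSeg{\bigvee\gamma}{n}$.

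First, the easy case: suppose the antecedent of $B(\bigvee\gamma)$ fails, so there is some $n_0\in\dom(\bigvee\gamma)$ with $\neg A(n_0,\bigvee\gamma)$. Pick $u\in\gamma$ with $n_0\in\dom(u)$; then $u\sqsubseteq\bigvee\gamma$ and $n_0\in\dom(u)$, so by the contrapositive of monotonicity \eqref{eqn-openp-mon} we get $\neg A(n_0,u)$, hence the antecedent of $B(u)$ fails and $B(u)$ holds vacuously. Second, the main case: suppose the antecedent of $B(\bigvee\gamma)$ holds; then by $B(\bigvee\gamma)$ we obtain $n_1$ with $P(\initSeg{\bigvee\gamma}{n_1})$. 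Now I need to find $u\in\gamma$ with $\initSeg{u}{n_1}=\initSeg{\bigvee\gamma}{n_1}$, for then $P(\initSeg{u}{n_1})$ gives the conclusion of $B(u)$ directly — note I do not even need to check the antecedent of $B(u)$ in this case, since $B(u)$ is an implication and I can prove its conclusion. Such a $u$ exists because each of the finitely many indices $0,1,\dots,n_1-1$ that actually lie in $\dom(\bigvee\gamma)$ lies in $\dom(u_i)$ for some $u_i\in\gamma$, and since $\gamma$ is a chain the finite set $\{u_0,\dots,u_{n_1-1}\}$ has a maximum $u$, which then has all these indices in its domain with the same values, so $\initSeg{u}{n_1}=\initSeg{\bigvee\gamma}{n_1}$.

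The main obstacle is the bookkeeping in the second case: being careful that "$\initSeg{u}{n_1}$" as a finite sequence of elements of $\ps{\rho}^\ast$ (with $\bot$ entries allowed for indices not in the domain) genuinely coincides between $u$ and $\bigvee\gamma$ once $u$ dominates the relevant finitely many chain elements — this uses both that $\gamma$ is directed (a chain) so that finitely many of its members have an upper bound within $\gamma$, and the explicit description of $\bigvee\gamma$ given before the proposition. One should also take a little care that if $n_1=0$ the claim is trivial (the empty initial segment), and that the case split on the antecedent of $B(\bigvee\gamma)$ is being made classically, which is unproblematic since we work over $\EPAomega$ and Proposition \ref{prop-openp} is a meta-level statement. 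Once these points are handled, openness follows, completing the proof.
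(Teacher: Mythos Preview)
Your proposal is correct and follows essentially the same approach as the paper: a classical case split on whether the antecedent $\forall n[n\in\dom(\bigvee\gamma)\to A(n,\bigvee\gamma)]$ holds, using monotonicity (via its contrapositive) when it fails, and passing the witness for $P$ down to some $u\in\gamma$ with matching initial segment when it holds. If anything, you are slightly more explicit than the paper in justifying why such a $u$ exists in the second case (using that a chain is directed so finitely many members have a common upper bound in $\gamma$), whereas the paper simply refers to ``the least $u\in\gamma$ satisfying $\initSeg{u}{n}=\initSeg{v}{n}$'' without spelling this out.
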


\begin{proof}Given some chain $\gamma$ let $v=\bigvee\gamma$ and assume that $B(v)$ holds. We must show that $B(u)$ holds for some $u\in\gamma$. If $B(v)$ holds, then by classical logic we either have $P(\initSeg{v}{n})$ for some $n$, in which case $B(u)$ holds for e.g. the least $u\in\gamma$ satisfying $\initSeg{u}{n}=\initSeg{v}{n}$, or we have
\begin{equation*}n\in\dom(v)\wedge\neg A(n,v)\end{equation*}
for some $n$. In this latter case pick the least $u\in\gamma$ satisfying $n\in \dom(u)$. Then since $u\sqsubseteq v$, we have $A(n,u)\to A(n,v)$ by the monotonicity condition, which is a contradiction. Therefore $\neg A(n,u)$ and $B(u)$ holds. \end{proof}

\begin{remark}Note that Berger \cite{Berger(2004.0)} generally studies formulas which are open with respect to the \emph{lexicographic ordering} on infinite sequences, which are not necessarily open with respect to $\sqsubseteq$ (which is strictly contained in a lexicographic ordering). However, any formula equivalent to one of the form $B(u):\equiv \exists n P(\initSeg{u}{n})$ is open in both senses.  \end{remark}

\subsection{Some additional remarks on backward induction}
\label{sec-open-backward}

We take the opportunity to explore backward induction in more detail and relate it to other well-known principles in logic, including the minimal bad sequence argument and bar induction. This section is not strictly necessary for the remainder of the paper, so if the reader prefers they can proceed directly to the definition of backward recursion given in Setion \ref{sec-open-recursion}.

We first point out that, analogously to \cite{Berger(2004.0)}, backward induction does not require the full strength of Zorn's lemma, and is provable from just dependent choice, using a version of the minimal-bad-sequence argument due to Nash-Williams \cite{NashWilliams(1963.0)}. 

\begin{proposition}The principle of backward induction is provable in $\EPAomega+\DC$.\end{proposition}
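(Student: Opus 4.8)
The plan is to follow the minimal-bad-sequence argument of Nash-Williams, as adapted by Berger \cite{Berger(2004.0)}, but carried out over the extension ordering $\sqsubseteq$ on partial sequences rather than a lexicographic ordering. Assume $B$ is open (with respect to $\sqsubseteq$) and satisfies the premise $\forall u(\forall v\sqsupset u\, B(v)\to B(u))$, which is classically equivalent to $\forall u(\neg B(u)\to\exists v\sqsupset u\,\neg B(v))$. Suppose for contradiction that $\neg B(u_0)$ for some $u_0$; without loss of generality we may take $u_0=\emptyset$, since $\neg B(u_0)$ gives, by the premise applied repeatedly, a $\sqsubseteq$-increasing chain leading to a bad sequence and we just need one bad object to start from — or more simply, we aim to build a specific increasing chain of bad partial functions whose least upper bound is then shown to be bad by openness, contradicting the premise one last time.

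First I would use $\DC$ to construct a $\sqsubseteq$-increasing $\omega$-chain $u_0\sqsubseteq u_1\sqsubseteq u_2\sqsubseteq\cdots$ of partial functions, each $\neg B(u_i)$, with the minimality feature that at each stage the ``new'' value added to pass from $u_i$ to $u_{i+1}$ is chosen as small as possible in an appropriate well-ordering of the finite data describing a one-step extension (a pair consisting of a point $n\notin\dom(u_i)$ and a value $x$, coded into $\N$ since the relevant types are discrete or can be handled via the $\Bool\times\rho$ encoding). Concretely, given $u_i$ with $\neg B(u_i)$, the premise supplies some $v\sqsupset u_i$ with $\neg B(v)$; among all such $v$ we pick one realizing the least code of a single added $(n,x)$ extending $u_i$ towards some bad $v$, and set $u_{i+1}$ to be that $\update{u_i}{n}{x}$ — here one checks $\neg B(\update{u_i}{n}{x})$ still holds, using that any bad $v\sqsupseteq\update{u_i}{n}{x}$ witnesses badness of the update via the premise. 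The dependent choice is what legitimately produces the full sequence $(u_i)$ from these stage-wise choices.

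Next, let $u_\infty=\bigvee_i u_i$, which exists as a well-defined partial function exactly as in the discussion preceding Proposition \ref{prop-openp}: $u_\infty(n)=u_i(n)$ for the least $i$ with $n\in\dom(u_i)$, and $\bot$ otherwise. Since $B$ is open with respect to $\sqsubseteq$ and $\neg B(u_i)$ for all $i$, we obtain $\neg B(u_\infty)$. But then the premise (in its contrapositive form) gives some $w\sqsupset u_\infty$ with $\neg B(w)$, hence some point $m\notin\dom(u_\infty)$ with $m\in\dom(w)$, contradicting the minimality built into the construction — the one-step extension $\update{u_\infty}{m}{w(m)}$, restricted to a finite stage, would have been available and at least as good a choice at that stage, so the chain could not have stabilised on a larger code at that point. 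Deriving this contradiction cleanly is the main obstacle: one has to set up the well-ordering of one-step extensions and the stage-wise minimality condition so that a bad extension of the limit $u_\infty$ really does refute minimality at some finite stage $i$, which requires a small compactness-style observation that the relevant point $m$ together with a bad continuation already appears ``in view'' at a finite stage. Once the bookkeeping is arranged this contradiction closes the argument, so $B(u)$ holds for all $u$, and all of this reasoning takes place within $\EPAomega+\DC$.
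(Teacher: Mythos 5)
Your overall plan (build a bad $\sqsubseteq$-chain by $\DC$, pass to its least upper bound, get badness of the limit from openness, then contradict the construction with one last use of the premise) is the right shape, but two steps fail as stated, and the step you yourself flag as ``the main obstacle'' is exactly the heart of the minimal-bad-sequence argument and is not closed by your setup. First, the reduction to one-step updates is unsound: given bad $u_i$, the premise supplies some bad $v\sqsupset u_i$, but you then pass to $u_{i+1}:=\update{u_i}{n}{x}$ for a single pair taken from $v$ and claim $\neg B(\update{u_i}{n}{x})$ ``via the premise''. The premise runs in the wrong direction: it produces bad extensions of bad elements, and neither it nor openness (which concerns least upper bounds of chains) transfers badness \emph{downward} from $v$ to an intermediate $\update{u_i}{n}{x}\sqsubseteq v$. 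In the normal form of Proposition \ref{prop-openp} the predicate $A$ is only monotone upward, so $\neg B(v)$ gives no control over $A(k,\update{u_i}{n}{x})$ nor over $P$ on the (different) initial segments of the update; hence your chain need not consist of bad elements at all, and everything after that collapses. Second, your minimisation over ``least codes of pairs $(n,x)$'' is unavailable in general: $x$ ranges over an arbitrary finite type $\rho$, which cannot be coded into $\N$, so there is no well-ordering of one-step extensions to minimise over; only the position $n$ is a natural number.

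The paper's construction sidesteps both problems and is what makes the final contradiction go through. At stage $n$ one asks whether there exists \emph{any} bad $w\sqsupset u_n$ with $\initSeg{w}{n}=\initSeg{u_n}{n}$ and $n\in\dom(w)$; if so, $u_{n+1}:=w$ (the whole $w$, possibly adding many points), and otherwise $u_{n+1}:=u_n$. Badness of each $u_{n+1}$ is then immediate from the choice of $w$, no value is ever minimised, and the freezing of the first $n$ coordinates together with the stage-$n$ attention to position $n$ is precisely the ``compactness-style observation'' you are missing: if $v\sqsupset\tilde u$ is bad and $m$ is the least point in $\dom(v)\setminus\dom(\tilde u)$, then $u_m\sqsubset v$ and $\initSeg{v}{m}=\initSeg{\tilde u}{m}=\initSeg{u_m}{m}$, so $v$ would have been an admissible $w$ at stage $m$, forcing $m\in\dom(u_{m+1})\subseteq\dom(\tilde u)$, a contradiction. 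With per-stage least-code single updates (even granting a coding) nothing guarantees that position $m$ is ever attended to, nor that the first $m$ coordinates are stable when it is, so the chain can stabilise away from $m$ without refuting anything. To repair your proof you would have to rebuild the recursion along these lines, at which point it coincides with the paper's argument.
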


\begin{proof}Take some open formula $B(u)$, which for simplicity we assume is open in the sense of Proposition \ref{prop-openp}. Suppose for contradiction that we have $\forall u(\neg B(u)\to\exists v\sqsupset u\neg B(v))$ but there exists some partial sequence $u_0$ such that $\neg B(u_0)$. Using dependent choice construct the sequence $(u_n)$ as follows: Supposing that we have already constructed $\pair{u_0,\ldots,u_n}$ for $n\geq 0$, define
\begin{enumerate}[(i)]

\item\label{item-MBSi} $u_{n+1}:=w$ if $n\notin\dom(u_n)$ and $w$ is such that the following four properties are satisfied: $\neg B(w)$; $w\sqsupset u_n$; $\initSeg{w}{n}=\initSeg{u_n}{n}$ and $n\in\dom(w)$,

\item\label{item-MBSii} $u_{n+1}:=u_n$ if either $n\in\dom(u_n)$ or $n\notin\dom(u_n)$ and no $w$ in the sense of (\ref{item-MBSi}) exists.

\end{enumerate}
First, it is clear by a simple induction that for all $n$ we have 
\begin{enumerate}[(a)]

\item\label{item-MBSpi} $\neg B(u_n)$,

\item\label{item-MBSpii}$\initSeg{u_n}{n}=\initSeg{u_{n+1}}{n}$, and

\item\label{item-MBSpiii}$u_n\sqsubseteq u_{n+1}$. 
\suspend{enumerate}
Define $\tilde u:=\lambda n.u_{n+1}(n)$. Then it follows that for all $n$ we have
\resume{enumerate}[{[(a)]}]

\item\label{item-MBSpiv} $\initSeg{\tilde u}{n}=\initSeg{u_n}{n}$ and 

\item\label{item-MBSpv} $u_n\sqsubseteq \tilde u$. 

\end{enumerate}
The first of these is done by a simple induction using (\ref{item-MBSpii}). For the latter, take $i\in\dom(u_n)$. Then either $i< n$ in which it is clear by (\ref{item-MBSpv}) that $\tilde u(i)=u_n(i)$, or $i\geq n$ and we obtain $\tilde u(i)=u_{i+1}(i)=u_n(i)$ by $u_n\sqsubseteq\ldots\sqsubseteq u_{i+1}$.

Now we prove that $\neg B(\tilde u)$, which is classically equivalent to
\begin{equation*}\forall i([i\in\dom(u)\to A(i,u)]\wedge\neg P(\initSeg{u}{i})).\end{equation*}
Taking some arbitrary $n$, and setting $u=u_{n+1}$ and $i=n$ we get, by $\neg B(u_{n+1})$ (true by (\ref{item-MBSpi})),
\begin{equation*}(\ast) \ \ [n\in\dom(u_{n+1})\to A(n,u_{n+1})]\wedge \neg P(\initSeg{u_{n+1}}{n}).\end{equation*}
But $\neg P(\initSeg{u_{n+1}}{n})\to\neg P(\initSeg{\tilde u}{n})$ by (\ref{item-MBSpii}) and (\ref{item-MBSpiv}), and furthermore $n\in\dom(\tilde u)$ is equivalent to $n\in\dom(u_{n+1})$ and hence by $(\ast)$ implies $A(n,u_{n+1})$, so using monotonicity of $A$ and the fact that $u_{n+1}\sqsubseteq \tilde u$ we have $A(n,\tilde u)$ and have therefore established
\begin{equation*}n\in\dom(\tilde u)\to A(n,\tilde u).\end{equation*}
Taken together and bearing in mind that $n$ is arbitrary, this implies $\neg B(\tilde u)$. But now we know by the backward induction hypothesis that there exists some $v\sqsupset\tilde u$ such that $\neg B(v)$ holds. 

We can show that this contradicts the construction of $(u_n)$, and therefore there cannot exists any initial sequence $u_0$ satisfying $\neg B(u_0)$, and we're done. Let $m$ be the least point such that $m\in\dom(v)$ but $m\notin\dom(\tilde u)$. Then firstly by (\ref{item-MBSpv}) we have $u_m\sqsubseteq\tilde u\sqsubset v$ and thus $u_m\sqsubset v$, and secondly $\initSeg{v}{m}=\initSeg{\tilde u}{m}=\initSeg{u_m}{m}$, the first equality by minimality of $m$ and the second by (\ref{item-MBSpiv}), and therefore $v$ satisfies the required properties of $w$ in (\ref{item-MBSi}) at point $m$. In addition we know that $m\notin\dom(u_{m})$ else we'd have $m\in\dom(\tilde u)$ by (\ref{item-MBSpv}), and therefore $u_{m+1}$ must be constructed using (\ref{item-MBSi}) and thus $m\in\dom(u_{m+1})\subseteq \dom(\tilde u)$, contradicting the assumption that $m\notin\dom(\tilde u)$.  \end{proof}

As we will see in Sections \ref{sec-countable} and \ref{sec-dependent}, one of the key ideas in this paper is construct forms of recursion based on restricted, or relativised variants of backward recursion which take as input partial functions that are downward closed with respect to some relation on $\N$. These will be closely related to the following form of relativised backward induction.

\begin{proposition}Let $\lhd$ be some decidable relation on $\N$, and define the predicate $u\in D_\lhd$ by
\begin{equation*}u\in D_\lhd:\equiv\forall n\in\dom(u)[\forall i\lhd n(i\in\dom(u))].\end{equation*}
Equivalently, we say that $\dom(u)$ is $\lhd$-closed. Then for any relation $\sqsubset'$ on $\ps{\rho}^\N$ such that $u\sqsubset' v\to u\sqsubset v$, the following principle of relativised backward induction is provable from $\backI$:
\begin{equation*}\forall u\in D_\lhd(\forall v\sqsupset' u [v\in D_\lhd\to B(v)]\to B(u))\to\forall u\in D_\lhd \; B(u). \end{equation*}\end{proposition}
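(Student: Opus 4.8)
The plan is to derive relativised backward induction from plain $\backI$ by a ``padding'' trick: given a predicate $B(u)$ defined (and open) on the $\lhd$-closed partial functions, I extend it to a predicate $\tilde B(u)$ on \emph{all} partial functions in such a way that $\tilde B$ is still open with respect to $\sqsubseteq$, the premise of $\backI$ for $\tilde B$ follows from the premise of relativised backward induction for $B$, and $\tilde B(u)$ for all $u$ gives back $B(u)$ for all $\lhd$-closed $u$. The natural choice is
\begin{equation*}\tilde B(u):\equiv u\notin D_\lhd \vee B(u),\end{equation*}
or, spelled out via the syntactic form of Proposition \ref{prop-openp}, $\tilde B(u):\equiv [\forall n\in\dom(u)(\forall i\lhd n(i\in\dom(u)))\wedge\forall n(n\in\dom(u)\to A(n,u))]\to\exists n P(\initSeg{u}{n})$. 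The point is that ``$\dom(u)$ is $\lhd$-closed'' can itself be written as a monotone-in-the-antecedent condition of the shape required by Proposition \ref{prop-openp} only if we are careful, so I would actually fold the $\lhd$-closedness into the predicate $A$: set $A'(n,u):\equiv A(n,u)\wedge\forall i\lhd n(i\in\dom(u))$. This $A'$ is still monotone in the sense of (\ref{eqn-openp-mon}), since $A$ is monotone by hypothesis and $i\in\dom(u)\wedge u\sqsubseteq v$ implies $i\in\dom(v)$; hence $\tilde B$, which is exactly the Proposition \ref{prop-openp} formula built from $P$ and $A'$, is open with respect to $\sqsubseteq$.

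Now I would run $\backI$ on $\tilde B$. First, check the induction step: assume $\forall v\sqsupset u\,\tilde B(v)$; I must show $\tilde B(u)$. If $u\notin D_\lhd$ then $\tilde B(u)$ holds trivially, so assume $u\in D_\lhd$. I need to reach $B(u)$ from the relativised induction hypothesis, which gives me $B(u)$ provided $\forall v\sqsupset' u[v\in D_\lhd\to B(v)]$. So fix $v\sqsupset' u$ with $v\in D_\lhd$. By the assumption $u\sqsubset' v\to u\sqsubset v$ we get $v\sqsupset u$, hence $\tilde B(v)$, i.e. $v\notin D_\lhd\vee B(v)$; since $v\in D_\lhd$, we conclude $B(v)$. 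This verifies the premise of relativised backward induction for $B$ at $u$, so $B(u)$ holds and therefore $\tilde B(u)$ holds. Thus $\backI$ applies and yields $\forall u\,\tilde B(u)$. Finally, specialising to $u\in D_\lhd$ and discharging the disjunct $u\notin D_\lhd$ gives $\forall u\in D_\lhd\,B(u)$, as required.

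One subtlety worth stating carefully is why $\tilde B$ really does have the form covered by Proposition \ref{prop-openp} — that is, why $\tilde B(u)$ as written above is logically equivalent to $[\forall n(n\in\dom(u)\to A'(n,u))]\to\exists n P(\initSeg{u}{n})$. The antecedent $\forall n(n\in\dom(u)\to A'(n,u))$ unpacks as $[\forall n\in\dom(u)\,A(n,u)]\wedge[\forall n\in\dom(u)\,\forall i\lhd n(i\in\dom(u))]$, and the second conjunct is precisely $u\in D_\lhd$; so the implication is $(u\in D_\lhd\wedge[\forall n\in\dom(u)A(n,u)])\to\exists nP(\initSeg{u}{n})$, which by propositional logic is $u\notin D_\lhd\vee B(u)=\tilde B(u)$, using that $B(u)$ itself is $[\forall n\in\dom(u)A(n,u)]\to\exists nP(\initSeg{u}{n})$. (If one instead wants to handle a $B$ which is open in the abstract chain-theoretic sense rather than the syntactic sense of Proposition \ref{prop-openp}, the same padding works but one checks openness of $\tilde B$ directly: a chain $\gamma$ with $\bigvee\gamma\notin D_\lhd$ already has some member not in $D_\lhd$, since failure of $\lhd$-closedness is witnessed by finitely much data, and if $\bigvee\gamma\in D_\lhd$ one uses openness of $B$.) The main obstacle is really just this bookkeeping — making sure the $\lhd$-closedness clause is packaged as part of the monotone antecedent $A'$ rather than as an extra hypothesis floating outside the Proposition \ref{prop-openp} template — and the observation, used implicitly throughout, that $D_\lhd$ is upward closed under $\sqsubseteq$ so that the restriction $v\in D_\lhd$ in the relativised hypothesis is consistent with passing to $\sqsubseteq$-extensions.
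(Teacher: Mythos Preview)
Your proof is correct and follows essentially the same route as the paper: both define the auxiliary predicate $u\in D_\lhd\to B(u)$ (your $\tilde B$, written as a disjunction), observe that it is open by absorbing the $\lhd$-closedness condition into the monotone antecedent of Proposition~\ref{prop-openp}, and then apply $\backI$ together with the inclusion $\sqsubset'\subseteq\sqsubset$. Your write-up is considerably more detailed than the paper's, which compresses the induction-step verification into a single sentence, but the underlying argument is identical.
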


\begin{proof}First note that $u\in D_\lhd$ is of the form $\forall n\in\dom(u) D_0(n,u)$ with $D_0(n,u)$ monotone in the sense of (\ref{eqn-openp-mon}), therefore the predicate $B'(u):\equiv u\in D_\lhd\to B(u)$ is open for any open $B(u)$. Thus we obtain
\begin{equation*}\forall u(\forall v\sqsupset' u B'(v)\to B'(u))\to \forall u(\forall v\sqsupset u B'(v)\to B'(u))\to\forall u B'(u) \end{equation*}
the first implication following from the inclusion $\sqsubset'\subseteq\sqsubset$ and the second from normal backward induction applied to $B'(u)$. Rearranging this gives us relativised backward induction.\end{proof}

We can now instantiate $\lhd$ and $\sqsubset'$ to obtain certain well-known instances of backward induction.

\begin{example}[Update induction]If we define $u\sqsubset' v$ iff $v$ is an update of $u$, and let $\lhd$ just be the empty relation, then relativised backward induction just becomes update induction in the sense of \cite{Berger(2004.0)}. \end{example}

\begin{example}[Bar induction]\label{ex-bar-ind}Now consider the case $\lhd=<$. By classical logic, if $u\in D_<$ then either $u=\alpha$ for some total sequence $\alpha$ or $u=\ext{s}$ where $s$ is some finite sequence and $\ext{s}$ its embedding as a partial function. Therefore relativised backward induction is equivalent in $\EPAomega$ to
\begin{equation}\label{eqn-bar-ind0}\forall \alpha B(\alpha)\wedge \forall s(\forall v\sqsupset' \ext{s}[v\in D_<\to B(v)]\to B(\ext{s}))\to\forall s B(\ext{s}) \end{equation}
In addition, if we define $u\sqsubset' v$ iff $v$ is an update of $u$, then (\ref{eqn-bar-ind0}) becomes equivalent to
\begin{equation}\label{eqn-bar-ind1}\forall \alpha B(\alpha)\wedge \forall s(\forall x B(\ext{s\ast x})\to B(\ext{s}))\to\forall s B(\ext{s}) \end{equation}
which is just a variant of bar induction. It is not too difficult to show that this is equivalent over $\EPAomega$ to the more standard formulations of bar induction found in e.g. \cite{Troelstra(1973.0)}. 
%
%
\end{example}

\subsection{Backward recursion}
\label{sec-open-recursion}

The purpose of introducing backward induction was to give us a way to reason about backward recursion, which we define and discuss in this section. In the same way that backward induction is a special case of open induction, backward recursion is closely related (and in fact definable from) open recursion as defined in \cite{Berger(2004.0)}. 

To begin with, in order to motivate what follows let us consider as a comparison the entirely standard concept of well-founded recursion over some decidable well-founded relation $\prec$ on $\rho$. Assuming we are working in a structure such as PCF or the Scott continuous functionals, we can define a well-founded recusor $\wR_{\prec}$ as the fixpoint of the following recursive equation
\begin{equation*}\wR_{\prec}^\psi(x)=_\sigma \psi_x(\lambda y\; . \; \wR_{\prec}^g(y)\mbox{ if $y\prec x$}),\end{equation*}
and prove that the recursor defines a \emph{total} functional for any outcome type $\sigma$ using well-founded induction over $\prec$:
\begin{equation*}\wI \; \colon \; \forall x(\forall y\prec x A(y)\to A(x))\to\forall x A(x).\end{equation*}
We want to define a backward recursor in a similar way - although we have two problems: firstly the relation $u\sqsubset v$ is not decidable, and secondly backward induction is only valid for open formulas. We avoid these issues by defining backward recursion to be the fixpoint of the following recursive equation
\begin{equation*}\BR_{\rho,\tau}^\psi(u)=_\tau\psi_u(\lambda n,v\;. \; \BR^\psi(u\at v)\mbox{ if $n\in\dom(v)\backslash\dom(u)$}), \end{equation*}
where $\tau$ is restricted to being a discrete type, while $n\in\dom(v)\backslash\dom(u)$ denotes the decidable predicate $n\in\dom(v)\wedge n\notin\dom(u)$. Observe that any $w\sqsupset u$ is of the form $w=u\at v$ for some $v$ which is defined at at least one point $n\notin\dom(u)$, and so $\BR$ makes recursive calls over all $w\sqsupset u$, although crucially it must always have access to a point $n\in\dom(w)$ such that $n\notin\dom(u)$. The necessity of the restriction on $\tau$ is to ensure that totality of $\BR^\psi(u)$ is an open property on total input $u$ (here we are referring to total elements of the \emph{model} of partial continuous functionals as opposed to the type $\ps{\rho}$). Indeed for discrete $\tau$ and total $u$ we have
\begin{equation*}\BR^\psi(u)\mbox{ is total}\leftrightarrow\exists n\forall w(\BR^\psi(\initSeg{u}{n}\at w)\mbox{ is total})\end{equation*}
assuming sequential continuity $\CONT$ for functionals with total output. Therefore totality of $\BR$ is provable using backward induction. Note that alternatively, a direct proof via Zorn's lemma that $\BR$ exists as a total element of the Scott partial continuous functionals can be carried out using the same manner as the proof of totality of the Berardi-Bezem-Coquand functional in \cite{Berger(2002.0)}. 

Alternatively, one can justify the existence of backward recursion in continuous models by showing that backward recursion is definable from the slightly more general schema of open recursion on the lexicographic ordering considered in \cite{Berger(2004.0)}. Open recursion is defined to be the fixpoint of the following recursive equation:
\begin{equation*}\OR_{\rho,\tau}^{\psi}(u)=_\tau \psi_u(\lambda n,v\; . \; \OR^\psi(\initSeg{u}{n}\at v)\mbox{ if $n\in\dom(v)\backslash\dom(u)$})\end{equation*}
where once again $\tau$ is discrete, and lexicographically open recursive functionals of the above form are shown to be total in \cite[Proposition 5.1]{Berger(2004.0)}.

\begin{proposition}\label{def-open-back}$\BR$ is instance-wise primitive recursively definable from $\OR$.\end{proposition}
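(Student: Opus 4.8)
The plan is to give, for each instance $\BR_{\rho,\tau}$ (with $\tau$ discrete), an explicit term $\Theta$ built only from $\lambda$-abstraction, application and the bounded search operator $\least i\le n.(\cdot)$ --- hence in particular a primitive recursive transformation of the step functional --- such that
$$\BR_{\rho,\tau}^\psi(u) \;=\; \OR_{\rho,\tau}^{\Theta(\psi)}(u)\qquad\text{for all }\psi, u.$$
The only structural difference between the two defining equations is that $\BR$ performs its recursive call on the overwrite $u\at v$ whereas $\OR$ calls on the truncated overwrite $\initSeg{u}{n}\at v$, so the whole task is to feed $\OR$'s recursive call an index and a sequence which make these two coincide.

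The idea I would use is that when $\BR^\psi(u\at v)$ is triggered at an index $n$ with $n\in\dom(v)\setminus\dom(u)$, one is free to re-index to the \emph{least} fresh point rather than to $n$ itself. Concretely, set $n' := \least i\le n.(i\in\dom(v)\wedge i\notin\dom(u))$; when $n\in\dom(v)\setminus\dom(u)$ this picks out an actual element of $\dom(v)\setminus\dom(u)$ below $n$. Then define
$$\Theta(\psi)\;:=\;\lambda u.\,\lambda g.\ \psi_u\bigl(\lambda n,v.\ g\,n'\,(u\at v)\ \text{ if }\ n\in\dom(v)\wedge n\notin\dom(u),\ \text{ else }0_\tau\bigr).$$
The key lemma to check is the identity $\initSeg{u}{n'}\at(u\at v) \;=\; u\at v$: for $i\ge n'$ both sides equal $(u\at v)(i)$, while for $i<n'$ minimality of $n'$ forces $i\in\dom(u)$ or $i\notin\dom(v)$, and in either case $(u\at v)(i)=u(i)=(\initSeg{u}{n'}\at(u\at v))(i)$.

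With this identity available, unfolding one step of $\OR^{\Theta(\psi)}$ is routine: on pairs $(n,v)$ violating the guard $n\in\dom(v)\setminus\dom(u)$ both the transformed function and the step function of $\BR$ return $0_\tau$; on pairs satisfying it, the inner call $g\,n'\,(u\at v)$ of $\OR^{\Theta(\psi)}$ has its own guard $n'\in\dom(u\at v)\setminus\dom(u)=\dom(v)\setminus\dom(u)$ satisfied, so it reduces to $\OR^{\Theta(\psi)}(\initSeg{u}{n'}\at(u\at v))=\OR^{\Theta(\psi)}(u\at v)$ by the lemma. Hence $\lambda u.\OR^{\Theta(\psi)}(u)$ satisfies exactly the defining equation of $\BR^\psi$, and so $\BR^\psi(u)=\OR^{\Theta(\psi)}(u)$ follows --- either by uniqueness of the recursion in the ambient continuous model, or, if one wants a direct argument in $\EPAomega+\backI$, by backward induction on $u$ (the predicate $\BR^\psi(u)=\OR^{\Theta(\psi)}(u)$ being trivially open).

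The step I expect to be the main obstacle is precisely the design of the re-indexing $n'$ and the accompanying bookkeeping around the guard. The naive transformation, which keeps the index $n$ and simply passes $u\at v$, fails: $\OR$'s truncation $\initSeg{u}{n}$ throws away any value $v$ supplies at a position $<n$ outside $\dom(u)$, and $\OR$ also behaves differently from $\BR$ at indices $n$ that are not themselves in $\dom(v)$. Getting both of these right --- which is exactly what the identity $\initSeg{u}{n'}\at(u\at v)=u\at v$ and the reinstated guard inside $\Theta(\psi)$ accomplish --- is where the actual work lies; everything else is mechanical unfolding. This construction should be seen as a mild generalisation of Berger's reduction of the Berardi--Bezem--Coquand functional to open recursion.
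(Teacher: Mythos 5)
Your construction is exactly the paper's: your $n'$ is its $m_{n,u,v}$, your $\Theta(\psi)$ is its $\tilde\psi_u(f):=\psi_u(\lambda n,v.\,f(m_{n,u,v},u\at v)\mbox{ if }n\in\dom(v)\backslash\dom(u))$, and your key identity $\initSeg{u}{n'}\at(u\at v)=u\at v$ is precisely the minimality step the paper uses to collapse $\OR$'s truncation. So the proposal is correct and takes essentially the same approach, differing only in the cosmetic framing of verifying that $\OR^{\Theta(\psi)}$ satisfies $\BR^\psi$'s defining equation.
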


\begin{proof}Primitive recursively define 
\begin{equation*}m_{n,u,v}:=\mbox{least $i\leq n$ s.t. $i\in\dom(v)\backslash\dom(u)$, else $n$},\end{equation*}
and set $\BR_{\rho,\tau}^\psi(u)=\OR_{\rho,\tau}^{\tilde\psi}(u)$ where
\begin{equation*}\tilde\psi_u(f^{\N\times\ps{\rho}^\N\to\tau}):=\psi_u(\lambda n,v\; . \; f(m_{n,u,v},u\at v)\mbox{ if $n\in\dom(v)\backslash\dom(u)$}).\end{equation*}
Then expanding definitions we have
\begin{equation*}\begin{aligned}\BR^\psi(u)&=\tilde\psi_u(\lambda n,v\; . \; \BR^{\tilde\psi}(\initSeg{u}{n}\at v)\mbox{ if $n\in\dom(v)\backslash\dom(u)$})\\
&\stackrel{(a)}{=}\psi_u(\lambda n,v\; . \; \BR(\initSeg{u}{m_{n,u,v}}\at (u\at v))\mbox{ if $n\in\dom(v)\backslash\dom(u)$})\\
&\stackrel{(b)}{=}\psi_u(\lambda n,v\; . \; \BR(u\at v)\mbox{ if $n\in\dom(v)\backslash\dom(u)$})\end{aligned}\end{equation*} 
where for $(a)$ we use $n\in\dom(v)\backslash\dom(u)\to m_{n,u,v}\in\dom(u\at v)\backslash \dom(u)$, and $(b)$ follows by minimality of $m_{n,u,v}$. \end{proof}

\begin{remark}[Update recursion]\label{ex-update-rec}It is a fairly easy observation that update recursion as defined in \cite{Berger(2004.0)} -
\begin{equation*}\UR^H(u)=H_u(\lambda n,x^\rho\; . \; \UR^H(\update{u}{n}{x})\mbox{ if $n\notin\dom(u)$}),\end{equation*}
is a simple instance of backward recursion obtained by setting
\begin{equation*}\psi_u(f^{\N\times\ps{\rho}^\N\to\tau}):=H_u(\lambda n,x\; . \; f(n,\update{u}{n}{x})\mbox{ if $n\notin\dom(u)$}).\end{equation*}\end{remark}

\begin{remark}[Bar recursion]\label{ex-bar-rec}For those readers interested in the computability theory of bar recursion, it might be instructive to pause for a moment to consider a natural instance of bar recursion that arises from backward recursion. Let us define $\IBR(H,s^{\rho^\ast}):=\BR^\psi(\ext{s})$ where
\begin{equation*}\psi_u(f^{\N\times\ps{\rho}^\N\to\tau}):=_{\tau} H(u,\lambda t\; . \; f(|t|-1,\ext{t})). \end{equation*}
Then it is not too hard to show that $\IBR$ satisfies
\begin{equation*}\IBR(H,s)=H(\ext{s},\lambda t\;.\;\IBR(H,s\at t)\mbox{ if $|t|>|s|$}),\end{equation*}
and this can be viewed as a `implicitly well-founded' variant of Spector's bar recursion (\ref{eqn-intro-SBR}). The reason we highlight this is that while several such implicit variants of Spector's so-called `special' instance of bar recursion\footnote{See \cite{OliPow(2012.2)} for the distinction between the special and generals forms of Spector's bar recursor.} have been studied, including both modified bar recursion and the implicit product of selection functions, constructing a direct analogue to the general form is more complicated (for example, an implicit form of the so-called product of quantifiers is known not to exist \cite{EscOli(2011.0)}). 

The subtle reason for this is that such variants of bar recursion must not be allowed to access the length of the input sequence $s$. For example, no object $\Phi$ can satisfy the slightly altered equation
\begin{equation*}\Phi(H,s)=H(s,\lambda t\;.\;\Phi(H,s\at t)\mbox{ if $|t|>|s|$})\end{equation*}
even for discrete output type, since we could just take $\tau=\N$ and define $H(s,g):=1+f(s\ast 0)$, and then $\Phi(\pair{})=n+1+\Phi(H,\initSeg{0}{n+1})>n$ for all $n$, which cannot hold in any model of arithmetic. Indeed, trying to define this from $\BR$ with continuous $\psi_u$ is impossible, since we'd require a non-continuous unbounded search (and thus totality of the underlying instance of $\BR$ would no longer be an open predicate). Thus we overcome the difficulty with implicit variants of bar recursion by removing access to the length of the input. Note that this problem is not a feature of modified bar recursion and implicit products of selection functions (or indeed any of the realizers we define in the following sections), since these are defined `pointwise', and make recursive calls only when we are accessing points already greater than the length of the input sequence.\end{remark}

\section{A computational interpretation of the double negation shift}
\label{sec-countable}

Now that we have completed the mathematical groundwork we come to the core of the paper. In this section we give a new, general realizability interpretation to the double negation shift. Ultimately, this will form a special case of the interpretation of full dependent choice given in the next section. However, by focusing first on the double negation shift we have an opportunity to present our main ideas in a slightly simplified setting, then the extension to full dependent choice will mostly be a matter of taking care of some additional technical details. 

\subsection{Modified realizability intepretation of extensions of $\PAomega$}
\label{sec-countable-mr}

We begin by very briefly recalling how Kreisel's modified realizability can be used in conjunction with the so-called Friedman trick to extract programs from classical proofs of $\Pi^0_2$-formulas. This is all completely standard, so we omit most of the details. For every formula in the language of $\HAomega$ the realizability relation $x\mr A$ is inductively defined by
\begin{equation*}\begin{aligned}()\mr A &\equiv A\mbox{ if $A$ is an atomic formula},\\
x,y\mr (A\wedge B)&\equiv x\mr A\wedge y\mr B,\\
n^\N,x,y\mr (A\vee B)&\equiv (n=0\to x\mr A)\wedge (n\neq 0\to y\mr B),\\
f\mr (A\to B)&\equiv \forall x(x\mr A\to fx\mr B),\\
x\mr \forall zA(z)&\equiv\forall z(xz\mr A(z)),\\
x,y\mr\exists zA(x)&\equiv y\mr A(x).\end{aligned}\end{equation*}
It is well-known that whenever $\HAomega\vdash A$ then $\HAomega\vdash t\mr A$ where $t$ is some primitive recursive term extracted from the proof of $A$. The interpretation of classical logic, on the other hand, is more subtle. A simple combination of the negative translation with modified realizability fails to work since the atomic formula $\bot$ is realized by $()$ and therefore all negated formulas are trivially interpreted. In particular, this method gives us no way of extracting realizers from $\Pi^0_2$-formulas $\forall x^\N\exists y^\N A(x,y)$. 

One well established way of overcoming this problem is to slightly alter the definition of modified realizability by regarding $x^\N\mr\bot$ as an uninterpreted formula. Then, as discussed in e.g. \cite{BergOli(2005.0),BergSch(1995.0)}, from a classical derivation $\PAomega\vdash\forall y^\N\exists x^\N A(y,x)$ one can extract a term $t$ such that $\HAomega\vdash\forall y A(y,ty)$, utilising the aforementioned Friedman trick of replacing the formula $x\mr \bot$ by the quantifier-free formula $A(y,x)$. This idea can be smoothly expanded to extensions of $\PAomega$ with some additional axiom(s) $\Gamma$. Provided that $\HAomega+\Delta\vdash\Phi\mr \Gamma^{\negt}$, where $\Gamma^\negt$ denotes the negative translation of $\Gamma$, $\Delta$ is some set of axioms satisfying some natural closure properties with respect to $\bot$, and $\Phi$ is some closed term in the language of $\HAomega+\Delta$, then from a classical proof $\PAomega+\Gamma\vdash \forall y\exists x^\N A(y,x)$ one can extract a term $t$ in $\Phi$ such that $\HAomega+\Delta\vdash \forall y A(y,ty)$.  

Thus we have a method that allows us to extract realizers for $\Pi^0_2$ formulas from any extension $\Gamma$ of Peano arithmetic whenever we can realize the negative interpretation $\Gamma^{\negt}$ of $\Gamma$. In the remainder of this paper we develop this idea and focus on constructing terms $\Phi$ such that $\Phi\mr\Gamma^{\negt}$ where $\Gamma$ is either countable or countable dependent choice, and $\Gamma^{\negt}$ is the adapted realizability interpretation which treats $x\mr\bot$ as a new predicate in $x$. In fact, following \cite{EscOli(2012.0)} we generalise slightly and replace $\bot$ by some arbitrary formula $R$ whose type of realizers is a discrete type, emphasising the fact that $\bot$ acts as some undefined object to be realized. However, if the reader prefers they can just treat this as a relabelling and imagine $R=\bot$ throughout.

\subsection{The $J$-shift and its variants}
\label{sec-countable-real}

The axiom of countable choice is given by
\begin{equation*}\AC \; \colon \; \forall n\exists x^\rho B_n(x)\to\exists \alpha^{\N\to\rho}\forall n B_n(\alpha(n)).\end{equation*}
It is well-known that the negative translation of $\AC$,
\begin{equation*}\forall n((\exists x B^{\negt}_n(x)\to R)\to R)\to (\exists\alpha\forall n B^{\negt}_n(\alpha(n))\to R)\to R,\end{equation*}
(here with an arbitrary discretely-realized $R$ in place of $\bot$) is provable using the (trivially realized) intuitionistic axiom of choice from the simpler double-negation shift,
\begin{equation*}\DNS \; \colon \; \forall n((A(n)\to R)\to R)\to (\forall n A(n)\to R)\to R,\end{equation*}
by setting $A(n):=\exists x B^{\negt}_n(x)$. Thus a realizability interpretation of countable choice follows directly from that of $\DNS$. Note that this version of $\DNS$ for arbitrary $R$ is also called the $K$-shift in \cite{EscOli(2012.0)}. 

In order to successfully realize $\DNS$ one typically relies on a term $h$ realizing ex-falso-quodlibet in the form $\forall n(R\to A(n))$, and so in practice one must work with a restricted class of formulas $A(n)$ that admit such a realizer, such as any formula in the image of the negative translation (for which one can trivially construct such a $h$ even uniformly in $n$). This need for additional realizers and a corresonding restriction on formulas can seem slightly inelegant, and so a reformulation of $\DNS$ is given in \cite{EscOli(2012.0)} which, rather than separately assuming $R\to A(n)$, adds this positive information directly to the premise of $\DNS$, yielding
\begin{equation*}\forall n((A(n)\to R)\to A(n))\to (\forall n A(n)\to R)\to R.\end{equation*}
In \cite{EscOli(2012.0)} this is actually written in a equivalent form called the $J$-shift:
\begin{equation*}\mbox{$J$-shift} \; \colon \; \forall n((A(n)\to R)\to A(n))\to (\forall n A(n)\to R)\to \forall nA(n),\end{equation*}
and this is given a realizability interpretation using the product of selection functions, an interpretation which is valid for \emph{arbitrary} formulas $A(n)$. Then, in the case that $R\to A(n)$ is realizable, one easily reobtains an interpretation of the double negation shift:
\begin{proposition}[\cite{EscOli(2012.0)}]$J$-shift implies $\DNS$ over minimal logic, whenever $R\to A(n)$ holds.\end{proposition}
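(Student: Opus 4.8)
The plan is to unwind both schemata and observe that, under the standing hypothesis $R\to A(n)$, the premise of the $J$-shift is simply a consequence of the premise of $\DNS$. So I would argue directly in natural-deduction style: assume the two premises of $\DNS$, namely $\forall n((A(n)\to R)\to R)$ and $\forall n A(n)\to R$, and derive $R$.

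The key step is to manufacture the premise of the $J$-shift, i.e.\ $\forall n((A(n)\to R)\to A(n))$. Fix $n$ and assume $A(n)\to R$. Instantiating the first $\DNS$-premise at $n$ and applying it to this assumption yields $R$, and then $R\to A(n)$ gives $A(n)$. Discharging the assumption produces $(A(n)\to R)\to A(n)$, and generalising over $n$ gives $\forall n((A(n)\to R)\to A(n))$, exactly the hypothesis of the $J$-shift.

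Next I would feed this, together with the second $\DNS$-premise $\forall n A(n)\to R$, into the $J$-shift, obtaining its conclusion $\forall n A(n)$. A final application of $\forall n A(n)\to R$ then delivers $R$, which completes the derivation of $\DNS$.

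There is no real obstacle here; the only point requiring (trivial) care is the side condition on the ambient logic. Every rule used — implication introduction and elimination, and the universal quantifier rules — is available in minimal logic, and in particular the argument never appeals to ex-falso-quodlibet or to double negation elimination, so the implication genuinely goes through over minimal logic as claimed. (Note that the converse would instead use the strictly stronger conclusion $\forall nA(n)$ of the $J$-shift, but that direction is not part of the statement.)
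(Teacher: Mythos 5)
Your argument is correct and is exactly the routine derivation intended here: the paper itself gives no proof, attributing the proposition to Escard\'o--Oliva, and your steps (building the $J$-shift premise from the $\DNS$ premise via $R\to A(n)$, then applying $\forall n A(n)\to R$ to the $J$-shift conclusion) use only minimal-logic rules, as required.
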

Here, we adopt the convention of \cite{EscOli(2012.0)} in adding the positive information we need directly to the premise of the double negation shift, so that our interpretation is valid for all $A(n)$, and as in \cite{EscOli(2012.0)}, and as we show in our examples in Section \ref{sec-dependent-examples}, we can always convert our realizer to one of $\DNS$ for negated formulas $A(n)$, which in turn is sufficient to realize the axiom of countable choice.

For notational reasons we interpret a pair of syntactically more flexible variants of the $J$-shift, designed to match the family of realizers we construct.

\begin{definition}We define the $J^\ast_i$-shifts for $i=1,2$ by
\begin{equation*}\begin{aligned}\mbox{$J^\ast_1$-shift} \; &\colon \; \forall m,n((A(m)\to R)\to A(n))\to (\forall n A(n)\to R)\to R \\
\mbox{$J^\ast_2$-shift} \; &\colon \; \forall m,n((A(m)\to R)\to A(n))\to (\forall n A(n)\to R)\to \forall nA(n)\end{aligned}\end{equation*}
where $R$ has discrete realizing type.  \end{definition}
The following result confirms that our $J^\ast_i$-shift principles are nothing more than simple rephrasings of the original $J$-shift.
\begin{proposition}$\mbox{$J^\ast_1$-shift}\Leftrightarrow \mbox{$J^\ast_2$-shift}\Leftrightarrow\mbox{$J$-shift}$ over minimal logic.\end{proposition}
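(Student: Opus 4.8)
The plan is to prove the chain of equivalences $\mbox{$J^\ast_1$-shift}\Leftrightarrow \mbox{$J^\ast_2$-shift}\Leftrightarrow\mbox{$J$-shift}$ by exhibiting direct implications in a cycle, all carried out within minimal logic (so no use of ex-falso or double-negation elimination). First I would observe that the only genuine difference between $\mbox{$J^\ast_2$-shift}$ and the $J$-shift is the presence of two independent variables $m,n$ in the premise $\forall m,n((A(m)\to R)\to A(n))$ as opposed to the single variable in $\forall n((A(n)\to R)\to A(n))$; likewise the only difference between $\mbox{$J^\ast_1$-shift}$ and $\mbox{$J^\ast_2$-shift}$ is that the former concludes $R$ while the latter concludes $\forall n A(n)$. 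So the proof reduces to two small syntactic manipulations, and I would present it as: $\mbox{$J$-shift}\Rightarrow\mbox{$J^\ast_2$-shift}\Rightarrow\mbox{$J^\ast_1$-shift}\Rightarrow\mbox{$J$-shift}$, or some convenient permutation thereof.

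For $\mbox{$J$-shift}\Rightarrow\mbox{$J^\ast_2$-shift}$: given the hypothesis $\forall m,n((A(m)\to R)\to A(n))$, instantiate $m:=n$ to obtain $\forall n((A(n)\to R)\to A(n))$, and then apply the $J$-shift directly with the same second premise $\forall n A(n)\to R$ to conclude $\forall n A(n)$. Conversely, for $\mbox{$J^\ast_2$-shift}\Rightarrow\mbox{$J$-shift}$ one would need to go the other way: from $\forall n((A(n)\to R)\to A(n))$ derive $\forall m,n((A(m)\to R)\to A(n))$, which is \emph{not} immediate for arbitrary $A$. The clean way around this, following the idea in \cite{EscOli(2012.0)}, is to relativise: replace $A(n)$ by $A'(n):\equiv\forall m\leq n\,A(m)$ (or more simply $A'(n):\equiv\forall m\,A(m)$ when that suffices), for which the hypothesis $\forall n((A'(n)\to R)\to A'(n))$ does yield $\forall m,n((A'(m)\to R)\to A'(n))$ by monotonicity of $A'$, apply $\mbox{$J^\ast_2$-shift}$ to $A'$, and note $\forall n A'(n)\leftrightarrow\forall n A(n)$. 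I expect the main obstacle is precisely this direction — getting the two-variable premise from the one-variable one — and the standard trick is the course-of-values relativisation just described; one must check carefully that the second premise and the conclusion transfer correctly under this substitution and that everything stays within minimal logic.

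For the remaining pair, $\mbox{$J^\ast_2$-shift}\Rightarrow\mbox{$J^\ast_1$-shift}$: assume both premises of $\mbox{$J^\ast_1$-shift}$; by $\mbox{$J^\ast_2$-shift}$ (with the identical premises) we get $\forall n A(n)$, and then feeding this into the second premise $\forall n A(n)\to R$ yields $R$. Conversely, $\mbox{$J^\ast_1$-shift}\Rightarrow\mbox{$J^\ast_2$-shift}$ requires producing $\forall n A(n)$ from $R$ together with the first premise: given any $n$, from the first premise with $m:=n$ we have $(A(n)\to R)\to A(n)$; combined with $R$ we obtain $A(n)$ by taking the constant function $\lambda\_.R$-style map $A(n)\to R$ — more precisely, $R\vdash (A(n)\to R)$, hence $A(n)$, for each $n$, so $\forall n A(n)$. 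Assembling these implications into a cycle $\mbox{$J^\ast_1$}\Rightarrow\mbox{$J^\ast_2$}\Rightarrow\mbox{$J$-shift}\Rightarrow\mbox{$J^\ast_2$}\Rightarrow\mbox{$J^\ast_1$}$ (using the relativisation trick for the one step that needs it) establishes all three equivalences; I would write it up compactly, pointing to \cite{EscOli(2012.0)} for the $J$-shift direction and leaving the trivial instantiations to the reader.
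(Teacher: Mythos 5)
Three of your four implications ($J\mbox{-shift}\Rightarrow J^\ast_2$, $J^\ast_2\Rightarrow J^\ast_1$, $J^\ast_1\Rightarrow J^\ast_2$) are correct and essentially what the paper does. The genuine gap is in $J^\ast_2\mbox{-shift}\Rightarrow J\mbox{-shift}$. You assert that deriving the two-variable premise $\forall m,n((A(m)\to R)\to A(n))$ from the one-variable premise $\forall n((A(n)\to R)\to A(n))$ is ``not immediate for arbitrary $A$'' and therefore resort to a relativisation. In fact it is a three-line derivation in minimal logic, and it is exactly the key observation of the paper's proof: assume $A(m)\to R$; from $(A(m)\to R)\to A(m)$ conclude $A(m)$, hence $R$; by weakening $R\to(A(n)\to R)$ we get $A(n)\to R$, and then $(A(n)\to R)\to A(n)$ gives $A(n)$. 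No ex falso, no monotonicity, no restriction on $A$ is needed.

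Your proposed workaround does not close the gap as written. The ``more simply $A'(n):\equiv\forall m\,A(m)$'' variant is circular: verifying $(\,\forall m\,A(m)\to R)\to\forall m\,A(m)$ from $\forall n((A(n)\to R)\to A(n))$ is precisely the $J$-shift instance you are trying to prove (the $J$-shift is $\forall n\,J A(n)\to J(\forall n\,A(n))$ for $JX:\equiv(X\to R)\to X$), and it is not derivable in minimal or intuitionistic logic. The bounded version $A'(n):\equiv\forall m\leq n\,A(m)$ can be made to work, but ``by monotonicity of $A'$'' is not enough: monotonicity only handles the comparison of indices once you already know $\forall n((A'(n)\to R)\to A'(n))$, and establishing that requires the non-trivial lemma that $(P\to R)\to P$ and $(Q\to R)\to Q$ jointly yield $((P\wedge Q)\to R)\to P\wedge Q$, applied by induction on $n$ — a step you neither state nor prove, and which takes you outside the purely propositional, schema-uniform argument that the phrase ``over minimal logic'' calls for. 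Replacing the detour by the direct derivation above repairs the proof and recovers the paper's argument.
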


\begin{proof}$\mbox{$J^\ast_1$-shift}\Rightarrow \mbox{$J^\ast_2$-shift}$ follows from the observation that 
\begin{equation*}\forall m,n((A(m)\to R)\to A(n))\to (R\to\forall nA(n)),\end{equation*}
which is true because for arbitrary $n$ we have
\begin{equation*}R\to (A(n)\to R)\end{equation*}
and thus 
\begin{equation*}((A(n)\to R)\to A(n))\to (R\to A(n)).\end{equation*}
The implication $\mbox{$J^\ast_2$-shift}\Rightarrow \mbox{$J$-shift}$ follows from 
\begin{equation*}\forall n((A(n)\to R)\to A(n))\to \forall m,n((A(m)\to R)\to A(n)),\end{equation*} 
which is true because firstly
\begin{equation*}((A(m)\to R)\to A(m))\to ((A(m)\to R)\to R),\end{equation*}
and since $R\to (A(n)\to R)$ this yields
\begin{equation*}((A(m)\to R)\to A(m))\to ((A(m)\to R)\to (A(n)\to R))\end{equation*}
and so finally 
\begin{equation*}((A(n)\to R)\to A(n))\wedge((A(m)\to R)\to A(m))\to ((A(m)\to R)\to A(n))\end{equation*} 
The remaining direction $\mbox{$J$-shift}\Rightarrow \mbox{$J^\ast_1$-shift}$ is straightforward.\end{proof}

The reasons that we highlight these variants of the $J$-shift is that we want versions of the $J$-shift that are directly realized by our variants of bar recursion. As we will see, in most cases each variant will only use the premise of $J^\ast_i$-shift for $(m,n)\in I\subseteq\N\times\N$ for some $I$.

\subsection{Realizing the $J_i^\ast$-shift}
\label{sec-countable-real}

We focus on constructing a realizer for the $J_1^\ast$-shift, then a realizer of the $J^\ast_2$-shift comes out immediately  Suppose that the realizing types of $A(n)$ and $R$ are $\rho$ and $\tau$ respectively, where we assume that $\tau$ is discrete. The $J^\ast_1$-shift is realized by a term $\Phi$ of type $(\N\to \N\to  (\rho\to\tau)\to\rho)\to (\rho^\N\to\tau)\to\tau$, which, given terms $\varepsilon\colon\N\to\N\to (\rho\to\tau)\to\rho$ and $q\colon\rho^\N\to\tau$ that satisfy
\begin{equation*}\label{eqn-varepsilon-q}\begin{aligned}&\forall m,n,p^{\rho\to\tau}(\forall x^\rho(x \mr A(m)\to p(x)\mr R)\to\varepsilon_{m,n}(p)\mr A	(n))\\
&\forall\alpha^{\rho^\N}(\forall n(\alpha(n)\mr A(n))\to q(\alpha)\mr R)\end{aligned}\end{equation*}
returns a term $\Phi\varepsilon q\colon\tau$ satisfying $\Phi\varepsilon q\mr R$.

The basic idea that unites all such existing realizers of double-negation shift principles is to form an auxiliary functional $\Psi$ which performs a backward recursive loop, which builds increasingly large partial realizers of $\forall nA(n)$. 

More precisely, suppose that $\varepsilon$ and $q$ satisfy the premise of the $J_1^\ast$ shift as above, and imagine we are given a partial realizer $u\colon \ps{\rho}^\N$ which satisfies $\forall n\in\dom(u)(u(n)\mr A(n))$. Then let us somewhat informally define $\Psi^{\varepsilon,q}u:=q(u\at c(u))$, where $c$ is some as yet unspecified function on partial sequences, but the aim is that it forms a completion of the partial realizer of $u$ and that therefore $\Psi^{\varepsilon,q}u\mr R$ for all $u$. This completion $c(u)$ can be constructed by backward recursion using $\varepsilon$. For $n\notin\dom(u)$ we define
\begin{equation*}c(u)(n):=\varepsilon_{mnu,n}(\lambda x\; . \; \Psi^{\varepsilon ,q}(\update{vnu}{m}{x}))\end{equation*}  
for some index $mnu$ and partial realizer $vnu$ of $\forall nA(n)$. Again, both $mnu$ and $vnu$ are left unspecified for now, but note that if at the very least 
\begin{enumerate}[(i)]

\item $vnu$ is a partial realizer of $\forall nA(n)$ satisfying $u\sqsubseteq vnu$ and

\item $mnu\notin\dom(u)$ whenever $n\notin\dom(u)$,

\end{enumerate}
then $u\sqsubset \update{vnu}{mnu}{x}$ and $\update{vnu}{mnu}{x}$ is a partial realizer whenever $x\mr A(mnu)$. Therefore $\lambda x.\Psi^{\varepsilon,q}(\update{vnu}{mnu}{x})$ is a realizer of $A(mnu)\to R$ under the assumption that $\Psi$ is correct for all \emph{extensions} of $u$. The idea now is that we can admit this assumption as a backward induction hypothesis, and so by backward induction we can prove that $\Psi^{\varepsilon,q}u\mr R$ for all $u$. Then setting $\Phi\varepsilon q:=\Psi^{\varepsilon,q}\emptyset$ gives us a realizer for the $J^\ast_1$-shift, since $\emptyset$ is trivially a partial realizer of $\forall n A(n)$.

What remains is to formalise this idea and make some sensible choice of $mnu$ and $vnu$ satisfying (i) and (ii) above. The most natural might be to set $mnu=n$ and $vnu=u$ - in this case, as we will see below, is precisely the idea behind the Berardi-Bezem-Coquand functional of \cite{BBC(1998.0)}. However, more intricate choices lead to other realizers, including modified bar recursion and the product of selection functions. Our aim now is to make this intuition precise, and provide a sufficiently rich general construction of $m$ and $v$ which captures all of these realizers and much more.

\begin{proposition}\label{prop-main-AC}Suppose that $\rho$ and $\tau$ are the realizing types of $A(n)$ and $R$, with $\tau$ discrete, and that were are given a computable relation $\prec\colon\ps{\rho}^\N\to (\N\times\N\to\Bool)$ and an index $m\colon\N\times \ps{\rho}^\N\to\N$ such that
\begin{enumerate}[(i)]

\item\label{item-AC-conda} $\prec_{u}$ is well-founded,

\item\label{item-AC-condb} $n\notin\dom(u)\to m{nu}\notin\dom(u)\cup\{k\; | \; k\prec_u n\}$

\end{enumerate} 
for all $u$ and $n$. Then there is a term $\Psi_{(\prec,m)}^{\varepsilon ,q}\colon\ps{\rho}^\N\to\tau$ with parameters of type $\varepsilon\colon\N\to\N\to (\rho\to\tau)\to\rho$ and $q\colon\rho^\N\to\tau$ which is primitive recursively definable in $\BR+\lambda u.\wR_{\prec_{u}}$, and satisfies the recursive equation
\begin{equation*}\Psi^{\varepsilon,q}(u)=q(\underbrace{u\at\lambda n\; . \; \varepsilon_{m{nu},n}(\lambda x\;. \; \Psi^{\varepsilon,q}(u\at\update{\initSegsss{\alpha_u}{n}}{m{nu}}{x})}_{\alpha_u})),\end{equation*} 
where $\alpha_u\colon\rho^\NN$ denotes the argument of $q$ as indicated above, and $\initSegsss{\alpha_u}{n}\colon\ps{\rho}^\NN$ is defined to be $$\lambda k.(\alpha_u(k)\mbox{ if $k\prec_{u} n$ else $\bot$}).$$\end{proposition}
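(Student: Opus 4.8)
The plan is to define $\Psi^{\varepsilon,q}_{(\prec,m)}$ as an instance of the backward recursor $\BR$ whose step functional internally builds the sequence $\alpha_u$ of the statement by a well-founded recursion along $\prec_u$ — that is, by an instance of $\wR_{\prec_u}$. Concretely, given $f^{\N\times\ps{\rho}^\N\to\tau}$ and a partial sequence $u$, we first define $\alpha^{f,u}\colon\rho^\N$ by recursion over the well-founded relation $(\N,\prec_u)$ via
\[
\alpha^{f,u}(n):=\begin{cases}u(n) & \mbox{if $n\in\dom(u)$}\\ \varepsilon_{m{nu},n}\left(\lambda x\;.\;f(m{nu},\update{\initSegsss{\alpha^{f,u}}{n}}{m{nu}}{x})\right) & \mbox{otherwise.}\end{cases}
\]
This is a legitimate definition by $\wR_{\prec_u}$ because its right-hand side refers to $\alpha^{f,u}$ only through $\initSegsss{\alpha^{f,u}}{n}$, hence only at points $k\prec_u n$; one recovers $\initSegsss{\alpha^{f,u}}{n}=\lambda k.(\alpha^{f,u}(k)\mbox{ if $k\prec_u n$ else $\bot$})$ from the $\prec_u$-bounded family of values supplied by $\wR$ simply by reinjecting those values into $\ps{\rho}$ and masking to $\bot$ off $\{k : k\prec_u n\}$ (using decidability of $\prec$). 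By hypothesis (\ref{item-AC-conda}) the relation $\prec_u$ is well-founded, so $\alpha^{f,u}$ will be total whenever $f$ is. We then set $\psi^{\varepsilon,q}_u(f):=q(\alpha^{f,u})$ and $\Psi^{\varepsilon,q}_{(\prec,m)}(u):=\BR^{\psi^{\varepsilon,q}}(u)$. By construction $\psi^{\varepsilon,q}$ is primitive recursive in $\lambda u.\wR_{\prec_u}$ and the parameters $\varepsilon,q$ (and $\prec,m$), so $\Psi^{\varepsilon,q}_{(\prec,m)}$ is primitive recursive in $\BR+\lambda u.\wR_{\prec_u}$; moreover $\psi^{\varepsilon,q}$ is total and continuous and $\tau$ is discrete, so $\BR^{\psi^{\varepsilon,q}}$ exists as a total functional by the remarks of Section~\ref{sec-open-recursion}.

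We will then verify the displayed recursive equation by unfolding the two fixpoint equations. The defining equation of $\BR$ gives $\Psi^{\varepsilon,q}(u)=\psi^{\varepsilon,q}_u(f)=q(\alpha^{f,u})$, where $f:=\lambda n,v\;.\;\Psi^{\varepsilon,q}(u\at v)\mbox{ if $n\in\dom(v)\backslash\dom(u)$}$. It then remains to check that $\alpha^{f,u}$ coincides with the argument $\alpha_u$ of $q$ appearing in the statement, i.e. that $\alpha^{f,u}=u\at\lambda n.\varepsilon_{m{nu},n}(\lambda x.\Psi^{\varepsilon,q}(u\at\update{\initSegsss{\alpha^{f,u}}{n}}{m{nu}}{x}))$. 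For $n\in\dom(u)$ both sides are $u(n)$, and for $n\notin\dom(u)$ comparison with the definition of $\alpha^{f,u}(n)$ reduces the claim to showing $f(m{nu},w)=\Psi^{\varepsilon,q}(u\at w)$ for $w:=\update{\initSegsss{\alpha^{f,u}}{n}}{m{nu}}{x}$, which by the definition of $f$ holds as soon as $m{nu}\in\dom(w)\backslash\dom(u)$. Now $m{nu}\in\dom(w)$ is immediate, $m{nu}\notin\dom(u)$ follows from hypothesis (\ref{item-AC-condb}) using $n\notin\dom(u)$, and the same hypothesis also yields $m{nu}\notin\{k : k\prec_u n\}=\dom(\initSegsss{\alpha^{f,u}}{n})$, so $w$ is a genuine update carrying value $x$ at $m{nu}$. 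Hence $\alpha^{f,u}=\alpha_u$ and the equation follows.

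The step we expect to require the most care is the bookkeeping in the nested recursion: checking that the course-of-values data $\initSegsss{\alpha^{f,u}}{n}$ really is available inside the $\wR_{\prec_u}$-step at $n$ (it is, since it only involves $\prec_u$-predecessors of $n$), and that each recursive call of $\Psi$ is a bona fide backward-recursive call, i.e. on a strictly larger partial sequence $u\at w\sqsupset u$ witnessed by a point outside $\dom(u)$. The two hypotheses enter precisely here: (\ref{item-AC-conda}) makes the inner well-founded recursion terminate and keeps $\alpha^{f,u}$ total, while (\ref{item-AC-condb}) makes the witness $m{nu}$ simultaneously fresh relative to $\dom(u)$ (so $u\at w\sqsupset u$ strictly) and fresh relative to $\{k : k\prec_u n\}$ (so the update is non-degenerate). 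Everything else should be a routine unwinding of definitions, with existence of $\BR$ and $\wR_{\prec_u}$ as total functionals in the continuous model being exactly as discussed in Section~\ref{sec-open-recursion}.
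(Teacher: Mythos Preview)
Your proposal is correct and follows essentially the same route as the paper: you define $\Psi^{\varepsilon,q}_{(\prec,m)}:=\BR^{\psi^{\varepsilon,q}}$ with $\psi^{\varepsilon,q}_u(f):=q(\alpha^{f,u})$, where $\alpha^{f,u}$ is built by $\wR_{\prec_u}$, and then unfold both fixpoints to verify the recursive equation, using condition~(\ref{item-AC-condb}) to discharge the guard $m{nu}\in\dom(w)\backslash\dom(u)$ on the $\BR$-call. The only cosmetic difference is that the paper's inner definition passes $u\at w$ to $f$ rather than $w$ itself (and so has an extra trivial simplification $u\at(u\at w)=u\at w$ in the unfolding), but this is immaterial.
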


\begin{remark}\label{rem-wR}Technically speaking the term $\lambda u.\wR_{\prec_{u}}$ is not properly defined in $\EHAomega$ - we are simply assuming here that there exists a function $F$ definable in $\EHAomega$ such that for all $u$, $F(u)$ satisfies the defining equation of $\wR_{\prec_{u}}$ i.e. well-founded recursion over $\prec_u$ as considered in Section \ref{sec-open-recursion}. However, in all of the concrete examples we consider, $\prec$ will not depend on $u$ and $\wR_{\prec}$ will always be trivially definable in $\EHAomega$, so this rather casual definition will not be problematic. \end{remark}

Proposition \ref{prop-main-AC} above is a special case of Proposition \ref{prop-main-DC} in the next section, whose proof can be found in the appendix, and so we omit a proof of Proposition \ref{prop-main-AC} here. However, on an informal level, it is not too difficult to see that $\Psi$ is well-defined. First we note that the definition of $\initSegsss{\alpha_u}{n}$ is not circular, since it is only used to define $\alpha_u(n)$, and thus $\alpha_u$ as a whole is constructed using the well-founded recursor $\wR_{\prec_u}$. Then one observes that the whole expression is a well-defined backward recursive functional since to compute $\Psi(u)$ we only call $\Psi$ on arguments of the form $u\at \update{\initSegsss{\alpha_u}{n}}{m}{x}$ for $m\notin\dom(u)$ by condition (\ref{item-AC-condb}), which are always strict extensions of $u$.

\begin{theorem}\label{thm-main-AC}Suppose that $\Psi_{(\prec,m)}$ is defined as in Proposition \ref{prop-main-AC} for $\prec$ and $m$ satisfying conditions (\ref{item-AC-conda}) and (\ref{item-AC-condb}). Then the term $\Phi_{(\prec,m)}:=\lambda \varepsilon,q\; . \; \Psi_{(\prec,m)}^{\varepsilon,q}(\emptyset)$ realizes the $J^\ast_1$-shift, provably in $\EHAomega+\CONT+\backI+\BR+(\wI_{\prec_{u}})+\lambda u.\wR_{\prec_u}$. \end{theorem}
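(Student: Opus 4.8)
The plan is to reduce the theorem to a single application of backward induction $\backI$. First I would fix parameters $\varepsilon$ and $q$ satisfying the premise of the $J^\ast_1$-shift in the modified realizability sense, i.e.\ $\forall m,n,p\,(\forall x(x\mr A(m)\to p(x)\mr R)\to\varepsilon_{m,n}(p)\mr A(n))$ and $\forall\alpha\,(\forall n(\alpha(n)\mr A(n))\to q(\alpha)\mr R)$. Call a partial sequence $u\colon\ps{\rho}^{\N}$ a \emph{partial realizer} if $\forall n\in\dom(u)(u(n)\mr A(n))$, and set
\[ B(u)\ :\equiv\ \big[\forall n\in\dom(u)\,(u(n)\mr A(n))\big]\ \to\ \Psi^{\varepsilon,q}_{(\prec,m)}(u)\mr R. \]
Since $\emptyset$ is vacuously a partial realizer and $\Phi_{(\prec,m)}\varepsilon q=\Psi^{\varepsilon,q}_{(\prec,m)}(\emptyset)$, it suffices to derive $\forall u\,B(u)$, and the whole proof consists in obtaining this from $\backI$.

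The first step is to check that $B$ is open with respect to $\sqsubseteq$, so that $\backI$ is applicable. The hypothesis of $B(u)$ has the form $\forall n[n\in\dom(u)\to\mathcal{A}(n,u)]$ with $\mathcal{A}(n,u):\equiv u(n)\mr A(n)$, and this is monotone in the sense of $(\ref{eqn-openp-mon})$ because $u\sqsubseteq v$ and $n\in\dom(u)$ force $v(n)=u(n)$. For the conclusion I would invoke $\CONT$ on the functional $u\mapsto\Psi^{\varepsilon,q}_{(\prec,m)}(u)$ — legitimate, since its argument has type $\ps{\rho}^{\N}$ and its output type $\tau$ is discrete — to rewrite $\Psi^{\varepsilon,q}_{(\prec,m)}(u)\mr R$ equivalently as $\exists n\,P(\initSeg{u}{n})$, where $P(s)$ expresses that $|s|$ is a modulus of continuity for $\Psi^{\varepsilon,q}_{(\prec,m)}$ at $\ext{s}$ and that $\Psi^{\varepsilon,q}_{(\prec,m)}(\ext{s})\mr R$. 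Proposition~\ref{prop-openp} then delivers openness of $B$.

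Next comes the induction step: fix $u$, assume the backward induction hypothesis $\forall v\sqsupset u\,B(v)$, assume $u$ is a partial realizer, and aim for $\Psi^{\varepsilon,q}_{(\prec,m)}(u)\mr R$. By the recursive equation of Proposition~\ref{prop-main-AC} this equals $q(\alpha_u)$, so by the specification of $q$ it is enough to prove $\forall n(\alpha_u(n)\mr A(n))$, which I would establish by a nested well-founded induction $\wI_{\prec_u}$ on $n$. If $n\in\dom(u)$, then $\alpha_u(n)=u(n)\mr A(n)$. If $n\notin\dom(u)$, then $\alpha_u(n)=\varepsilon_{m(n,u),n}\big(\lambda x.\,\Psi^{\varepsilon,q}_{(\prec,m)}(w_x)\big)$ where $w_x:=u\at\update{\initSegsss{\alpha_u}{n}}{m(n,u)}{x}$, so by the specification of $\varepsilon$ one reduces to showing $\Psi^{\varepsilon,q}_{(\prec,m)}(w_x)\mr R$ whenever $x\mr A(m(n,u))$. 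Here condition $(\ref{item-AC-condb})$ gives $m(n,u)\notin\dom(u)$ and $m(n,u)\not\prec_u n$, hence $m(n,u)\notin\dom(\initSegsss{\alpha_u}{n})$, so $w_x(m(n,u))=x$ and therefore $w_x\sqsupset u$ strictly; and $w_x$ is again a partial realizer, since on $\dom(u)$ it agrees with $u$, at $m(n,u)$ its value is $x\mr A(m(n,u))$, and at any remaining $i\in\dom(w_x)$ one has $i\prec_u n$ and $w_x(i)=\alpha_u(i)$, which realizes $A(i)$ by the well-founded induction hypothesis. Thus the backward induction hypothesis applies to $w_x$ and yields $\Psi^{\varepsilon,q}_{(\prec,m)}(w_x)\mr R$, closing both inductions; hence $B(u)$, whence $\forall u\,B(u)$ by $\backI$ and in particular $B(\emptyset)$, as required. ($\BR$ and $\lambda u.\wR_{\prec_u}$ enter only through Proposition~\ref{prop-main-AC}, which supplies the term $\Psi^{\varepsilon,q}_{(\prec,m)}$ and its recursive equation.)

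The main obstacle will be the correct interleaving of the two inductions inside the proof of $\forall n(\alpha_u(n)\mr A(n))$: for each recursive-call argument $w_x$ one must \emph{simultaneously} ensure that it is a \emph{strict} $\sqsubseteq$-extension of $u$ (so the backward induction hypothesis is available) and that it is \emph{still} a partial realizer (so $B(w_x)$ is usable), and it is precisely here that condition $(\ref{item-AC-condb})$ is indispensable — without $m(n,u)\not\prec_u n$ the fresh value $x$ would be overwritten by the already-recorded $\alpha_u(m(n,u))$ inside $\initSegsss{\alpha_u}{n}$ — while the $\prec_u$-induction hypothesis is exactly what handles the indices stored in $\initSegsss{\alpha_u}{n}$. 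The only other mildly technical point is putting $\Psi^{\varepsilon,q}_{(\prec,m)}(u)\mr R$ into the syntactic shape $\exists n\,P(\initSeg{u}{n})$ demanded by Proposition~\ref{prop-openp}, which is what makes the use of $\CONT$ (rather than a bare well-foundedness argument) essential; everything else is routine unwinding of the realizability clauses.
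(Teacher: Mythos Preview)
Your proposal is correct and follows essentially the same approach as the paper: define the same open predicate $B(u)$, verify openness via Proposition~\ref{prop-openp} and $\CONT$, and then carry out the backward induction with an inner $\wI_{\prec_u}$-induction showing $\forall n(\alpha_u(n)\mr A(n))$. You are in fact slightly more explicit than the paper in spelling out why condition~(\ref{item-AC-condb}) (in particular $m(n,u)\not\prec_u n$) guarantees $w_x(m(n,u))=x$, but the argument is otherwise identical.
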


\begin{remark}\label{rem-wI}Here $(\wI_{\prec_{u}})$ denotes the collection of well-founded induction schemata over the well-founded relations $\prec_u$. As with the corresponding modes of recursion discussed in Remark \ref{rem-wR}, in practise $\prec_u$ will typically not depend on $u$ and $\wI_{\prec}$ will be easily provable in $\EHAomega$.\end{remark}

\begin{proof}Assume that $\varepsilon$ and $q$ realize the premise of the $J^\ast_1$-shift. We prove that $\Phi_{(\prec,m)}\varepsilon q\mr R$ using a main backward induction and an auxiliary well-founded induction. Let us define
\begin{equation*}B(u):\equiv \forall n\in\dom(u)(u(n)\mr A(n))\to\Psi^{\varepsilon,q}(u)\mr R.\end{equation*}
This is an open formula in the sense of Proposition \ref{prop-openp} since by $\CONT$ we have $$\Psi(u)\mr R\leftrightarrow \exists n\forall w\Psi(\initSeg{u}{n}\at w)\mr R.$$ Now, to prove the backward induction step for $B$, assume that $u$ is a partial realizer of $\forall n A(n)$ (i.e. the premise of $B(u)$ holds) and suppose that $B(v)$ holds for all $v\sqsupset u$. We want to show that $\Psi^{\varepsilon,q}(u)\mr R$.

We do this by first proving using $\wI_{\prec_u}$ that $\alpha_u$ (as defined in Proposition \ref{prop-main-AC}) realizes $\forall n A(n)$. Fix $n$ and assume as an auxiliary induction hypothesis that $\alpha_u(k)\mr A(k)$ for all $k\prec_u n$. If $n\in\dom(u)$ we trivially have $\alpha_u(n)=u(n)\mr A(n)$, so assume that $n\notin\dom(u)$. In this case, first observe that
\begin{equation*}x\mr A(mnu)\to \forall i\in\dom(v)(v(i)\mr A(i))\mbox{\; \; and \; \;}\forall i\in\dom(v)(v(i)\mr A(i)) \to \Psi^{\varepsilon,q}(v)\mr R\end{equation*}
for $v:=u\at\update{\initSegsss{\alpha_u}{n}}{mnu}{x}$. The first step is clear by the auxiliary induction hypothesis, which implies that $u\at\initSegsss{\alpha_u}{n}$ is a partial realizer of $\forall nA(n)$, while the second step follows from the main hypothesis $B(v)$, since by condition (\ref{item-AC-condb}) we know that $n\notin\dom(u)$ implies that $mnu\notin\dom(u)$ and thus $v\sqsupset u$. Putting this together we see that $\lambda x.\Psi(u\at\update{\initSegsss{\alpha_u}{n}}{mnu}{x})\mr (A(mnu)\to R)$ and thus 
\begin{equation*}\alpha_u(n)=\varepsilon_{mnu,n}(\lambda x.\Psi(u\at\update{\initSegsss{\alpha_u}{n}}{mnu}{x}))\mr A(n).\end{equation*} 
by correctness of $\varepsilon$. This completes the auxiliary well-founded induction, giving us $\forall n(\alpha_u(n)\mr A(n))$, and therefore $\Psi(u)=q(\alpha_u)\mr R$, which completes the main backward induction step. Finally, then, we obtain $\forall u B(u)$ by $\backI$, and so in particular by $B(\emptyset)$ we have $\Psi^{\varepsilon,q}(\emptyset)\mr R$, which completes the proof.\end{proof}

\begin{corollary}\label{cor-main-AC}There is a term $\tilde\Phi_{({\prec},{m})}$ primitive recursive in $\BR+\lambda u.\wR_{\prec_{u}}$ which realizes the $J^\ast_2$-shift, provably in $\EHAomega+\CONT+\backI+\BR+(\wI_{\prec_{u}})+\lambda u.\wR_{\prec_u}$.\end{corollary}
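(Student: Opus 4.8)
The plan is to read off the required realizer directly from the proof of Theorem~\ref{thm-main-AC}, essentially without new work. Recall that $x\mr\forall n A(n)$ unfolds to $\forall n(xn\mr A(n))$, so realizing the $\mbox{$J^\ast_2$-shift}$ amounts to the following: given $\varepsilon$ and $q$ realizing its premise (which is literally the same premise as for the $\mbox{$J^\ast_1$-shift}$), produce a term of type $\N\to\rho$ whose value at each $n$ realizes $A(n)$. But such a term was already constructed inside Proposition~\ref{prop-main-AC}, namely the functional $\alpha_u$ at $u=\emptyset$. I would therefore set
\begin{equation*}\tilde\Phi_{(\prec,m)}:=\lambda\varepsilon,q\;.\;\alpha_\emptyset,\end{equation*}
where $\alpha_\emptyset=\lambda n\;.\;\varepsilon_{m{n\emptyset},n}(\lambda x\;.\;\Psi^{\varepsilon,q}_{(\prec,m)}(\update{\initSegsss{\alpha_\emptyset}{n}}{m{n\emptyset}}{x}))$ is, exactly as in Proposition~\ref{prop-main-AC}, obtained by running the well-founded recursor $\wR_{\prec_\emptyset}$ on top of $\Psi^{\varepsilon,q}_{(\prec,m)}$. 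In particular $\tilde\Phi_{(\prec,m)}$ is primitive recursive in $\BR+\lambda u.\wR_{\prec_u}$, as claimed.

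For correctness I would simply invoke the intermediate step of the proof of Theorem~\ref{thm-main-AC}. That proof establishes $\forall u\,B(u)$ by $\backI$, and inside its backward-induction step it shows, for every $u$ that is a partial realizer of $\forall n A(n)$ and for which $B(v)$ holds at all $v\sqsupset u$, that $\alpha_u\mr\forall n A(n)$ — this is the sub-claim proved there by $\wI_{\prec_u}$, using condition~(\ref{item-AC-condb}) to guarantee $u\sqsubset\update{\initSegsss{\alpha_u}{n}}{m{nu}}{x}$ and the hypothesis $B(v)$ to turn each $\lambda x.\Psi^{\varepsilon,q}(v)$ into a realizer of $A(m{nu})\to R$, so that correctness of $\varepsilon$ yields $\alpha_u(n)\mr A(n)$. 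Applying this with $u=\emptyset$: the assumption ``$\emptyset$ is a partial realizer'' is vacuous, and ``$B(v)$ for all $v\sqsupset\emptyset$'' follows from the already-established $\forall u\,B(u)$; hence $\alpha_\emptyset\mr\forall n A(n)$, i.e. $\tilde\Phi_{(\prec,m)}\varepsilon q$ realizes the conclusion of the $\mbox{$J^\ast_2$-shift}$. The whole argument takes place in $\EHAomega+\CONT+\backI+\BR+(\wI_{\prec_u})+\lambda u.\wR_{\prec_u}$, exactly the theory of Theorem~\ref{thm-main-AC}.

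I do not expect any genuine obstacle here: the point of the corollary is that the proof of Theorem~\ref{thm-main-AC} proves slightly more than it states, and one only has to refrain from discarding the stronger conclusion. The single place needing a line of care is the legitimacy of the seemingly circular defining equation for $\alpha_\emptyset$, which is fine because $\initSegsss{\alpha_\emptyset}{n}$ only ever consults values $\alpha_\emptyset(k)$ with $k\prec_\emptyset n$, so that $\alpha_\emptyset$ is genuinely definable by well-founded recursion over $\prec_\emptyset$ — but this is already discussed after Proposition~\ref{prop-main-AC}. As an alternative that sidesteps even this, one could instead use the logical implication $\mbox{$J^\ast_1$-shift}\Rightarrow\mbox{$J^\ast_2$-shift}$: from $\Phi_{(\prec,m)}\varepsilon q\mr R$ given by Theorem~\ref{thm-main-AC}, the term $\lambda n\;.\;\varepsilon_{n,n}(\lambda x\;.\;\Phi_{(\prec,m)}\varepsilon q)$ realizes $\forall n A(n)$, since $\lambda x.\Phi_{(\prec,m)}\varepsilon q\mr(A(n)\to R)$ and $\varepsilon_{n,n}\mr((A(n)\to R)\to A(n))$, and this term is again primitive recursive in $\BR+\lambda u.\wR_{\prec_u}$. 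I would present the first construction as the main one, since it exhibits the actual choice sequence being computed.
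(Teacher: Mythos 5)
Your proposal is correct and matches the paper's own proof: the paper likewise defines $\tilde\Psi^{\varepsilon,q}(u):=\alpha_u$ and sets $\tilde\Phi\varepsilon q:=\tilde\Psi^{\varepsilon,q}(\emptyset)=\alpha_\emptyset$, with correctness read off from $\forall u\,B(u)$ together with the $\prec_\emptyset$-induction already carried out inside the proof of Theorem \ref{thm-main-AC}. Your closing alternative via $\lambda n.\varepsilon_{n,n}(\lambda x.\Phi_{(\prec,m)}\varepsilon q)$ is also sound, but the main construction is exactly the paper's.
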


\begin{proof}Keeping all the notation of Theorem \ref{thm-main-AC}, define $\tilde\Psi^{\varepsilon,q}(u)=\alpha_u$ so that it satisfies the recursive equation
\begin{equation*}\tilde\Psi^{\varepsilon,q}(u)=u\at\lambda n\; . \; \varepsilon_{m{nu},n}(\lambda x\;. \; q(\tilde\Psi^{\varepsilon,q}(u\at\update{\initSegsss{\alpha_u}{mnu}}{m{nu}}{x}))).\end{equation*}
Define $\tilde\Phi\varepsilon q:=\tilde\Psi^{\varepsilon,q}(\emptyset)$. Then it follows immediately from $\forall u B(u)$ and $\prec_\emptyset$-induction, as in the proof of Theorem \ref{thm-main-AC}, that $\tilde\Phi\varepsilon q\mr \forall nA(n)$. \end{proof}

Let us now briefly consider some specific instantiations of the parameters of Theorem \ref{thm-main-AC} (a more detailed discussion will be given in Section \ref{sec-dependent-examples}). Firstly, setting $\Psi_0=\Psi_{(\prec,m)}$ in the simple case that $\prec_u=\emptyset$ and $mnu=n$ for all $u$, we have $\initSegsss{\alpha_u}{n}=\emptyset$ and, using the abbreviation $\varepsilon_n$ for $\varepsilon_{n,n}$ our realizer becomes
\begin{equation*}\Psi_0^{\varepsilon,q}(u)=q(u\at\lambda n\; . \; \varepsilon_n(\lambda x\; . \; \Psi_0(\update{u}{n}{x})))\end{equation*}
which is nothing more than a simple variant of the Berardi-Bezem-Coquand realizer of countable choice given in \cite{BBC(1998.0)} and discussed in \cite{Berger(2004.0)}. On the other hand, suppose that we still keep $\prec_u=\emptyset$ for all $u$, but define $mnu:=\least i\leq n(i\notin\dom(u))$, where recall that $\least$ is the bounded search operator which in this case returns the least $i\leq n$ satisfying $i\notin\dom(u)$, and $n$ if no such $i$ exists. Then finite input for this variant of $\Psi$ will be of the form $\ext{s}$ for some sequences $s\colon\rho^\ast$. Defining $\Psi_1(s):=\Psi_{(\emptyset,m)}(\ext{s})$ and observing that for $n\geq |s|$ we have $mn\ext{s}=|s|$, we obtain 
\begin{equation*}\Psi^{\varepsilon,q}_1(s):=q(\ext{s}\at\lambda n\; . \; \varepsilon_{|s|,n}(\lambda x\; . \; \Psi_1(s\ast x)))\end{equation*}
which is just a (non-dependent) form of modified bar recursion. Finally, let us define $\prec_u=<$ and $mnu=n$ for all $u$. Then setting $\Psi_2(s):=\Psi_{(<,m)}(\ext{s})$, abbreviating $\varepsilon_{n,n}$ by $\varepsilon_n$ and observing that for $n\geq |s|$ we have $\ext{s}\at \initSegsss{\alpha_u}{n}=\initSeg{\alpha_u}{n}$ we obtain a realizing term satisfying
\begin{equation*}\Psi_2^{\varepsilon,q}(s)=q(\underbrace{\ext{s}\at\lambda n\; . \; \varepsilon_n(\lambda x\; . \; \Psi_2(\initSeg{\alpha_u}{n}\ast x))}_{\alpha_u}).\end{equation*}
The corresponding variant $\tilde\Psi_2$ which realizes $J^\ast_2$-shift is exactly the simple implicit product of selection functions of \cite{EscOli(2010.1)}. 

Thus three completely different modified realizability intepretations of countable choice appear as simple instances of Theorem \ref{thm-main-AC}. Moreover, in each instance we only require a restricted form of backward induction and well-founded recursion which corresponds exactly to the soundness proofs used in the original papers: for $\Psi_0$ Theorem \ref{thm-main-AC} is reduced to the proof of the double negation shift using update recursion given in \cite{Berger(2004.0)}, while $\Psi_1$ and $\Psi_2$ require backward induction relativised to downward closed partial functions, which is entirely equivalent to the variants of bar induction used to prove their correctness in \cite{BergOli(2005.0)} and \cite{EscOli(2010.1)} respectively. Thus Theorem \ref{thm-main-AC} doesn't simply provide a parametrised framework with which different realizers can be compared, but also a framework in which their correctness proofs can be viewed in a uniform way as relativisations of backward induction.

Of course the construction of such a framework is only partially motivated by the desire to compare existing interpretations. Theorem \ref{thm-main-AC} generalises existing work in that one can use an arbitrary parameters to define new realizers of the $J^\ast_i$-shift that are automatically correct, giving an additional level of flexibility and power when it comes to extracting computational content from proofs in practise.

However, we do not discuss this in any more detail here, instead proceeding straight to the generalisation of Theorem \ref{thm-main-AC} to full dependent choice.

\section{A computational interpretation of dependent choice}
\label{sec-dependent}

We now give a parametrised realizability interpretation to the principle of countable dependent choice. Our formulation of dependent choice will be slightly more general that the usual sequential variants treated in e.g. \cite{BergOli(2005.0),EscOli(2012.0),Seisenberger(2008.0)}, in the sense that we parametrise the principle itself by a decidable well-founded relation $\lhd$ on $\N$ which dictates the underlying dependency of the choice sequence.

To be more precise, given a decidable strict well-founded partial order $\lhd$ let us extend our type system with types $\dc{\rho}{\lhd}$ which represent the set $\bigcup_{n\in\N} \dc{\rho}{\lhd_n}$ where
\begin{equation*}\dc{\rho}{\lhd_n}:\equiv \{m \; | \; m\lhd n\}\to\rho. \end{equation*}
We tacitly assume that the types $\dc{\rho}{\lhd}$ can be smoothly incorporated into our system, and come equipped with a length function $|\cdot|\colon\dc{\rho}{\lhd}\to \N$ returning for each $t\colon\dc{\rho}{\lhd}$ a unique index $|t|$ such that $t\in \dc{\rho}{\lhd_{|t|}}$. For $\lhd=<$, the type $\dc{\rho}{<}$ is isomorphic to the type $\rho^\ast$ of finite sequences over $\rho$, objects of type
\begin{equation*}\{m \; | \; m<n\}\to\rho\end{equation*}
representing finite sequences of length $n$. In fact $\dc{\rho}{\lhd}$ is essentially a generalisation of the finite sequence type to arbitrary $\lhd$-closed partial functions (which need not have finite domain, though). Note that when $\lhd=\emptyset$ is the empty relation, the type $\dc{\rho}{\emptyset}$ is isomorphic to $\N$, and as we will see, in this case our parametrised dependent choice principle collapses to normal countable choice.

Now, the principle of $\oDC{\lhd}$ is given by the schema
\begin{equation*}\forall s^{\dc{\rho}{\lhd}}(\forall i\lhd |s|A_i(\initSegss{s}{i},s(i))\to \exists x^\rho A_{|s|}(s,x))\to\exists \alpha^{\N\to\rho}\forall n A_n(\initSegss{\alpha}{n},\alpha(n))\end{equation*}
where $\initSegss{\alpha}{n}\colon \dc{\rho}{\lhd n}$ is just the $\lhd$-initial segment of $\alpha$ i.e. $\lambda m\lhd n.\alpha(m)$, and analogously for $s$ (note that $\initSegss{s}{i}$ is well-defined for $i\lhd |s|$ by transitivity of $\lhd$). Note that $\oDC{\lhd}$ follows from the full axiom of choice together with well-founded recursion over $\lhd$ as follows: by classical logic and full choice we have
\begin{equation*}\forall s(\forall i\lhd |s|A_i(\initSegss{s}{i},s(i))\to \exists x A_{|s|}(s,x))\to \exists \Theta\forall s(\forall i\lhd |s|A_i(\initSegss{s}{i},s(i))\to  A_{|s|}(s,\Theta s)).\end{equation*}
Now, recursively defining $\alpha(n):=\Theta(\initSegss{\alpha}{n})$, we prove $\forall n A_n(\initSegss{\alpha}{n},\alpha(n))$ by $\wI_{\lhd}$, since from the assumption that $\forall i\lhd n A_i(\initSegss{\alpha}{i},\alpha(i))$ we obtain $A_n(\initSegss{\alpha}{n},\alpha(n))$ using that fact that $\initSegss{\initSegss{\alpha}{n}}{i}=\initSegss{\alpha}{i}$ for $i\lhd n$.

In Section \ref{sec-dependent-examples} below we discuss more well-known instances dependent choice, including its canonical formulation as
\begin{equation*}\forall n,x^X\exists y A_n(x,y)\to\exists f\forall n A_n(f(n),f(n+1)),\end{equation*}
which is easily provable from $\oDC{<}$. However, here we take advantage of the fact that our setting allows us to interpret dependent choice in the non-standard, but very general form $\oDC{\lhd}$.

\subsection{Realizing $\oDC{\lhd}^{J^\ast_i}$}
\label{sec-dependent-real}

As in the previous section, we realize a positive form of the negative translation of $\oDC{\lhd}$ which is somewhat analogous to the dependent $J$-shift of \cite{EscOli(2012.0)}. As with the previous section, we can easily derive a realizer for the standard negative translation of dependent choice principles from a realizer of our shift principles - and we illustrate this in Section \ref{sec-dependent-examples}.

\begin{definition}\label{defn-Nshift-dep} We define the translated principle $\oDC{\lhd}^{J^\ast_i}$ by
\begin{equation*}\begin{aligned}\oDC{\lhd}^{J^\ast_1} \; &\colon \; \begin{cases}\forall s,r(\forall i\lhd |s| A_i(\initSegss{s}{i},s(i))\to (\exists x A_{|s|}(s,x)\to R)\to \exists x A_{|r|}(r,x)) \\ 
\to(\exists \alpha\forall n A_n(\initSegss{\alpha}{n},\alpha(n))\to R)\to R\end{cases}\\
\oDC{\lhd}^{J^\ast_2} \; &\colon \; \begin{cases}\forall s,r(\forall i\lhd |s| A_i(\initSegss{s}{i},s(i))\to (\exists x A_{|s|}(s,x)\to R)\to \exists x A_{|r|}(r,x)) \\ 
\to(\exists \alpha\forall n A_n(\initSegss{\alpha}{n},\alpha(n))\to R)\to \exists \alpha\forall n A_n(\initSegss{\alpha}{n},\alpha(n))\end{cases}\end{aligned}\end{equation*}
where $A$ is arbitrary formula over $\N\times\dc{\rho}{\lhd}\times \rho$ and the realizing type of $R$ is restricted to being discrete.\end{definition}

Again, these variants are nothing more than convenient syntactical rephrasings of the dependent shift principle given in \cite{EscOli(2012.0)}, this time extended to arbitrary partial orderings $\lhd$.

Now, if $\sigma$ and $\tau$ are the realizing types of $A_n(s,x)$ and $R$ respectively, then $\oDC{\lhd}^{J^\ast_1}$ is realized by a term $\Phi$ of type $(\dc{(\rho\times\sigma)}{\lhd}\to\dc{\rho}{\lhd}\to (\rho\times\sigma\to\tau)\to\rho\times\sigma)\to ((\rho\times\sigma)^\N\to\tau)\to\tau$ which given input $\varepsilon\colon\dc{(\rho\times\sigma)}{\lhd}\to\dc{\rho}{\lhd}\to (\rho\times\sigma\to\tau)\to\rho\times\sigma$ and $q\colon (\rho\times\sigma)^\N\to\tau$ that satisfies
\begin{equation*}\begin{aligned}&\forall s^{\dc{(\rho\times\sigma)}{\lhd}},r^{\dc{\rho}{\lhd}},p^{\rho\times\sigma\to\tau}\begin{cases}\forall i\lhd |s|(s(i)_1\mr A_i(\initSegss{s_0}{i},s(i)_0))\to \\ (\forall x^{\rho\times\sigma}(x_1\mr A_{|s|}(s_0,x_0)\to p(x)\mr R)\to \varepsilon_{s,r}(p)_1\mr A_{|r|}(r,\varepsilon_{s,r}(p)_0))\end{cases}\\
&\forall\alpha^{(\rho\times\sigma)^\N}(\forall n \; \alpha(n)_1\mr A_n(\initSegss{\alpha_0}{n},\alpha(n)_0)\to q(\alpha)\mr R)\end{aligned}\end{equation*}
returns a term $\Phi\varepsilon q\colon\tau$ satisfying $\Phi\varepsilon q\mr R$. In the formulae above and results that follows, $s_0\colon\dc{\rho}{\lhd}$ denotes the first projection of the term $s\colon\dc{(\rho\times\sigma)}{\lhd}$ i.e. $s_0(m):=s(m)_0$, and similarly for infinite sequences $\alpha$. As in the previous section, we construct a family of realizing terms which follow the same basic principle of backward recursion on partial functions, and which give a computational interpretation to $\oDC{\lhd}^{J^\ast_1}$. However, this time our choices of $\prec_u$ and $m{nu}$ require some additional conditions to ensure they are now compatible with $\lhd$.

\begin{proposition}\label{prop-main-DC}Suppose that $\sigma$ and $\tau$ are the realizing types of $A(n)$ and $R$, with $\tau$ discrete, and that $\prec\colon\wideps{(\rho\times\sigma)}^\N\to (\N\times\N\to\Bool)$ and $m\colon\N\times \wideps{(\rho\times\sigma)}^\N\to\N$ are such that
\begin{enumerate}[(i)]

\item\label{item-DC-conda} $\forall k(k\prec_u n\vee k\lhd n\vee k\lhd mnu\to k\prec'_u n)$ for some well-founded relation $\prec'_u$,

\item\label{item-DC-condb} $n\notin\dom(u)\to m{nu}\notin\dom(u)\cup\{k\; | \; k\prec_u n\}$,

\end{enumerate} 
for all $u$ and $n$. Then there is a term $\Psi^{\varepsilon,q}_{(\lhd,\prec,m)}\colon\overline{(\rho\times\sigma)}^\N\to\tau$ with parameters of type $\varepsilon\colon(\dc{(\rho\times\sigma)}{\lhd}\to\dc{\rho}{\lhd}\to (\rho\times\sigma\to\tau)\to\rho\times\sigma)$ and $q\colon((\rho\times\sigma)^\N\to\tau)$ which is primitive recursively definable in $\BR+\lambda u.\wR{\prec'_u}$, and satisfies the recursive equation 
\begin{equation*}\Psi^{\varepsilon,q}(u)=q(\underbrace{u\at\lambda n\;. \; \varepsilon_{\initSegss{\alpha}{mnu},\initSegss{\alpha_0}{n}}(\lambda x\; . \; \Psi^{\varepsilon,q}(u\at \update{\initSegsss{\alpha_u}{n}}{mnu}{x}))}_{\alpha_u})\end{equation*}
where $\alpha_u\colon (\rho\times\sigma)^\N$ denotes the argument of $q$ as indicated above and $\initSegsss{\alpha_u}{n}\colon\wideps{(\rho\times\sigma)}^\N$ is defined to be $$\lambda k.(\alpha_u(k)\mbox{ if $k\prec_u n$ else $\bot$}).$$\end{proposition}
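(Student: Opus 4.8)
The plan is to obtain $\Psi^{\varepsilon,q}_{(\lhd,\prec,m)}$ as a single instance of backward recursion $\BR$ whose parameter is itself assembled from an inner well-founded recursion over $\prec'_u$. Since $\tau$ is discrete, the relevant instance of $\BR_{\rho\times\sigma,\tau}$ is available as explained in Section \ref{sec-open-recursion}, and since each $\prec'_u$ is well-founded we may fix a term behaving, for each $u$, like $\wR_{\prec'_u}$ (with the caveat of Remark \ref{rem-wR}). The construction then proceeds as follows: for a fixed $u\colon\wideps{(\rho\times\sigma)}^\N$ and a fixed auxiliary functional $g\colon\N\times\wideps{(\rho\times\sigma)}^\N\to\tau$, I would build a sequence $\alpha=\alpha_{u,g}\colon(\rho\times\sigma)^\N$ by well-founded recursion on $\prec'_u$, setting $\alpha(n):=u(n)$ when $n\in\dom(u)$ and
$$\alpha(n):=\varepsilon_{\initSegss{\alpha}{mnu},\initSegss{\alpha_0}{n}}\bigl(\lambda x\,.\,g(mnu,\update{\initSegsss{\alpha}{n}}{mnu}{x})\bigr)$$
when $n\notin\dom(u)$; then put $\psi_u(g):=q(\alpha_{u,g})$ and finally $\Psi^{\varepsilon,q}_{(\lhd,\prec,m)}:=\BR^{\psi}$.

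The first thing to verify is that this clause really defines $\alpha_{u,g}$ by a legitimate $\prec'_u$-recursion, i.e. each occurrence of $\alpha$ on the right-hand side is evaluated only at $\prec'_u$-predecessors of $n$. Reading these off: the first subscript $\initSegss{\alpha}{mnu}$ consults $\alpha(k)$ only for $k\lhd mnu$; the second subscript $\initSegss{\alpha_0}{n}$ only for $k\lhd n$; and inside the argument of $g$, the partial sequence $\update{\initSegsss{\alpha}{n}}{mnu}{x}$ consults $\alpha(k)$ only for $k\prec_u n$, since $\initSegsss{\alpha}{n}$ returns $\bot$ (and hence reads nothing) outside $\{k\mid k\prec_u n\}$ and the value slotted in at $mnu$ is the $\lambda$-bound $x$, which is $\alpha$-independent. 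By condition (\ref{item-DC-conda}) every such $k$ satisfies $k\prec'_u n$, so $\alpha_{u,g}(n)$ depends only on $\{\alpha_{u,g}(k)\mid k\prec'_u n\}$ together with the fixed data $\varepsilon,q,m,\prec,g$, and $\alpha_{u,g}$ is produced by one application of $\wR_{\prec'_u}$. (Note also that $\update{\initSegsss{\alpha}{n}}{mnu}{x}$ is a genuine update since $mnu\notin\{k\mid k\prec_u n\}$, again by condition (\ref{item-DC-condb}).)

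Next I would unfold the defining equation of $\BR$: writing $\Psi:=\Psi^{\varepsilon,q}_{(\lhd,\prec,m)}=\BR^\psi$ and $g_0:=\lambda n,v\,.\,\Psi(u\at v)\ \text{if}\ n\in\dom(v)\backslash\dom(u)$, we get $\Psi(u)=\psi_u(g_0)=q(\alpha_{u,g_0})$. To turn this into the stated recursive equation it remains to simplify the guard in $g_0$: for $n\notin\dom(u)$ and $v:=\update{\initSegsss{\alpha_{u,g_0}}{n}}{mnu}{x}$ we have $mnu\in\dom(v)$ trivially and $mnu\notin\dom(u)$ by condition (\ref{item-DC-condb}), so the guard $mnu\in\dom(v)\backslash\dom(u)$ holds and $g_0(mnu,v)=\Psi(u\at v)$ with $u\at v\sqsupset u$ strict; for $n\in\dom(u)$ the overwrite $u\at(\cdots)$ never evaluates the $\lambda n$-body, so nothing is asserted there. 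Substituting back and writing $\alpha_u$ for $\alpha_{u,g_0}$ yields exactly the recursive equation in the statement; and since $\psi$ is built from $\varepsilon$, $q$, $m$, $\prec$ and one use of $\wR_{\prec'_u}$, and $\Psi=\BR^\psi$, the term $\Psi^{\varepsilon,q}_{(\lhd,\prec,m)}$ is primitive recursive in $\BR+\lambda u.\wR_{\prec'_u}$, as required.

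The main obstacle I anticipate is the bookkeeping in the second step: being fully precise about which arguments of $\alpha$ each of the three initial-segment constructions on the right-hand side actually reads — in particular that the guarded conditional defining $\initSegsss{\alpha}{n}$ does not consult $\alpha$ outside $\{k\mid k\prec_u n\}$, and that the entry inserted at $mnu$ is $\alpha$-independent — so that condition (\ref{item-DC-conda}) applies cleanly. Once this is settled, condition (\ref{item-DC-conda}) closes the loop for the inner $\wR_{\prec'_u}$-recursion and condition (\ref{item-DC-condb}) closes it for the outer $\BR$-recursion, and the remainder is routine unfolding. This also exhibits Proposition \ref{prop-main-AC} as the degenerate case $\lhd=\emptyset$, $\prec'_u=\prec_u$, where the $\lhd$-initial segments disappear and the inner recursion is simply over $\prec_u$.
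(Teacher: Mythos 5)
Your construction is essentially the paper's own proof (Appendix \ref{sec-app}): a single instance of $\BR$ whose parameter $\psi_u(g):=q(\alpha_{u,g})$ is assembled by an inner $\wR_{\prec'_u}$-recursion, with condition (\ref{item-DC-conda}) justifying the inner recursion and condition (\ref{item-DC-condb}) discharging the guard when unfolding $\BR$. The only (cosmetic) difference is that you pass $\update{\initSegsss{\alpha}{n}}{mnu}{x}$ directly to the $\BR$-argument $g$, whereas the paper passes $u\at\update{\initSegsss{\alpha_u}{n}}{mnu}{x}$ and then removes the duplicate overwrite using idempotence of $\at$; both yield the stated equation.
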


Regarding the somewhat casual definition of the term $\lambda u.\wR{\prec'_u}$ see Remark \ref{rem-wR} - again, in all concrete cases discussed later $\prec_u$ will not even depend on $u$ and $\wR{\prec'_u}$ will be trivially definable in $\EHAomega$. As with Proposition \ref{prop-main-AC}, on an intuitive level it is not too hard to see that $\Psi$ is well-defined. Firstly $\alpha_u$ is well-defined by $\prec'_u$-recursion since in order to compute $\alpha_u(n)$ we require $\alpha_u(k)$ for either $k\prec_u n$, $k\lhd n$ or $k\lhd mnu$, which by condition (\ref{item-DC-conda}) implies that $k\prec'_u n$. Then by (\ref{item-DC-condb}) for $n\notin\dom(u)$ we have $mnu\notin\dom(u)$ and thus $u\sqsubset u\at \update{\initSegsss{\alpha}{n}}{mnu}{x}$ and so recursive calls on $\Psi$ are always made on strict extensions $u$. A formal proof that $\Psi$ is definable in $\BR+\lambda u.\wR{\prec_u}$ is given in Appendix \ref{sec-app}.

\begin{theorem}\label{thm-main-DC}Suppose that $\Psi_{(\lhd,\prec,m)}$ is defined as in Proposition \ref{prop-main-DC} for $\lhd$, $\prec$ and $m$ satisfying conditions (\ref{item-DC-conda}) and (\ref{item-DC-condb}), and additionally
\begin{enumerate}[(i)]

\item[(iii)]\label{item-DC-condc} if $\dom(u)$ is $\lhd$-closed then so are $\dom(u)\cup\{k\; | \; k\prec_u n\}$ and $\dom(u)\cup\{k\; | \; k\prec_u n\}\cup \{m{nu}\}$,

\end{enumerate}
Then the term $\Phi_{(\lhd,\prec,m)}:=\lambda\varepsilon,q\;.\; \Psi^{\varepsilon,q}_{(\lhd,\prec,m)}(\emptyset)$ realizes $\oDC{\lhd}^{J^\ast_1}$ provably in $\T+\CONT+\backI+\BR+(\wI_{\prec'_{u}})+\lambda u.\wR_{\prec'_u}$.\end{theorem}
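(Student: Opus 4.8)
The proof will follow the template of Theorem \ref{thm-main-AC} almost line for line; the one new ingredient is that the auxiliary predicate on partial sequences must now also remember the dependency relation $\lhd$, and condition (iii) is exactly what makes this possible. Concretely, I would fix $\varepsilon$ and $q$ realizing the premise of $\oDC{\lhd}^{J^\ast_1}$ and call a partial sequence $u\colon\wideps{(\rho\times\sigma)}^\N$ a \emph{partial realizer} if $\dom(u)$ is $\lhd$-closed and $u(n)_1\mr A_n(\initSegss{u_0}{n},u(n)_0)$ for every $n\in\dom(u)$, then put
\begin{equation*}B(u):\equiv u\text{ is a partial realizer}\to\Psi^{\varepsilon,q}(u)\mr R.\end{equation*}
As in Theorem \ref{thm-main-AC}, $\CONT$ rewrites $\Psi(u)\mr R$ as $\exists n\forall w(\Psi(\initSeg{u}{n}\at w)\mr R)$, and the premise of $B$ has the form $\forall n[n\in\dom(u)\to A(n,u)]$ where $A(n,u)$ is the conjunction of $\lhd$-closedness at $n$ and the realizability clause at $n$; both conjuncts are monotone in the sense of (\ref{eqn-openp-mon}), since the $\lhd$-closedness clause forces $u(n)=v(n)$ and $\initSegss{u_0}{n}=\initSegss{v_0}{n}$ whenever $u\sqsubseteq v$ and $n\in\dom(u)$. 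Hence $B$ is open (Proposition \ref{prop-openp}), and by $\backI$ it suffices to prove the induction step and then read off $B(\emptyset)$, as $\emptyset$ is vacuously a partial realizer.

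For the induction step I would fix a partial realizer $u$, assume $B(v)$ for every $v\sqsupset u$, and --- writing $\alpha$ for the argument $\alpha_u$ of $q$ in the recursive equation of Proposition \ref{prop-main-DC} --- prove by $\wI_{\prec'_u}$ that $\alpha(n)_1\mr A_n(\initSegss{\alpha_0}{n},\alpha(n)_0)$ for all $n$, which yields $\Psi^{\varepsilon,q}(u)=q(\alpha)\mr R$ by correctness of $q$. The case $n\in\dom(u)$ is immediate, $\lhd$-closedness giving $\alpha(i)=u(i)$ for all $i\lhd n$ and $\alpha(n)=u(n)$. For $n\notin\dom(u)$ one has $\alpha(n)=\varepsilon_{\initSegss{\alpha}{mnu},\initSegss{\alpha_0}{n}}(p)$ with $p:=\lambda x.\Psi^{\varepsilon,q}(u\at\update{\initSegsss{\alpha_u}{n}}{mnu}{x})$, and I would verify the two hypotheses of the correctness of $\varepsilon$. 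First, $\initSegss{\alpha}{mnu}$ is a partial realizer at every $i\lhd mnu$: condition (i) makes such $i$ satisfy $i\prec'_u n$, so this is the $\wI_{\prec'_u}$-hypothesis. Second, $p$ sends realizers of $A_{mnu}(\initSegss{\alpha_0}{mnu},\cdot)$ to realizers of $R$: given such an $x$, set $v:=u\at\update{\initSegsss{\alpha_u}{n}}{mnu}{x}$; by (ii), $n\notin\dom(u)$ gives $mnu\notin\dom(u)$, so $v\sqsupset u$, while $\dom(v)=\dom(u)\cup\{k\mid k\prec_u n\}\cup\{mnu\}$ is $\lhd$-closed by (iii). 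Checking that $v$ is a partial realizer at each $j\in\dom(v)$ then reduces --- using that $\lhd$-closedness of $\dom(v)$ together with (i) confines every $i\lhd j$ to $\dom(u)\cup\{k\mid k\prec_u n\}$, where $v_0$ and $\alpha_0$ agree --- to three subcases: $j\in\dom(u)$ (here $u$ is a partial realizer), $j=mnu$ (here it is the hypothesis on $x$), and $j\prec_u n$ (here it is the $\wI_{\prec'_u}$-hypothesis, via (i)). So $B(v)$ gives $p(x)=\Psi^{\varepsilon,q}(v)\mr R$, and correctness of $\varepsilon$ then delivers $\alpha(n)_1\mr A_n(\initSegss{\alpha_0}{n},\alpha(n)_0)$, closing the inner induction. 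Finally $\backI$ gives $\forall uB(u)$, so $\Phi_{(\lhd,\prec,m)}\varepsilon q=\Psi^{\varepsilon,q}_{(\lhd,\prec,m)}(\emptyset)\mr R$.

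I expect the main obstacle to be precisely this $\lhd$-bookkeeping, absent from Theorem \ref{thm-main-AC}: ensuring that at every application of $\varepsilon$ and of a formula $A_j$ the right $\lhd$-initial segment is supplied, and that the underlying $\prec'_u$- and $\BR$-recursions remain well-founded. This is exactly the point where the three hypotheses interlock --- (iii) keeps the relevant domains $\lhd$-closed, (i) ensures the $\lhd$-predecessors thereby captured are $\prec'_u$-smaller (so $v_0$ and $\alpha_0$ coincide on them and the inner induction hypothesis applies), and (ii) keeps the fresh index $mnu$ out of $\dom(u)$ so that recursive calls land on strict extensions. Modulo this, the argument is a faithful transcription of the proof of Theorem \ref{thm-main-AC}, with the single extra observation that the hypothesis on $q$ now asks for realization of the dependent family $\forall n\,A_n(\initSegss{\alpha_0}{n},\alpha(n)_0)$ in place of $\forall n\,A(n)$.
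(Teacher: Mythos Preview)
Your proposal is correct and follows essentially the same route as the paper: the paper defines exactly your ``partial realizer'' predicate (calling it $\bar A(u)$), proves openness of $B$ via $\CONT$ and Proposition~\ref{prop-openp}, runs the outer backward induction and inner $\prec'_u$-induction on the same formula $C(n)$, and splits the verification that $v:=u\at\update{\initSegsss{\alpha_u}{n}}{mnu}{x}$ is a partial realizer into the same three subcases. One small citation slip: the confinement of $i\lhd j$ to $\dom(u)\cup\{k\mid k\prec_u n\}$ comes from condition~(iii) (the $\lhd$-closedness of this set) together with~(ii) (which keeps $mnu$ out of it), not from~(i); condition~(i) is what lets you invoke the $\prec'_u$-hypothesis at those points once confined.
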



\begin{remark}While the proof of this Theorem is fairly lengthy, it is little more than a straightforward variant of the much simpler proof of Theorem \ref{thm-main-AC}. The additional technical details are simply required to deal with the dependency $\lhd$.\end{remark}

\begin{remark}All terms $s\colon\dc{\rho}{\lhd}$, when embedded as partial functions, are $\lhd$-closed by transitivity of $\lhd$, but the converse is not necessarily true.\end{remark}

\begin{proof}Assume that $\varepsilon$ and $q$ realize the premise of $\oDC{\lhd}^{J^\ast_1}$. Analogous to the proof of Theorem \ref{thm-main-AC}, we use a main backward induction along with an auxiliary well-founded induction, this time on the slightly more complex formula 
\begin{equation*}B(u):\equiv \bar{A}(u)\to\Psi^{\varepsilon,q}(u)\mr R.\end{equation*}
for
\begin{equation*}\bar{A}(u):\equiv\forall n\in\dom(u)(\forall k\lhd n(k\in\dom(u))\wedge u(n)_1\mr A(\initSegss{{u_0}}{n},u(n)_0)),\end{equation*}
where now as our premise we require that $u$ is a $\lhd$-closed partial realizer of the dependent choice sequence. As before, $B(u)$ is open in the sense of Proposition \ref{prop-openp} by $\CONT$. To prove the backward induction step, let's assume that $\bar{A}(u)$ holds and $B(v)$ is true for all $v\sqsupset u$ and try to derive $\Psi(u)\mr R$.

This time we use $\prec_u'$ induction to show that $\forall n C(n)$ where
\begin{equation*}C(n):\equiv \alpha_u(n)_1\mr A_n(\initSegss{(\alpha_u)_0}{n},\alpha_u(n)_0).\end{equation*}
Then we are done since this would imply that $\Psi(u)=q(\alpha_u)\mr R$. So fix $n$ and assume that $\forall k\prec_u' n\; C(k)$. There are two cases to consider. If $n\in\dom(u)$ and then by $\lhd$-closedness of $u$, $C(n)$ becomes $u(n)_1\mr A_n(\initSegss{u_0}{n},u(n)_0)$ which is true by assumption. Otherwise, if $n\notin\dom(u)$ then $C(n)$ becomes
\begin{equation*}(\ast) \ \ \varepsilon_{\initSegss{\alpha}{mnu},\initSegss{\alpha_0}{n}}(p)_1\mr A_n(\initSegss{(\alpha_u)_0}{n},\varepsilon_{\initSegss{\alpha}{mnu},\initSegss{\alpha_0}{n}}(p)_0)\end{equation*}
where
\begin{equation*}p:=\lambda x\; . \; \Psi^{\varepsilon,q}(u\at \update{\initSegsss{\alpha_u}{n}}{mnu}{x}).\end{equation*}
We now prove that
\begin{equation*}\begin{aligned}&x_1\mr A_{mnu}(\initSegss{(\alpha_u)_0}{mnu},x_0)\to \bar{A}(u\at \update{\initSegsss{\alpha_u}{n}}{mnu}{x})\mbox{\;  and \;}\\
&\bar{A}(u\at \update{\initSegsss{\alpha_u}{n}}{mnu}{x})\to\Psi(u\at\update{\initSegsss{\alpha_u}{n}}{mnu}{x})\mr R.\end{aligned}\end{equation*}
The second implication follows from the main backward induction hypothesis since $u\sqsubset u\at\update{\initSegsss{\alpha_u}{n}}{mnu}{x}$ by (ii). To prove the first implication, let's abbreviate $v:=u\at \update{\initSegsss{\alpha_u}{n}}{mnu}{x}$. Then $\dom(v)=\dom(u)\cup\{k\; | \; k\prec_u n\}\cup\{mnu\}$, and since by $\bar{A}(u)$ we know that $\dom(u)$ is $\lhd$-closed, so is $\dom(v)$ by condition (iii). Thus to obtain $\bar A(v)$ it remains to show that
\begin{equation*}(\ast\ast) \ \ \forall k\in\dom(v)(v(k)_1\mr A(\initSegss{v_0}{k},v(k)_0)).\end{equation*}
There are three possibilities. If $k\in\dom(u)$ then $(\ast\ast)$ follows from $\bar{A}(u)$, while if $k=mnu$ then $(\ast\ast)$ becomes $x_1\mr A(\initSegss{(\alpha_u)_0}{mnu},x_0)$ by (iii) which is our premise. Otherwise, for $k\prec_u n$, $(\ast\ast)$ becomes $\alpha_u(k)_1\mr A(\initSegss{(\alpha_u)_0}{k},\alpha_u(k)_0)$ by condition (iii), and this we know is true by the $\prec'_u$-induction hypothesis.

Therefore, putting everything together we obtain
\begin{equation*}x_1\mr A_{mnu}(\initSegss{(\alpha_u)_0}{mnu},x_0)\to p(x)\mr R. \end{equation*}
Since in addition we have $\forall i\lhd mnu(\alpha_u(i)_1\mr A_i(\initSegss{(\alpha_u)_0}{i},\alpha_u(i)_0)$, a fact which also follows from the $\prec'_u$-induction hypothesis along with condition (i), then by correctness of $\varepsilon$ we obtain $(\ast)$.

Thus we have shown that $\forall k\prec_u' n\; C(k)\to C(n)$ and therefore $\forall n C(n)$ follows by induction. Since this in turn implies $\Psi(u)\mr R$, we have shown $\forall v\sqsupset u B(v)\to B(u)$, and thus $\forall u B(u)$ by backward induction. Finally, then, since $\bar{A}(\emptyset)$ is trivially satisfied, $B(\emptyset)$ implies $\Psi^{\varepsilon,q}(\emptyset)\mr R$, which completes the proof.    \end{proof}

\begin{corollary}\label{cor-main-DC}There is a term $\tilde\Phi_{(\lhd,{\prec},{m})}$ primitive recursive in $\BR+\lambda u.\wR_{\prec_{u}}$ which realizes the $\oDC{\lhd}^{J^\ast_2}$, provably in $\EHAomega+\CONT+\backI+\BR+(\wI_{\prec_{u}})+\lambda u.\wR_{\prec_u}$.\end{corollary}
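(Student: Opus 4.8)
The plan is to lift Corollary \ref{cor-main-AC} to the dependent setting by exactly the same device, reading off the realizer of $\oDC{\lhd}^{J^\ast_2}$ from the work already carried out in the proof of Theorem \ref{thm-main-DC}. Keeping all the notation of Theorem \ref{thm-main-DC} and Proposition \ref{prop-main-DC}, I would set
\begin{equation*}\tilde\Psi^{\varepsilon,q}(u):=\alpha_u,\end{equation*}
the infinite sequence to which $q$ is applied in the defining equation of $\Psi^{\varepsilon,q}$, so that $\Psi^{\varepsilon,q}(u)=q(\tilde\Psi^{\varepsilon,q}(u))$ holds by definition. Substituting this identity back into the equation for $\alpha_u$ shows that $\tilde\Psi^{\varepsilon,q}$ satisfies a recursion of the same shape with the outer $q$ pushed one level inwards, namely
\begin{equation*}\tilde\Psi^{\varepsilon,q}(u)=u\at\lambda n\;.\;\varepsilon_{\initSegss{\alpha}{mnu},\initSegss{\alpha_0}{n}}\bigl(\lambda x\;.\;q\bigl(\tilde\Psi^{\varepsilon,q}(u\at\update{\initSegsss{\alpha_u}{n}}{mnu}{x})\bigr)\bigr),\end{equation*}
and that $\tilde\Psi$ is again primitive recursive in $\BR+\lambda u.\wR_{\prec'_u}$ by the very same verification as for $\Psi$ in Proposition \ref{prop-main-DC} (the two terms are interdefinable through a single application of $q$). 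One then puts $\tilde\Phi_{(\lhd,\prec,m)}:=\lambda\varepsilon,q\;.\;\tilde\Psi^{\varepsilon,q}(\emptyset)$.

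For correctness I would exploit the fact that the proof of Theorem \ref{thm-main-DC} establishes strictly more than $\forall u\,B(u)$: inside the backward-induction step it shows, for \emph{any} $u$ satisfying $\bar A(u)$ together with the hypothesis that $B(v)$ holds for all $v\sqsupset u$, that $\forall n\,C(n)$ holds, where $C(n):\equiv(\alpha_u(n))_1\mr A_n(\initSegss{(\alpha_u)_0}{n},(\alpha_u(n))_0)$ --- this is exactly the auxiliary $\prec'_u$-induction of that proof. Since Theorem \ref{thm-main-DC} already gives $\forall u\,B(u)$, the hypothesis ``$B(v)$ for all $v\sqsupset u$'' is available unconditionally, in particular at $u=\emptyset$, where moreover $\bar A(\emptyset)$ is vacuously true. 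Re-running the auxiliary $\prec'_\emptyset$-induction at $u=\emptyset$ therefore yields $\forall n\,C(n)$ for $\alpha_\emptyset=\tilde\Psi^{\varepsilon,q}(\emptyset)$, i.e. that $(\tilde\Psi^{\varepsilon,q}(\emptyset))_0$ is a choice sequence and $\lambda n.(\tilde\Psi^{\varepsilon,q}(\emptyset)(n))_1$ realizes $\forall n\,A_n(\initSegss{(\tilde\Psi^{\varepsilon,q}(\emptyset))_0}{n},(\tilde\Psi^{\varepsilon,q}(\emptyset)(n))_0)$. Unfolding the modified-realizability clause for $\exists$, this is precisely the statement that $\tilde\Psi^{\varepsilon,q}(\emptyset)$, read as the pair consisting of its first projection together with itself, realizes $\exists\alpha\forall n\,A_n(\initSegss{\alpha}{n},\alpha(n))$; hence $\tilde\Phi_{(\lhd,\prec,m)}\varepsilon q$ realizes the conclusion of $\oDC{\lhd}^{J^\ast_2}$ whenever $\varepsilon$ and $q$ realize its premises.

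All of this goes through in the same ambient theory as Theorem \ref{thm-main-DC}, namely $\EHAomega+\CONT+\backI+\BR+(\wI_{\prec'_u})+\lambda u.\wR_{\prec'_u}$ --- the concluding induction uses only $\wI_{\prec'_\emptyset}$, which is among the listed schemata --- and, as already noted for Corollary \ref{cor-main-AC}, in every example of Section \ref{sec-dependent-examples} the relation $\prec_u$ is independent of $u$, so $\lambda u.\wR_{\prec'_u}$ and $\wI_{\prec'_u}$ are harmless there. I do not expect any genuine obstacle: the only point requiring a little care is to confirm that the auxiliary $\prec'_u$-induction in the proof of Theorem \ref{thm-main-DC} really depends only on $\bar A(u)$, the backward-induction hypothesis on strict extensions of $u$, conditions (i)--(iii), and correctness of $\varepsilon$, so that it can legitimately be detached and invoked as a stand-alone lemma at the empty partial function $\emptyset$ --- and this is immediate on inspection of that proof.
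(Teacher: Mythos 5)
Your proposal is correct and follows essentially the same route as the paper: define $\tilde\Psi^{\varepsilon,q}(u):=\alpha_u$, note the resulting recursive equation with $q$ pushed inside, set $\tilde\Phi\varepsilon q:=\tilde\Psi^{\varepsilon,q}(\emptyset)$, and obtain correctness by re-running the auxiliary $\prec'_\emptyset$-induction at $u=\emptyset$, using $\forall u\,B(u)$ from Theorem \ref{thm-main-DC} to discharge the backward-induction hypothesis. The paper states this more tersely ("it follows immediately from $\forall u\,B(u)$ and $\prec'_\emptyset$-induction"), and your only addition is to spell out the detachability of that auxiliary induction, which is exactly the intended justification.
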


\begin{proof}Just as in the previous section, and keeping the notation of Theorem \ref{thm-main-DC}, define $\tilde\Psi^{\varepsilon,q}(u)=\alpha_u$ so that it satisfies the recursive equation
\begin{equation*}\tilde\Psi^{\varepsilon,q}(u)=u\at\lambda n\;. \; \varepsilon_{\initSegss{\alpha}{mnu},\initSegss{\alpha_0}{n}}(\lambda x\; . \; q(\tilde\Psi^{\varepsilon,q}(u\at \update{\initSegsss{\alpha_u}{n}}{mnu}{x}))).\end{equation*}
Define $\tilde\Phi\varepsilon q:=\tilde\Psi^{\varepsilon,q}(\emptyset)$. Then it follows immediately from $\forall u B(u)$ and $\prec_\emptyset'$-induction that $\tilde\Phi\varepsilon q=\alpha_{\emptyset}\mr \exists\alpha\forall nA(n)$ since $\forall n\; \alpha_\emptyset(n)_1\mr A_n(\initSegss{(\alpha_{\emptyset})_0}{n},\alpha_\emptyset(n)_1)$. \end{proof}

\subsection{Examples}
\label{sec-dependent-examples}

We now show that essentially all of the solutions to the modified realizability interpretation choice principles given across e.g. \cite{BBC(1998.0),BergOli(2005.0),EscOli(2012.0),Seisenberger(2008.0)} appear as special cases of Theorem \ref{thm-main-DC}, given suitable instantiations of $\lhd$, $\prec$, $m$ and the choice formula $A_n(s,x)$.

\subsubsection{The Berardi-Bezem-Coquand functional ($\lhd=\prec_u=\emptyset$)}
\label{sec-dependent-examples-BBC}

It is easy to see that the principle $\oDC{\emptyset}$ is just countable choice, since $\dc{\rho}{\emptyset_n}$ is just a singleton object $\{\bullet\}$ indexed by $n$, and so $\dc{\rho}{\emptyset}$ is isomorphic to $\N$. Setting $A_n(\bullet,x):=B_n(x)$ we obtain $\AC$ as defined in Section \ref{sec-countable}. In fact, Theorem \ref{thm-main-DC} completely reduces to Theorem \ref{thm-main-AC} for the formula $A(n):=\exists x B_n(x)$ once we eliminate $\lhd$. Therefore as expected there is a direct corresondence between the realizers of the two theorems in this case.

For any function $m$ satisfying $n\notin\dom(u)\to mnu\notin\dom(u)$, we can define a generalised version of the Berardi-Bezem-Coquand functional as $\BBC_{(m)}:=\Psi_{(\emptyset,\emptyset,m)}$, which satisfies the defining equation
\begin{equation*}\BBC_{(m)}^{\varepsilon,q}(u)=q(u\at\lambda n\; . \; \varepsilon_{mnu,n}(\lambda x\; . \; \BBC_{(m)}^{\varepsilon,q}(\update{u}{mnu}{x}))). \end{equation*}
Regardless of the choice of $m$, this functional always gives a computational interpretation to countable choice. Now suppose that we move back into the more conventional setting of the double negation translation of countable choice, setting $R=\bot$ and assuming $B_n(x)$ is a negated formula, so that there exists a term $h$ satisfying $\forall n,x\; (h\mr (\bot\to B_n(x))($. Then if $\phi$ satisfies
\begin{equation*}\phi\mr\forall m((\exists x B_m(x)\to \bot)\to\bot)\end{equation*}
then $\varepsilon^\phi_{m,n}(p):=_{\rho\times\sigma}\pair{0_\rho,h(\phi_m(p))}$ realizes the premise of $\oDC{\emptyset}^{J^\ast_1}$, and thus defining $\BBC_{(m),1}^{\phi,q}:=\BBC	_{(m)}^{\varepsilon^\phi,q}$ - which satisfies the equation
\begin{equation*}\BBC_{(m),1}^{\phi,q}(u)=q(u\at\lambda n\; . \; \pair{0,h(\phi_{mnu}(\lambda x\; . \; \BBC_{(m),1}^{\varepsilon,q}(\update{u}{mnu}{x})))}). \end{equation*}
- we obtain a term which realizes the negative translation of $\AC$. In particular, setting $mnu=n$ we obtain
\begin{equation*}\BBC_{2}^{\phi,q}(u)=q(u\at\lambda n\; . \; \pair{0,h(\phi_{n}(\lambda x\; . \; \BBC_{2}^{\varepsilon,q}(\update{u}{n}{x})))}) \end{equation*}
which is just the BBC functional of \cite{BBC(1998.0)} (more precisely, the variant of the Berardi-Bezem-Coquand functional for input with arbitrary domain considered in \cite{Berger(2004.0)}).

\subsubsection{Modified bar recursion ($\prec_u=\emptyset$)}
\label{sec-dependent-examples-MBR}

Suppose that we retain the simplification $\prec_u=\emptyset$, but now allow $\lhd$ to range over arbitrary decidable partial orders. Then we obtain a realizer which makes recursive calls over updates of its input, just like the BBC functional, but now $m$ must not only satisfy conditions (\ref{item-DC-conda}) and (\ref{item-DC-condb}) but also (\ref{item-DC-condc}):
\begin{equation*}\mbox{$u$ is $\lhd$-closed }\to\mbox{ $u\cup\{mnu\}$ is $\lhd$-closed}\end{equation*}
In this case, the term $\MBR^{\varepsilon,q}_{(\lhd,m)}:=\Psi^{\varepsilon,q}_{(\lhd,\emptyset,m)}$, which has defining equation
\begin{equation*}\MBR^{\varepsilon,q}_{(\lhd,m)}(u)=q(u\at\lambda n\;. \; \varepsilon_{\initSegss{\alpha}{mnu},\initSegss{\alpha_0}{n}}(\lambda x\; . \; \MBR^{\varepsilon,q}_{(\lhd,m)}(\update{u}{mnu}{x})))\end{equation*}
forms a realizer for $\oDC{\lhd}$. This can be viewed as a generalisation of modified bar recursion as first defined in \cite{BergOli(2005.0)}. All existing variants of $\MBR$ occur when $\lhd$ is the usual ordering on $\N$, but $\MBR$ is perfectly well-defined in more unusual cases. For instance, suppose that we have a bijective encoding $c\colon\N\to\Bool^\ast$, and that
\begin{equation*}m\lhd n:=\mbox{$c(m)$ is a proper prefix of $c(n)$}.\end{equation*}
Then $\lhd$-closed partial functions are precisely partial functions whose domain is a binary tree, and in this case there are many valid choices for $m$, a canonical one being
\begin{equation*}mnu:=\mbox{$n$ if $n\in\dom(u)$ else $i$ where $c(i)$ is the least prefix of $c(n)$ not in $\dom(u)$}. \end{equation*}
This variant of $\MBR$ yields an intuitive realizer for dependent choice over binary trees.

Now, suppose that we do indeed have $\lhd=<$. Then $<$-closed partial functions are either total or of the form $\ext{s}$ where $s$ is a finite sequence. Observing that in order to evaluate $\MBR(\emptyset)$ we can restrict ourselves to input with finite domain we can redefine our realizer in this case as $\MBR_1^{\varepsilon,q}(s^{\rho^\ast}):=\MBR_{(<,m)}^{\varepsilon,q}(\ext{s})$, setting $mnu:=\least i\leq n(i\notin\dom(u))$. Clearly such an $m$ satisfies (\ref{item-DC-condc}), and in particular $mn\ext{s}=|s|$ for $n\geq |s|$. Therefore $\MBR_1$ has defining equation
\begin{equation*}\MBR^{\varepsilon,q}_{1}(s)=q(\ext{s}\at\lambda n\;. \; \varepsilon_{s,\initSegss{\alpha_0}{n}}(\lambda x\; . \; \MBR^{\varepsilon,q}_{1}(s\ast x)))\end{equation*}
This directly realizes $\oDC{<}$, which is isormorphic to
\begin{equation*}\forall s^{\rho^\ast}(\forall i<|s| A_i(\initSeg{s}{i},s(i))\to\exists x A_{|s|}(s,x))\to\exists\alpha\forall n A_n(\initSeg{\alpha}{n},\alpha(n)).\end{equation*}
We can now easily rederive various concrete instances of $\MBR_1$ found in the literature which arise from setting $R=\bot$ and instantiating $A_i(s,x)$ by the correct formula. First, note that we immediately derive $\AC$ from $\oDC{<}$ by setting $A_n(s,x):=B_n(x)$, and a corresponding realizer for $\AC$ by defining $\MBR_2^{\phi,q}:=\MBR_1^{\varepsilon^\phi,q}$ for $\varepsilon_{s,r}^\phi(p)=\pair{0_\rho,h(\phi_{|s|}(p))}$, where $\phi$ and $h$ are as in Section \ref{sec-dependent-examples-BBC}. This has defining equation
\begin{equation*}\MBR^{\phi,q}_{2}(s)=q(\ext{s}\at\lambda n\;. \; \pair{0,h(\phi_{|s|}(\lambda x\; . \; \MBR^{\phi,q}_{2}(s\ast x)))})\end{equation*}
which is exactly the realizer of countable choice constructed in \cite{BergOli(2005.0)}. Note that the same realizer could have been constructed from $\BBC_{(m),1}$ for suitable $m$.

Now suppose that $A_0(s,x):=B_0(x_0,x)$ and $A_n(s,x):=B_n(s_{|s|-1},x)$ for $n>0$. Then $\oDC{<}$ immediately implies the following, standard formulation of dependent choice:
\begin{equation*}\label{eqn-DC-BergOli}\DC \; \colon \; \forall n,y\exists x B_n(y,x)\to\forall x_0\exists \alpha(\alpha(0)=x_0\wedge\forall n B_n(\alpha(n),\alpha(n+1)).\end{equation*}
The challenge for realizing $\DC$ is as follows: we must construct a realizer of $\bot$, given realizers
\begin{equation*}\begin{aligned}\phi^{\N\to\rho\to(\rho\times\sigma)\to\tau)\to\tau} &\mr \forall n,y((\exists x B_n(y,x)\to \bot)\to \bot)\\
Y^{(\rho\times\sigma)^\N\to\tau}&\mr \exists\alpha(\alpha(0)=x_0\wedge\forall nB_n(\alpha(n),\alpha(n+1)))\to \bot\end{aligned}\end{equation*}
and in addition assuming that $B$ is negated, thus guaranteeing the existence of a realizer $h$ of ex-falso quodlibet satisfying $\forall n,y,x\;  h\mr (\bot\to B_n(y,x))$. But in this case, we can easily define realizers of the premise of $\oDC{<}$ as
\begin{equation*}\begin{aligned}\varepsilon^\phi_{s,r}(p)&:=\pair{0,h(\phi_{|s|,(x_0\ast s_0)_{|s|}}(p))}\\
q^Y(\alpha)&:= Y_{x_0}(\alpha):= Y(\pair{x_0\ast\alpha_0,\alpha_1})\end{aligned}\end{equation*}
and defining $\MBR_3^{\phi,Y}:=\MBR_1^{\varepsilon^\phi,q^Y}$ yields a realizer for $\DC$ satisfying
\begin{equation*}\MBR^{\phi,Y}_{3}(s)=Y_{x_0}(\ext{s}\at\lambda n\;. \; \pair{0,h(\phi_{|s|,(x_0\ast s_0)_{|s|}}(\lambda x\; . \; \MBR^{\phi,Y}_{3}(s\ast x)))})\end{equation*}
which is this time exactly the realizer of $\DC$ given in \cite{BergOli(2005.0)}.
In an entirely analogous way, the bar recursive solution to the sequential variant of dependent choice considered in \cite{Seisenberger(2008.0)}:
\begin{equation}\label{eqn-DC-Seis}B(\pair{})\to\forall s^{\rho^\ast}(B(s)\to\exists xB(s\ast x))\to\exists \alpha\forall n B(\initSeg{\alpha}{n})\end{equation}
can be defined in terms of $\MBR_1$, since (\ref{eqn-DC-Seis}) is easily implied by $\oDC{<}$ for $A_n(s,x):=B(s\ast x)$. Given realizers
\begin{equation*}\begin{aligned}a_0^\rho&\mr B(\pair{})\\
\phi^{\rho^\ast\to\sigma\to (\rho\times\sigma\to\tau)\to\tau}&\mr \forall s(B(s)\to (\exists xB(s\ast x)\to \bot)\to \bot)\\
Y^{(\rho\times\sigma)^\N\to\tau}&\mr \exists \alpha\forall n B(\initSeg{\alpha}{n})\to \bot\end{aligned}\end{equation*}
and assuming the existence of a realizer $\forall s\; h\mr (R\to B(s))$, it is easy to see that 
\begin{equation*}\begin{aligned}\varepsilon_{s,r}^{a_0,\phi}(p)&:=\pair{0,h(\phi_{s_0,(a_0\ast s)_{|s|}}(p)}\\
q^{a_0,Y}(\alpha)&:= Y_{a_0}(\alpha):= Y(\pair{\alpha_0,G_0\ast\alpha_1})\end{aligned}\end{equation*}
realize the premise of $\oDC{<}$, and therefore the realizer we obtain is $\MBR^{a_0,\phi,Y}_4:=\MBR_1^{\varepsilon_{s,r}^{a_0,\phi},q^{a_0,Y}}$ which satisfies
\begin{equation*}\MBR^{a_0,\phi,Y}_4(s)=Y_{a_0}(\ext{s}\at\lambda n\;. \; \pair{0,h(\phi_{s_0,(a_0\ast s)_{|s|}}(\lambda x\; . \; \MBR^{a_0,\phi,Y}_4(s\ast x)))})\end{equation*}
which is exactly the term used to interpret dependent choice in \cite{Seisenberger(2008.0)} (and also extract an algorithm from Higman's lemma in \cite{Seisenberger(2003.0)}).

\subsubsection{Products of selection functions ($mnu=n$, $\lhd$ total)}
\label{sec-dependent-examples-IPS}

We finally consider the case in which the functional $m$ is defined to be the projection function $mnu=n$. Clearly, the conditions (i)-(iii) are satsified whenever $\prec_u=\lhd$, in which case the realizer simplifies to 
\begin{equation*}\Psi^{\varepsilon,q}(u)=q(\underbrace{u\at\lambda n\;. \; \varepsilon_{\initSegss{\alpha}{n}}(\lambda x\; . \; \Psi^{\varepsilon,q}(u\at \update{\initSegss{\alpha}{n}}{n}{x}))}_{\alpha_u})\end{equation*}
where we abbreviate $\varepsilon_{\initSegss{\alpha}{n}}=\varepsilon_{\initSegss{\alpha}{n},\initSegss{\alpha_0}{n}}$ and $\initSegss{\alpha}{n}$ is now treated as a partial function. Now, suppose that $\lhd$ is a \emph{total} order, and therefore constitutes an encoding of some countable ordinal $\xi$. It is always the case that we can evaluate $\Psi(\emptyset)$ by restricting the input to being $\lhd$-closed. However, when $\lhd$ is total then the set of (domain-theoretically) non-total $\lhd$-closed partial sequences is isomorphic to $\dc{\rho}{\lhd}$, since if $u$ is $\lhd$-closed then $\dom(u)=\{k \; | \; k\lhd n\}$ where $n$ is the least element of the set of undefined elements of $u$. Let us suppose that $\lhd$ comes equipped with a computable ordinal successor function $\suc\colon\N\to\N$ i.e. for each number $n$, $\suc(n)$ is the least number greater than $n$ with respect to $\lhd$. Define $\PS_{(\lhd)}^{\varepsilon,q}\colon\dc{\rho}{\lhd}\to\rho^\N$ by $\PS^{\varepsilon,q}(s):=\tilde\Psi^{\varepsilon,q}(\ext{s})$ (where $\tilde\Psi$ is the term defined in Corollary \ref{cor-main-DC}). Then $\PS$ satisfies the equation
\begin{equation*}\PS_{(\lhd)}^{\varepsilon,q}(s)=\ext{s}\at\lambda n\;. \; \varepsilon_{\initSegss{\alpha}{n}}(\lambda x\; . \; q(\PS_{(\lhd)}^{\varepsilon,q}(\initSegss{\alpha}{n}\star x))).\end{equation*}
where $\initSegss{\alpha}{n}\star x\colon \dc{\rho}{\lhd_{\suc(n)}}$ is defined by
\begin{equation*}(\initSegss{\alpha}{n}\star x)(m):=\begin{cases}\alpha(m) & \mbox{if $m\lhd n$}\\ x & \mbox{if $m=n$}.\end{cases}\end{equation*}
This functional is a generalisation of the (implicitly well-founded) product of selection functions of Escard{\'o} and Oliva to arbitrary recursive ordinals, and is not only a realizer $\oDC{\lhd}^{J^\ast_2}$, but a direct realizer of the following transfinite $J$-shift principle:
\begin{equation*}\begin{cases}\forall s(\forall i\lhd |s| A_i(\initSegss{s}{i},s(i))\to (\exists x A_{|s|}(s,x)\to R)\to \exists x A_{|s|}(s,x)) \\ 
\to(\exists \alpha\forall n A_n(\initSegss{\alpha}{n},\alpha(n))\to R)\to \exists \alpha\forall n A_n(\initSegss{\alpha}{n},\alpha(n)).\end{cases}\end{equation*}
For the particular case that $\lhd$ is the normal ordering on $\N$ and $A_n(s,x):=B_n(s\ast x)$ this becomes
\begin{equation*}\begin{cases}\forall s^{\rho^\ast}(\forall i< |s| B_i(\initSeg{s}{i+1})\to (\exists x B_{|s|}(s\ast x)\to R)\to \exists x B_{|s|}(s\ast x)) \\ 
\to(\exists \alpha\forall n B_n(\initSeg{\alpha}{n+1})\to R)\to \exists \alpha\forall n B_n(\initSeg{\alpha}{n+1}).\end{cases}\end{equation*}
which is precisely the dependent $J$-shift of \cite{EscOli(2012.0)}, and our realizer becomes
\begin{equation*}\PS_{(<)}^{\varepsilon,q}(s)=\ext{s}\at\lambda n\;. \; \varepsilon_{\initSeg{\alpha}{n}}(\lambda x\; . \; q(\PS_{(<)}^{\varepsilon,q}(\initSeg{\alpha}{n}\ast x)))\end{equation*}
which is isomorphic to the dependent product of selection functions of \cite{EscOli(2012.0)}. Setting $A_n(s,x):=B_n(x)$ as in Section \ref{sec-dependent-examples-MBR} we see that the dependent $J$-shift implies the non-dependent $J$-shift:
\begin{equation*}\forall n(\exists x B_{n}(x)\to R)\to \exists x B_{n}(x)) 
\to(\exists \alpha\forall n B_n(\alpha(n))\to R)\to \exists \alpha\forall n B_n(\alpha(n))\end{equation*}
and our realizer can be simplified to
\begin{equation*}\PS_{(<)}^{\varepsilon,q}(s)=\ext{s}\at\lambda n\;. \; \varepsilon_{n}(\lambda x\; . \; q(\PS_{(<)}^{\varepsilon,q}(\initSeg{\alpha}{n}\ast x)))\end{equation*}
which is now isomorphic to the non-dependent product of selection functions of \cite{EscOli(2012.0)}.

\subsubsection*{Summary}
\label{sec-dependent-examples-summary}

We have demonstrated that the many variants of bar recursion used to intepret choice principles are just instances of the same basic cominatorial idea. Theorem \ref{thm-main-DC} gives us a completely uniform framework in which to understand the variety of different realizers currently in the literature, and moreover in each case we are able to provide generalisations of these realizers to more complex orderings on $\N$. We summarise all this in the table below (here $m_1nu:=\least i\leq n(i\notin\dom(u))$ and $m_2nu=n$).\\

\

\begin{tabular}{c | c | c | c | l}$m$ & $\prec_u$ & $\lhd$ & $A_n(s,x)$ & \textbf{realizer} \\ \hline & & & \\  $m_1$ & $\emptyset$ & $\emptyset$ & $B_n(x)$ & Berardi-Bezem-Coquand functional \cite{BBC(1998.0)} \\ 
$m_2$ & $\emptyset$ & $\emptyset\;/<$ & $B_n(x)$ & simple modified bar recursion \cite{BergOli(2005.0)}  \\ 
$m_2$ & $\emptyset$ & $<$ & $B_n(s_{|s|-1},x)$ & dependent modified bar recursion \cite{BergOli(2005.0)}\\
$m_2$ & $\emptyset$ & $<$ & $B(s\ast x)$ & dependent modified bar recursion \cite{Seisenberger(2008.0)}\\
$m_1$ & $<$ & $\emptyset \;/<$ & $B_n(x)$ & simple product of selection functions \cite{EscOli(2010.1),EscOli(2012.0)} \\ 
$m_1$ & $<$ & $<$ & $B_n(s\ast x)$ & dependent product of selection functions \cite{EscOli(2010.1),EscOli(2012.0)}
\end{tabular}\\

\

However, these examples consitute only an extremely limited range of possibilities for $\Psi_{(m,\lhd,<)}$ based on very simple instantiations of its parameters. Theorem \ref{thm-main-DC} gives us far more than just a unifying perspective for existing realizers: it gives us a very general recipe for devising new realizers for choice principles that can be tailored to the situation at hand. Thus, rather than simply giving a fixed interpretation of choice and extracting programs relative to this realizer, we have an additional level of flexibility which should allow us to extract much more efficient and meaningful programs.

\section{Understanding the parametrised realizer}
\label{sec-semantics}

Having completed the main theoretic work of this paper, the purpose of this section is to give a somewhat informal graphical representation of the structure of our realizer and the proof of Theorem \ref{thm-main-DC}. In doing so we suggest a potential semantic interpretation of the realizer. 

Much work has been done in the last few decades on providing a computational interpretation of classical logic that can be understood on an intuitive level terms of \emph{learning} - the basic idea being that realizers of classical principles typically carry out some kind of `learning procedure' in order to construct an approximation to that principle. 

In \cite{BBC(1998.0)} a connection is suggested between realizers of negative translated formulas and winning strategies related to the Novikoff interpretation of classical formulas \cite{Coquand(1994.0),Novikoff(1943.0)}. In particular, an illuminating semantic interpretation of what we have called the Berardi-Bezem-Coquand functional is  given, in which the functional represents a strategy for building an approximation of a choice functional which wins against any continuous opponent. We attempt to extend this interpretation to our generalised realizer of choice, and argue that the our parameters can be giving a clear meaning in this context.

Let us first very briefly recall the basic ideas of \cite{BBC(1998.0),Coquand(1994.0)} (the reader is referred to these papers for a proper treatment). In the Novikoff calculus, the formula $\exists x\forall y A(x,y)$ is mapped to the propositional formula
\begin{equation*}\bigvee_x\bigwedge_y A(x,y).\end{equation*}
The truth of such a formula is debated by two players $\Eloise$ and $\Abelard$, who support truth and falsity respectively. First, $\Eloise$ selects some value $x_0$ for which she claims that $\bigwedge_y A(x_0,y)$ is true, and $\Abelard$ follows this by choosing some $y_0$ in an attempt to falsify $A(x_0,y_0)$. The formula as a whole is intuitionistically valid iff $\Eloise$ has a winning strategy regardless of any choices made by $\Abelard$.

Now, it is clear that such a correspondence does not work in the case of classical logic, since there are $\Sigma_2$-formulas classically true but for which there is no effective strategy for $\bigvee_x\bigwedge_y A(x,y)$. In this case, validity of $\bigvee_x\bigwedge_y A(x,y)$ is interpreted as the existence of some $x_0$ such that
\begin{equation*}A(x_0,y')\vee\bigvee_x\bigwedge_y A(x,y)\end{equation*}
is valid for all $y'$. The idea here is that $\Eloise$ picks a potential witness $x_0$, which is followed by an attempt at a counterexample $y_0$ from $\Abelard$, and the game becomes $A(x_0,y_0)\vee\bigvee_x\bigwedge_y A(x,y)$. In other words, either $A(x_0,y_0)$ is true, in which case $\Eloise$ wins, or it is false, and $\Eloise$ can backtrack and start again, this time using falsity of $A(x_0,y_0)$ as constructive information. In this way, $\Eloise$ is allowed to `learn' from $\Abelard$'s choices. Moreover, as demonstrated in \cite{BBC(1998.0)}, this notion of learning and backtracking is captured by the recursive functionals which realize the negative translation of the classical formula.

Let us now take this basic idea and consider how the truth of countable choice can be interpreted as a dialogue between $\Eloise$ and $\Abelard$ in which $\Eloise$ eventually wins. Countable choice can be written as the disjuction
\begin{equation*}\label{eqn-AC-dis}\exists n\forall x\neg A_n(x)\vee\exists f\forall n A_n(f(n)).\end{equation*}
We give this formula a rough interpretation along the lines of \cite{BBC(1998.0)} as follows. $\Eloise$ begins by attempting to realize the conclusion of $\AC$ with some default function $f_0=\lambda n.0$ whose `domain' of genuine constructive information is empty, and $\Abelard$ responds by selecting some point $n_0$ such that this attempt fails i.e. we cannot provide a realizer for $A_{n_0}(0)$. $\Eloise$ responds by now attempting to falsify the premise at some $m_0$, to which $\Abelard$ responds with a point $x_0$ and a realizer for $A_{m_0}(x_0)$. 

$\Eloise$ has now been given some constructive information by $\Abelard$, so she starts the whole process again, this time with the function $f_1:=f_0[m_0\mapsto x_0]$, which now has a domain of $\{m_0\}$. This time, $\Abelard$ picks a point $n_1$ at which $f_1$ fails. Either he picks $n_1=m_0$ in which case he loses since by his own admission $A_{m_0}(x_0)$ is true (see \cite{BBC(1998.0)} for a formal translation of this logic into Novikoff strategies), or he chooses $n_1\notin\{m_0\}$. Then $\Eloise$ responds with some $m_1\notin\{m_0\}$ which falsifies the premise of $\AC$, and again $\Abelard$ responds with some $x_1$ and a realizer for $A_{m_1}(x_1)$. $\Eloise$ now updates her approximation again to some $f_2$ which includes this information, and continues as before. Roughly speaking, such a strategy should eventually result in success for $\Eloise$ whenever $\Abelard$ is `continuous', because he will eventually be forced to pick $n_i$ in the domain of $f_i$.

We have been vague as to two details in this strategy: firstly how $\Eloise$ decides on which $m_i$ to choose in light of $\Abelard$'s original choice $n_i$, and secondly in how she chooses to update her approximation each time. In the first instance it is clear that if $\Abelard$ has a `good' choice for $n_i$ which is not in the domain $f_i$ then $\Eloise$ must also respond with $m_i$ not in the domain of $f_i$ if she has any chance of falsifying the premise of $\AC$. Then, when it comes to updating $f_i$ with the information $(m_i,x_i)$, she could either just add this directly to $f_i$ and define $f_{i+1}:=f_i[m_i\to x_i]$, or she could potentially erase some of the existing elements in the domain of $f_i$. 

The most natural choice is of course to pick $m_i=n_i$ each time and update directly without any erasing, and this is precisely the strategy given in \cite{BBC(1998.0)}. However, in the case of \emph{dependent choice} where the elements of the choice sequences are related in some way, it is crucial for $\Eloise$ to update her approximation in a manner which is coherent with the underlying dependency required for the choice sequence.

Let us now move on to our realizer for the negative translation of countable dependent choice (in the form of the $\oDC{\lhd}^{J^\ast_1}$-shift), and show how the parameters of the realizer correspond to the choices in $\Eloise$'s strategy we have just discussed. In Figure \ref{fig-a} below we give a (very informal) diagrammatic representation of the proof of Theorem \ref{thm-main-DC}. We retain all our notation, in particular we write $\bar A(u)$ to denote that $u$ is a partial realizer. For simplicity we now work in a concrete realizability setting with $R=\bot$ and $\varepsilon_{s,r}(p)=\pair{0,h(\phi_s(p))}$ where 
\begin{equation*}\phi\mr \forall s((\exists x A_{|s|}(s,x)\to\bot)\to\bot).\end{equation*}
and $\forall n,r,x (h\mr (\bot\to A_n(r,x)))$, so in particular $\varepsilon$ realizes the premise of $\oDC{\lhd}^{J^\ast_1}$.

%
%
%
%
%
%
%
%

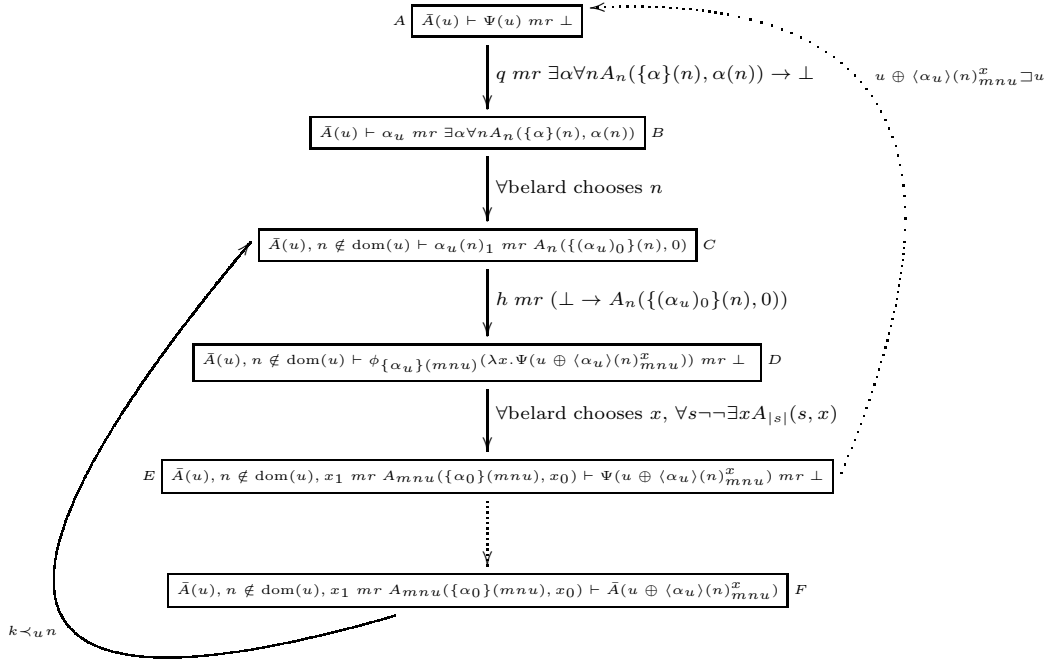
\begin{figure}[h]
\begin{center}
\[\tiny
\xymatrix{ & A \; \fbox{$\bar A(u)\vdash \Psi(u)\mr \bot$}\ar@/^8.0pc/@{<.}[dddd]!R^{ u\at \update{\initSegsss{\alpha_u}{n}}{mnu}{x}\sqsupset u}& \\ 
 & \fbox{$\bar A(u)\vdash \alpha_u\mr\exists\alpha\forall n A_n(\initSegss{\alpha}{n},\alpha(n))$} \; B\ar@{<-}[u]_{\mbox{\scriptsize $q\mr \exists\alpha\forall n A_n(\initSegss{\alpha}{n},\alpha(n))\to\bot$}} & \\ 
 & \fbox{$\bar A(u), n\notin\dom(u)\vdash \alpha_u(n)_1\mr A_n(\initSegss{(\alpha_u)_0}{n},0)$} \; C \ar@{<-}[u]_{\mbox{\scriptsize $\Abelard$ chooses $n$}} & \\
 & \fbox{$\bar A(u),n\notin\dom(u)\vdash \phi_{\initSegss{\alpha_u}{mnu}}(\lambda x.\Psi(u\at \update{\initSegsss{\alpha_u}{n}}{mnu}{x}))\mr\bot$ } \; D \ar@{<-}[u]_{\mbox{\scriptsize $h\mr (\bot\to A_n(\initSegss{(\alpha_u)_0}{n},0))$}} & \\
 & E \; \fbox{$\bar A(u),n\notin\dom(u),x_1\mr A_{mnu}(\initSegss{\alpha_0}{mnu},x_0)\vdash\Psi(u\at \update{\initSegsss{\alpha_u}{n}}{mnu}{x})\mr \bot$} \ar@{<-}[u]_{\mbox{\scriptsize $\Abelard$ chooses $x$, $\forall s\neg\neg \exists xA_{|s|}(s,x)$}} & \\
 & \fbox{$\bar A(u),n\notin\dom(u),x_1\mr A_{mnu}(\initSegss{\alpha_0}{mnu},x_0)\vdash \bar A(u\at \update{\initSegsss{\alpha_u}{n}}{mnu}{x})$} \; F \ar@{<.}[u]\ar@/^11.5pc/@{->}[uuu]!L^{k\prec_u n}  & 
}
\]
\end{center}
\caption{Diagrammatic illustration of Theorem \ref{thm-main-DC}}
\label{fig-a}
\end{figure}

Let us now run through the diagram, step by step. Roughly speaking, each box represents a stage in a game between $\Eloise$ and $\Abelard$, and an arrow represents a reverse implication in the proof of Theorem \ref{thm-main-DC}. We start at $A$ with the assumtion that $\Eloise$ has already computed a partial realizer $u$ that is correct wherever it's defined. At step $B$, $\Eloise$ plays an approximation $(\alpha_u)_0:=u_0\at 0$ to a choice sequence with a corresponding sequence $(\alpha_u)_1$ of realizers, and in response, $\Abelard$ selects some $n$ and challenges $\Eloise$ to realise $A_n(\initSegss{(\alpha_u)_0}{n},(\alpha_u)_0)$. If $n\in\dom(u)$ then $\Eloise$ wins, so we assume the contrary and move on to step $C$. $\Eloise$ responds to the challenge in the next step $D$ by claiming that the premise of choice principle is false at point $mnu$. $\Abelard$ is now forced to produce $x$ such that $x_1\mr A_{mnu}(\initSegss{(\alpha_u)_0}{mnu},x_0)$ - if he fails then $\Eloise$ wins, and if he succeeds then $\Eloise$ is given constructive information and takes this as an assumption. 

We now come to the subtle part: the manner in which $\Eloise$ updates $(\alpha_u)_0$ to reflect this new information and repeat the loop. From a proof theoretic perspective, $E$ is implied by $A$ and $F$ by the cut rule. We can interpret this semantically as follows: $\Eloise$ forms the updated function $(\alpha_u)_0[mnu\mapsto x_0]$, but states that if in future $\Abelard$ queries this realizer for any $k\prec_u n$ with $k\notin\dom(u)$, she will ignore any subsequent information received and revert to stage $C$ as if $\Abelard$ had chosen $k$ instead of $n$. This is reflected in the definition of $\Psi$, since it makes a recursive call on the partial function $u\at \update{\initSegsss{\alpha_u}{n}}{mnu}{x}$, but the fact that this is a partial realizer for $k\prec_u n$ relies on nested recursive calls of the form $\Psi(u\at\update{\initSegsss{\alpha}{k}}{mku}{y})$, which in particular forgets the value of $x$ at $mnu$.  

Therefore at stage $E$ there are two possibilities. Either at some point in the future $\Abelard$ does query $k\prec_u n$ for $k\notin\dom(u)$, in which case all subsequent information is deemed irrelevant and we make an auxiliary $\prec_u$-recursive loop back to $C$, or $\Abelard$ never queries $k\prec_u n$, in which case we can treat $u\at \update{\initSegsss{\alpha_u}{n}}{mnu}{x}$ as a partial realizer and make a backward recursive loop back to stage $A$. In this way, the game corresponding to dependent choice can be seen as a path through Figure \ref{fig-a}. By the combination of $\prec_u$-induction and backward induction, using the fact that $\prec_u$ is well-founded and $\Abelard$'s choice of $n$ at step $B$ is based on a continuous strategy, the proof of Theorem \ref{thm-main-DC} essentially says that there is no infinite path starting from $\emptyset$, and therefore $\Eloise$ has a winning strategy in the game as a whole.

Of course, we have not made our proposed link between Theorem \ref{thm-main-DC} and the world of Novikoff games precise! But at the very least we hope to have provided some insight into how our somewhat syntactic construction in the previous sections could be viewed from an intuitive, semantic perspective.

Let us now think about our concrete examples. For the Berardi-Bezem-Coquand functional we have $\initSegsss{\alpha_u}{n}=\emptyset$ and so the `forgetful' subloop is completely avoided. In this case $\Eloise$ simply responds to $\Abelard$'s choice of $n$ directly, and so when $\Abelard$ plays $(\pair{n_i,x_i})=\pair{1,x_0},\pair{4,x_1},\pair{3,x_2},\ldots$ we get the following sequence of updates 
\begin{equation*}[0,0,0,0,0]\mapsto [0,x_0,0,0,0]\mapsto [0,x_0,0,0,x_1]\mapsto [0,x_0,0,x_2,x_1]\mapsto\ldots \end{equation*}
This semantic interpretation of the BBC-functional is of course completely analogous to the one given in \cite{BBC(1998.0)}. For modified bar recursion, whenever $\Abelard$ makes a sensible choice $n\notin\dom(u)$, $\Eloise$ switches instead the least element not already in the domain. Again we avoid the forgetful subloop, but this time $\Eloise$'s updates are done in sequence - the same three choices from $\Abelard$ results in the following response:
\begin{equation*}[0,0,0,0,0]\mapsto [x_0,0,0,0,0]\mapsto [x_0,x_1,0,0,0]\mapsto [x_0,x_1,x_2,0,0]\mapsto\ldots \end{equation*}
Finally, for the product of selection functions, $\Eloise$ never changes $\Abelard$'s choice of $n$, but now assumes a policy of forgetting everthing above the point being updated. This time the result is
\begin{equation*}[0,0,0,0,0]\mapsto [0,x_0,0,0,0]\mapsto [0,x_0,0,0,x_1]\mapsto [0,x_0,0,x_2,0]\mapsto\ldots \end{equation*}
the point $x_1$ being forgotten as a result of $\Abelard$ choosing $3<4$ as his third move.

The strategy related to the BBC-functional fails for dependent choice because it is assumed that $\Abelard$ always picks $x_i$ such that $A_m(\initSeg{u\at 0}{m},x_i)$ holds, and so e.g. we have $A_4([0,x_0,0,0],x_1)$. However, if ever in the future he gives $\Eloise$ some new information $x_{i+j}$ for point $k<m$ in the sequence, then $x_i$ is no longer valid - for instance we would require $A_4([0,x_0,0,x_2],x_1)$ to hold, which is not necessarily true.

Modified bar recursion and the product of selection functions represent two methods of overcoming this. For the former, $\Eloise$ ensures that $\Abelard$ always gives her constructive information in sequence, while for the latter she is happy to take information for any point $\Abelard$ chooses, but whenever updating she erases everthing above, relying on future moves to regain this information.

\section{Concluding remarks}
\label{sec-concluding}

There are several directions in which the work presented here could be developed. We have already demonstrated that Theorem \ref{thm-main-DC} provides a uniform soundness proof through which most of the existing variants of backward recursion used to interpret dependent choice can be derived. The most obvious next step would be to explore the use of new variants of bar recursion which arise from giving more interesting values to the parameters of $\Psi_{(m,\prec,\lhd)}$.

Take, for example, the following classical statement:
\begin{quote}For any function $f\colon\Bool^\ast\to \N$, there exists a function $g\colon\Bool^\ast\to\N$ such that for any branch $s\colon\Bool^\ast$ we have $f(g(s))\leq f(t)$ whenever $t$ has $s$ as a prefix. \end{quote}
This is a direct consequence of the axiom of countable choice, and so any instantiations of $(m,\prec)$ satisfying the conditions of Theorem \ref{thm-main-AC} will yield a realizer capable of building an approximation to the choice function $g$. However, it would seem most sensible to choose the parameters so that the recursion is carried out over the natural tree structure $s\lhd t$ iff $s$ is a prefix of $t$ (relative to some encoding of $\Bool^\ast$ into $\N$) - using, for example, the variant of modified bar recursion sketched out the beginning of Section \ref{sec-dependent-examples-MBR}. Then if an approximation to $g$ is defined at $s$ then this information could be used to extend the approximation to extensions $t$ of $s$.

It would be interesting to examine in general how such realizers, tailored to the situation at hand, compare to those built from the existing forms of bar recursion. One would expect an advantage in terms of both algorithmic efficiency and the syntactic expressiveness of the extracted program.

Another interesting application of our parametrised form of bar recursion would be to extend the work of Escard{\'o} and Oliva on Nash-equilibria of unbounded games to the transfinite case. In \cite{EscOli(2011.0),EscOli(2012.0)} it is shown that the infinite products of selection functions corresponding to the functional $\PS_{(<)}$ defined in Section \ref{sec-dependent-examples-IPS} computes optimal strategies in infinite sequential games over the natural numbers. However, $\PS_{(\lhd)}$ is well-defined for any computable well ordering on $\N$, and in particular it is not too difficult to show that in these cases Spector's equations:
\begin{equation*}\begin{aligned}\alpha(n)&=\varepsilon_{\initSegss{\alpha}{n}}(p_n)\\ p_n(\varepsilon_{\initSegss{\alpha}{n}}(p_n))&=q(\alpha)\end{aligned}\end{equation*}
can be solved in $\varepsilon$ and $q$ for arbitrary $\lhd$ by setting $\alpha=\PS^{\varepsilon,q}(\emptyset)$ and $p_n(x):=q(\PS^{\varepsilon,q}(\initSegss{\alpha}{n}\star x))$, entirely analogously to the normal ordering on $\N$. However, it would be useful to formalise this properly and to investigate whether there are any interesting applications of higher-type \emph{transfinite} games.

Finally, to the author's knowledge all known realizability interpretations of choice are restricted to countable or dependent choice principles, and it is not known how to extend these to choice over function spaces, for example:
\begin{equation*}\forall f^{\N\to\N}\exists x^\rho B_f(x)\to\exists F^{(\N\to\N)\to\rho}\forall f B_f(F(f)).\end{equation*}
One possibility would be to build an approximation to $F$ over some countable basis $(c_i)_{i\in\N}$ for the space $\N\to\N$. However, in this case updates to the approximation must be made in a coherent way, since the values of two distinct elements $c_1$ and $c_2$ must be compatible with their intersection. Therefore the ideas behind our parametrised realizer could be helpful here, the aim being to assign an ordering to the basis and make sure that the updates respect this ordering. However, in the author's opinion it is likely that some additional ingenuity would be required here, as the challege posed by giving a computational interpretation non-countable choice seems to be a significant one.

\

\noindent\textbf{Funding.} This work was supported by a LabEx CARMIN postdoctoral research fellowship, and also the Austrian Science Fund (FWF) project ``Automated Complexity Analysis via Transformations'' (project number P25781).

\appendix

\section{Appendix}
\label{sec-app}

\begin{proof}[Proof of Proposition \ref{prop-main-DC}]We define $\Psi_{(\lhd,\prec,m)}^{\varepsilon,q}(u):=\BR^{\psi^{\varepsilon,q}}(u)$ where
\begin{equation*}\psi^{\varepsilon,q}_u(f^{\N\times \overline{(\rho\times\sigma)}^\N\to\tau}):=_\tau q(\alpha^{\varepsilon,f}_u)\end{equation*}
where $\alpha^{\varepsilon,f}_u\colon (\rho\times\sigma)^\N$ is recusively defined over $\prec'_u$ by
\begin{equation*}\alpha^{\varepsilon,f}_u(n):=\begin{cases}u(n) & \mbox{if $n\in\dom(u)$}\\
\varepsilon_{\initSegss{\alpha}{mnu},\initSegss{\alpha_0}{n}}(\lambda x\; . \; f(mnu,u\oplus \update{\initSegsss{\alpha_u}{n}}{mnu}{x})) & \mbox{otherwise}\end{cases}\end{equation*}
where $\initSegsss{\alpha_u}{n}:=\lambda k.(\alpha_u^{\varepsilon,f}(k)\mbox{ if $k\prec_u n$})$. By condition (\ref{item-DC-conda}) of the proposition, $\alpha_u^{\varepsilon,f}$ is well-founded. Now, unwinding definitions, we obtain
\begin{equation*}\begin{aligned}\Psi^{\varepsilon,q}(u)&=\psi_u(\underbrace{\lambda n,v\; . \; \Psi^{\varepsilon,q}(u\at v)\mbox{ if $n\in\dom(v)\backslash\dom(u)$}}_{f})\\
&=q(\alpha_u)\\
&=q(u\at\lambda n\; . \;\varepsilon_{\initSegss{\alpha}{mnu},\initSegss{\alpha_0}{n}}(\lambda x\; . \; f(mnu,u\oplus \update{\initSegsss{\alpha_u}{n}}{mnu}{x})))\\
&\stackrel{(\ast)}{=}q(u\at\lambda n\; . \;\varepsilon_{\initSegss{\alpha}{mnu},\initSegss{\alpha_0}{n}}(\lambda x\; . \; \Psi(u\oplus \update{\initSegsss{\alpha_u}{n}}{mnu}{x}))) \end{aligned}\end{equation*}
where step $(\ast)$ follows from
\begin{equation*}\begin{aligned}f(mnu,u\oplus \update{\initSegsss{\alpha_u}{n}}{mnu}{x})&=\Psi(u\oplus(u\oplus \update{\initSegsss{\alpha_u}{n}}{mnu}{x}))\mbox{ if $mnu\in\dom(u\oplus \update{\initSegsss{\alpha_u}{n}}{mnu}{x})\backslash\dom(u)$}\\
&=\Psi(u\oplus \update{\initSegsss{\alpha_u}{n}}{mnu}{x})\mbox{ if $mnu\notin\dom(u)$}\\
&=\Psi(u\oplus \update{\initSegsss{\alpha_u}{n}}{mnu}{x})\end{aligned}\end{equation*}
the last step following from condition (\ref{item-DC-condb}).\end{proof}

\bibliographystyle{plain}
\bibliography{/home/thomas/Documents/tp}

\begin{thebibliography}{10}

\bibitem{Aschieri(2011.1)}
F.~Aschieri.
\newblock {\em Learning, Realizability and Games in Classical Arithmetic}.
\newblock PhD thesis, Universita degli Studi di Torino and Queen Mary,
  University of London, 2011.

\bibitem{Avigad(2002.0)}
J.~Avigad.
\newblock Update procedures and the 1-consistency of arithmetic.
\newblock {\em Mathematical Logic Quarterly}, 48(1):3--13, 2002.

\bibitem{BBC(1998.0)}
S.~Berardi, M.~Bezem, and T.~Coquand.
\newblock On the computational content of the axiom of choice.
\newblock {\em Journal of Symbolic Logic}, 63(2):600--622, 1998.

\bibitem{Berger(2002.0)}
U.~Berger.
\newblock The {B}erardi-{B}ezem-{C}oquand functional in a domain-theoretic
  setting.
\newblock Unpublished results, available from author's webpage at
  http://www.cs.swan.ac.uk/$\sim$csulrich/recent-papers.html, 2002.

\bibitem{Berger(2004.0)}
U.~Berger.
\newblock A computational interpretation of open induction.
\newblock In F.~Titsworth, editor, {\em Proceedings of the Nineteenth Annual
  IEEE Symposium on Logic in Computer Science}, pages 326--334. IEEE Computer
  Society, 2004.

\bibitem{BergOli(2005.0)}
U.~Berger and P.~Oliva.
\newblock Modified bar recursion and classical dependent choice.
\newblock {\em Lecture Notes in Logic}, 20:89--107, 2005.

\bibitem{BergOli(2006.0)}
U.~Berger and P.~Oliva.
\newblock Modified bar recursion.
\newblock {\em Mathematical Structures in Computer Science}, 16(2):163--183,
  2006.

\bibitem{BergSch(1995.0)}
U.~Berger and H.~Schwichtenberg.
\newblock Program extraction from classical proofs.
\newblock In D.~Leivant, editor, {\em Logic and Computational Complexity
  workshop (LCC'94)}, volume 960 of {\em Lecture Notes in Computer Science},
  pages 77--97. 1995.

\bibitem{Bezem(1985.0)}
M.~Bezem.
\newblock Strongly majorizable functionals of finite type: A model for bar
  recursion containing discontinuous functionals.
\newblock {\em Journal of Symbolic Logic}, 50:652--660, 1985.

\bibitem{BloRib(2013.0)}
V.~Blot and C.~Riba.
\newblock On bar recursion and choice in a classical setting.
\newblock In {\em Proceedings of APLAS 2013}, volume 8301 of {\em Lecture Notes
  in Computer Science}, pages 349--364, 2013.

\bibitem{Coquand(1994.0)}
T.~Coquand.
\newblock An analysis of {R}amsey's theorem.
\newblock {\em Information and Computation}, 110(2):297--304, 1994.

\bibitem{EscOli(2010.1)}
M.~Escard{\'o} and P.~Oliva.
\newblock Computational interpretation of analysis via products of selection
  functions.
\newblock In {\em Proceedings of CiE 2010}, volume 6158 of {\em LNCS}, pages
  141--150. 2010.

\bibitem{EscOli(2010.0)}
M.~Escard{\'o} and P.~Oliva.
\newblock Selection functions, bar recursion and backward induction.
\newblock {\em Mathematical Structures in Computer Science}, 20(2):127--168,
  2010.

\bibitem{EscOli(2011.0)}
M.~Escard{\'o} and P.~Oliva.
\newblock Sequential games and optimal strategies.
\newblock {\em Royal Society Proceedings A}, 467:1519--1545, 2011.

\bibitem{EscOli(2012.0)}
M.~Escard{\'o} and P.~Oliva.
\newblock The {P}eirce translation.
\newblock {\em Annals of Pure and Applied Logic}, 163(6):681--692, 2012.

\bibitem{Kleene(1959.0)}
S.~C. Kleene.
\newblock Countable functionals.
\newblock In A.~Heyting, editor, {\em Constructivity in Mathematics}, pages
  81--100. North-Holland, Amsterdam, 1959.

\bibitem{Kohlenbach(2008.0)}
U.~Kohlenbach.
\newblock {\em Applied Proof Theory: Proof Interpretations and their Use in
  Mathematics}.
\newblock Monographs in Mathematics. Springer, 2008.

\bibitem{Kreisel(1959.0)}
G.~Kreisel.
\newblock Interpretation of analysis by means of functionals of finite type.
\newblock In A.~Heyting, editor, {\em Constructivity in Mathematics}, pages
  101--128. North-Holland, Amsterdam, 1959.

\bibitem{NashWilliams(1963.0)}
C.~St. J.~A. Nash-Williams.
\newblock On well-quasi-ordering finite trees.
\newblock {\em Proceedings of the Cambridge Philosophical Society},
  59:833--835, 1963.

\bibitem{Novikoff(1943.0)}
P.S. Novikoff.
\newblock On the consistency of certain logical calculi.
\newblock {\em Matematiceskij Sbornik}, 12(54):230--260, 1943.

\bibitem{OliPow(2012.2)}
P.~Oliva and T.~Powell.
\newblock On {S}pector's bar recursion.
\newblock {\em Mathematical Logic Quarterly}, 58:356--365, 2012.

\bibitem{OliPow(2014.0)}
P.~Oliva and T.~Powell.
\newblock Bar recursion over finite partial functions.
\newblock 2014.
\newblock Preprint, available from http://arxiv.org/abs/1410.6361.

\bibitem{OliPow(2012.1)}
P.~Oliva and T.~Powell.
\newblock A constructive interpretation of {R}amsey's theorem via the product
  of selection functions.
\newblock {\em To appear: Mathematical Structures in Computer Science}, 2015.
\newblock Preprint available at http://arxiv.org/abs/1204.5631.

\bibitem{Raoult(1988.0)}
J.-C. Raoult.
\newblock Proving open properties by induction.
\newblock {\em Information Processing Letters}, 29:19--23, 1988.

\bibitem{Seisenberger(2003.0)}
M.~Seisenberger.
\newblock {\em On the Constructive Content of Proofs}.
\newblock PhD thesis, Ludwig Maximilians Universit{\"a}t M{\"u}nchen, 2003.

\bibitem{Seisenberger(2008.0)}
M.~Seisenberger.
\newblock Programs from proofs using classical dependent choice.
\newblock {\em Annals of Pure and Applied Logic}, 153(1-3):97--110, 2008.

\bibitem{Spector(1962.0)}
C.~Spector.
\newblock Provably recursive functionals of analysis: a consistency proof of
  analysis by an extension of principles in current intuitionistic mathematics.
\newblock In F.~D.~E. Dekker, editor, {\em Recursive Function Theory: Proc.
  Symposia in Pure Mathematics}, volume~5, pages 1--27. American Mathematical
  Society, Providence, Rhode Island, 1962.

\bibitem{Streicher(2014.0)}
T.~Streicher.
\newblock A classical realizability model arising from a stable model of
  untyped lambda calculus.
\newblock Preprint, available from http://arxiv.org/abs/1407.1547, 2014.

\bibitem{Troelstra(1973.0)}
A.~S. Troelstra.
\newblock {\em Metamathematical Investigation of Intuitionistic Arithmetic and
  Analysis}, volume 344 of {\em Lecture Notes in Mathematics}.
\newblock Springer, Berlin, 1973.

\end{thebibliography}

\end{document}